\newcommand{\graphm}{\textsc{Graph Motif}\xspace}
\newcommand{\pxtc}{\textsc{X3C}\xspace}
\newcommand{\np}{\mathsf{NP}}
\newcommand{\wone}{\mathsf{W[1]}}
\newcommand{\wtwo}{\mathsf{W[2]}}
\newcommand{\wpp}{\mathsf{W[P]}}
\newcommand{\fpt}{\mathsf{FPT}}
\newcommand{\xp}{\mathsf{XP}}
\newcommand{\p}{\mathsf{P}}
\newcommand{\bl}{block\xspace}
\newcommand{\bls}{blocks\xspace}
\newcommand{\C}{\mathcal{C}}
\newcommand{\I}{\mathcal{I}}
\newcommand{\R}{\mathcal{R}}
\renewcommand{\S}{\mathcal{S}}
\newcommand{\T}{\mathcal{T}}
\newcommand{\U}{\mathcal{U}}
\tikzstyle{edge} = [draw,-,rounded corners=8pt]
\tikzstyle{vertex}=[circle, draw, inner sep=2pt, minimum width=4pt]
\renewcommand\leq\leqslant
\renewcommand\geq\geqslant
\newtheorem{theorem}{Theorem}
\newtheorem{lemma}[theorem]{Lemma}
\newtheorem{corollary}[theorem]{Corollary}
\newtheorem{definition}[theorem]{Definition}
\newtheorem{example}[theorem]{Example}
\newcommand{\Pb}[4]{%
\begin{center}
  \begin{tabular}{|l|}%
  \hline
    \begin{minipage}[c]{.95\textwidth}
      \smallskip%
      \par\noindent%
      #1
      \medskip%
      \par\noindent%
      $\bullet$
      \textbf{\textsf{Input}}: #2%
      \par\noindent%
      $\bullet$
      \textbf{\textsf{#4}}:
      #3%
      \par\noindent%
    \end{minipage}
  \\\hline
  \end{tabular}%
\end{center}
}%
\newcounter{Bew1}
\newcounter{Bew2}
\newcounter{Def1}
\title{The Graph Motif problem parameterized by the structure of the input graph\footnote{An extended abstract of this work appears in~\cite{DBLP:conf/iwpec/BonnetS15}.}}
\author{\'Edouard Bonnet\footnote{\noindent Institute for Computer Science and Control, Hungarian Academy of Sciences, (MTA SZTAKI) \tt{bonnet.edouard@sztaki.mta.hu}} \and Florian Sikora\footnote{Universit\'{e} Paris-Dauphine, PSL Research University, CNRS, LAMSADE, Paris, France,  \tt{florian.sikora@dauphine.fr}}
}
\date{}
\begin{document}

\maketitle

\begin{abstract}
The \graphm problem was introduced in 2006 in the context of biological networks. 
It consists of deciding whether or not a multiset of colors occurs in a connected subgraph of a vertex-colored graph. 
\graphm has been mostly analyzed from the standpoint of parameterized complexity.
The main parameters which came into consideration were the size of the multiset and the number of colors.
In the many utilizations of \graphm, however, the input graph originates from real-life applications and has structure.
Motivated by this prosaic observation, we systematically study its complexity relatively to graph structural parameters. 
For a wide range of parameters, we give new or improved FPT algorithms, or show that the problem remains intractable. 
For the FPT cases, we also give some kernelization lower bounds as well as some ETH-based lower bounds on the worst case running time.
Interestingly, we establish that \graphm is $\wone$-hard (while in $\wpp$) for parameter max leaf number, which is, to the best of our knowledge, the first problem to behave this way.
\end{abstract}

\section{Introduction}

The \graphm problem has received a lot of attention during the last decade. 
Informally, \graphm is defined as follows: given a graph with arbitrary colors on the nodes and a multiset of colors called the motif, the goal is to decide if there exists a subset of vertices of the graph such that (1) the subgraph induced by this subset is connected and (2) the colors on the subset of vertices match the motif, i.e. each color appears the same number of times as in the motif. 
Originally, this problem is motivated by applications in biological network analysis~\cite{Lacroix2006}.
However, it also proves useful in social or technical networks~\cite{Betzler2011} or in the context of mass spectrometry~\cite{Bocker2009}. 

Studying biological networks allows a better characterization of species, by determining small recurring subnetworks, often called \emph{motifs}. 
Such motifs can correspond to a set of nodes realizing some function, which may have been evolutionary preserved. 
Thus, it is crucial to determine these motifs to identify common elements between species and transfer the biological knowledge. 
\graphm corresponds to topology-free queries and can be seen as a variant of a graph pattern matching problem with the sole topological requirement of connectedness. 
Such queries were also studied extensively for sequences during the last thirty years, and with the increase of knowledge about biological networks, it is relevant to extend these queries to networks~\cite{Pinter14}.

\section{Preliminaries and previous work}\label{sec:prelim}

For any two integers $x<y$, we set $[x,y]:=\{x,x+1,\ldots,y-1,y\}$, and for any positive integer $x$, $[x]:=[1,x]$.
If $G$ is a graph, we denote by $V(G)$ its set of vertices and by $E(G)$ its set of edges.
If $G=(V,E)$ is a graph and $S \subseteq V$, $E_G(S)$ denotes the subset of edges of $E$ having both endpoints in $S$.
If $G=(V,E)$ is a graph and $S \subseteq V$ is a subset of vertices, $G[S]$ denotes the subgraph of $G$ induced by $S$: $(S,E_G(S))$.
For a vertex $v \in V$, the set of neighbors of $v$ in $G$ is denoted by $N_G(v)$, and $N_G(S):=(\bigcup_{v \in S}N_G(v)) \setminus S$.
We define $N_G[v]:=N_G(v) \cup \{v\}$ and $N_G[S]:=N_G(S) \cup S$.
In all the previous definitions, we will lose the subscript $_G$ whenever the graph $G$ we are referring to is either implicit or irrelevant.
We say that a vertex $v$ \emph{dominates} a set of vertices $S$ if $S \subseteq N[v]$. 
A set of vertices $R$ \emph{dominates} another set of vertices $S$ if $S \subseteq N[R]$. 
If $G=(V,E)$ is a graph and $V' \subseteq V$, $G-V'$ denotes the graph $G[V \setminus V']$.
A \emph{universal vertex} $v$, in a graph $G=(V,E)$, is such that $N_G[v]=V$.
A \emph{matching} of a graph is a set of mutually disjoint edges.
In an explicitly bipartite graph $G=(V_1 \cup V_2,E)$, we call a matching of size $\min(|V_1|,|V_2|)$ a \emph{perfect matching}.
A \emph{cluster graph} (or simply, \emph{cluster}) is a disjoint union of cliques.
A \emph{co-cluster graph} (or, \emph{co-cluster}) is the complement graph of a cluster graph.
If $\mathcal C$ is a class of graphs, the \emph{distance to $\mathcal C$} of a graph $G$ is the minimum number of vertices to remove from $G$ to get a graph in $\mathcal C$.

If $f: A \rightarrow B$ is a function and $A' \subseteq A$, $f_{|A'}$ denotes the restriction of $f$ to $A'$, that is $f_{|A'}: A' \rightarrow B$ such that $\forall x \in A'$, $f_{|A'}(x):=f(x)$.
Similarly, if $E$ is a set of edges on vertices of $V$ and $V' \subseteq V$, $E_{|V'}$ is the subset of edges of $E$ having both endpoints in $V'$.


\medskip

\textbf{Multisets.}
A \emph{multiset} is a generalization of the notion of set where each element may appear more than once.
The \emph{multiplicity} of the element $x$ in the multiset $M$, denoted by $m_M(x)$, is the number of occurrences of $x$ in $M$.
We adopt the natural convention that $m_M(x)=0$ if $x$ does not belong to $M$.
The cardinality of a multiset $M$ denoted by $|M|$ is its number of elements \emph{with their multiplicity}: $\Sigma_{x}m_M(x)$.
If $M$ and $N$ are two multisets, $M \cup N$ is the multiset $A$ such that $\forall x$, $m_A(x)=m_M(x)+m_N(x)$, and $M \setminus N$ is the multiset $D$ such that $\forall x$, $m_D(x)=\max(0,m_M(x)-m_N(x))$.
We write $M \subseteq N$ if and only if $M \setminus N = \emptyset$ and $M \subset N$ if and only if $M \subseteq N$ and $M \neq N$.

\begin{example}
Let $M=\{1,2,2,4,5,5,5\}$ and $N=\{1,1,1,2,2,3,3,4,5,5,5,5\}$.
Then, $|M|=7$, $|N|=12$, $M \setminus N = \emptyset$, $N \setminus M = \{1,1,3,3,5\}$, and $M \subseteq N$.
\end{example}

\textbf{\graphm.} 
The problem is defined as follows:

\Pb{\graphm}{A triple $(G,c,M)$, where $G=(V,E)$ is a graph, $c: V \rightarrow \mathcal C$ 
 is a coloring of the vertices, and $M$ is a multiset of colors of $\mathcal C$.}{A subset $R \subseteq V$ such that \\(1) $G[R]$ is connected and \\(2) $c(R)=M$. }{Output}
In the above definition, $c(R)$ denotes the multiset of colors of vertices in $R$.
We use that slight abuse of notation for convenience.
We will refer to condition (1) as the \emph{connectivity constraint} and to condition (2) as the \emph{multiset constraint}.

\medskip

\textbf{Parameterized Complexity.}
A parameterized problem $(I,k)$ is said \emph{fixed-parameter tractable} (or in the class $\fpt$) w.r.t. (with respect to) parameter $k$ if it can be solved in $f(k)\cdot|I|^c$ time (in \emph{fpt-time}), where $f$ is any computable function and $c$ is a constant (see \cite{Downey1999,Niedermeier2006,Cygan15} for more details about fixed-parameter tractability).
The parameterized complexity hierarchy is composed of the classes $\fpt \subseteq \wone \subseteq \wtwo \subseteq \dots \subseteq \wpp \subseteq \mathsf{XP}$. 
The class $\xp$ is the set of problems solvable in time $|I|^{f(k)}$, where $f$ is a computable function.

{
A $\wone$-hard problem is not fixed-parameter tractable (unless $\fpt=\wone$) and one can prove $\wone$-hardness by means of a \emph{parameterized reduction} from a $\wone$-hard problem.
This is a mapping of an instance~$(I,k)$ of a problem~$A_1$ in $g(k)\cdot |I|^{O(1)}$ time (for any computable function~$g$) into an instance $(I',k')$ for~$A_2$ such that $(I,k)\in A_1\Leftrightarrow (I',k')\in A_2$ and $k'\le h(k)$ for some function~$h$.
}

A powerful technique to design parameterized algorithms is \emph{kernelization}. 
In short, kernelization is a polynomial-time self-reduction algorithm that takes an instance $(I,k)$ of a parameterized problem $P$ as input and computes an equivalent instance $(I',k')$ of $P$ such that $|I'| \leqslant h(k)$ for some computable function $h$ and $k' \leqslant k$. 
The instance $(I',k')$ is called a \emph{kernel} in this case. 
If the function $h$ is polynomial, we say that $(I',k')$ is a polynomial kernel. 

{
It is well known that a decidable problem is in $\fpt$ if and only if it has a kernel, but this equivalence yields super-polynomial kernels (in general). 
To design efficient parameterized algorithms, a kernel of polynomial (or even linear) size in $k$ is important. 
However, some lower bounds on the size of the kernel can be shown under the assumption that the polynomial hierarchy is a proper hierarchy.
To show such results, we will use the cross-composition technique developed by Bodlaender et al.~\cite{Bodlaender2014}. 
}

{
\begin{definition}[Polynomial equivalence relation \cite{Bodlaender2014}]
An equivalence relation $\R$ on $\Sigma^*$ is said to be \emph{polynomial} if the following two conditions hold: \\
(i) There is an algorithm that given two strings $x, y \in \Sigma^*$ decides whether $x$ and $y$ belong to the same equivalence class in time $(|x| + |y|)^{O(1)}$. \\
(ii) For any finite set $S \subseteq \Sigma^*$ the equivalence relation $\R$ partitions the elements of $S$ into at most $(\max_{x \in S} |x|)^{O(1)}$ classes.
\end{definition}

\begin{definition}[OR-cross-composition \cite{Bodlaender2014}] 
Let $L \subseteq \Sigma^*$ be a set and let $Q \subseteq \Sigma^* \times \mathbb{N}$ be a parameterized problem. 
We say that $L$ \emph{cross-composes} into $Q$ if there is a polynomial equivalence relation $\R$ and an algorithm which, given $t$ strings $x_1, x_2,  \dots, x_t$ belonging to the same equivalence class of $\R$, computes an instance $(x^*,k^*) \in \Sigma^* \times \mathbb{N}$ in time polynomial in $\sum_{i=1}^{t}|x_i|$ such that: \\
(i) $(x^*,k^*) \in Q \Leftrightarrow x_i \in L$ for some $1 \leqslant  i \leqslant t$; and \\
(ii) $k^*$ is bounded by a polynomial in $\max_{i=1}^t |x_i| + \log t$.
\end{definition}

\begin{theorem}[\cite{Bodlaender2014}]
Let $L \subseteq \Sigma^*$ be a set which is $\np$-hard under Karp reductions. If 
$L$ cross-composes into the parameterized problem $Q$, then $Q$ has no polynomial 
kernel unless $\np \subseteq \mathsf{coNP}/poly$.
\end{theorem}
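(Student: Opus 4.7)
The plan is to proceed by contrapositive, converting a hypothetical polynomial kernel for $Q$ into an $\mathsf{OR}$-distillation of the $\np$-hard language $L$, and then invoking the classical Fortnow--Santhanam lower bound to derive $\np \subseteq \mathsf{coNP}/poly$.

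\medskip

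First I would take the supposed kernel of $Q$, of size $p(k)$, as given and use it to compress cross-composed instances. On input $t$ strings $y_1,\ldots,y_t \in \Sigma^*$, partition them into equivalence classes of $\R$; by condition (ii) of the polynomial equivalence relation, there are at most $(\max_i |y_i|)^{O(1)}$ such classes, and the partition can be computed in polynomial time by condition (i). For each class, apply the cross-composition algorithm to produce an instance $(x^*,k^*)$ of $Q$ with $k^* \leq q(\max_i |y_i| + \log t)$ for some polynomial $q$, and then apply the polynomial kernel to compress $(x^*,k^*)$ to an equivalent instance of size at most $p(k^*)$, which remains polynomial in $\max_i |y_i| + \log t$.

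\medskip

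Next I would aggregate the kernelized per-class instances into a single string, interpreted as the disjunction in the $\np$ language ``at least one listed $Q$-instance is positive.'' The resulting string is a \textsc{Yes}-instance of that language if and only if at least one $y_i$ lies in $L$, by condition (i) of the definition of cross-composition. Its total size is polynomial in $\max_i |y_i| + \log t$, since the number of classes is polynomial in $\max_i |y_i|$ while each per-class compressed instance is polynomial in $\max_i |y_i| + \log t$. This precisely matches the definition of an $\mathsf{OR}$-distillation of $L$ into an $\np$ language.

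\medskip

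The main obstacle, and in fact the only non-routine ingredient, is the Fortnow--Santhanam weak-distillation lower bound: an $\np$-hard language admitting an $\mathsf{OR}$-distillation into an $\np$ language forces $\np \subseteq \mathsf{coNP}/poly$. Its proof uses the distillation to build short non-uniform advice that certifies \textsc{No}-instances, by a statistical argument on how concentrated the distribution of distilled outputs can be when almost all inputs are \textsc{No}. I would invoke this theorem as a black box rather than reprove it; the rest of the argument is bookkeeping to ensure that the kernel-plus-cross-composition pipeline respects the size bound required of a distillation and that $\np$-hardness under Karp reductions is strong enough to propagate the collapse from the chosen distillation target back to every $\np$ language.
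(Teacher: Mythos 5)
The paper does not prove this theorem — it is stated with a citation to Bodlaender, Jansen and Kratsch (2014) and used as a black box. So there is no in-paper proof to compare against; what follows is an assessment of your reconstruction on its own merits.

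Your outline is the standard argument from the cross-composition literature, and the pipeline is right: partition the $t$ inputs by $\R$ into polynomially many classes, cross-compose each class into a $Q$-instance with parameter bounded by $\mathrm{poly}(\max_i|y_i|+\log t)$, kernelize each to size $\mathrm{poly}(\max_i|y_i|+\log t)$, aggregate into a single OR-instance, and invoke the Fortnow--Santhanam-type bound; $\np$-hardness of $L$ under Karp reductions then propagates $L \in \mathsf{coNP}/\mathrm{poly}$ to all of $\np$. The duplicate-removal observation that lets one assume $t \leq 2^{\mathrm{poly}(\max_i|y_i|)}$, hence $\log t \leq \mathrm{poly}(\max_i|y_i|)$, is what makes the output size a genuine distillation bound.

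The one imprecision worth flagging is your claim that the aggregated output lives in an $\np$ language, and that Fortnow--Santhanam requires the distillation target to be $\np$. Neither is needed, and the former is not justified: the unparameterized version of $Q$ is not assumed to be in $\np$, so "at least one listed $Q$-instance is positive" has no reason to be an $\np$ language. Fortunately, the Fortnow--Santhanam argument (and the Complementary Witness Lemma variant Bodlaender et al. actually use) works for an \emph{arbitrary} target language: the co-nondeterministic verifier built from the distillation never decides membership in the target, it only checks string equality against a polynomial-size advice set of pivotal no-outputs. So you should drop the $\np$-ness claim and invoke the version of the lemma that allows arbitrary targets; with that adjustment your argument is complete.
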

}

\textbf{(Strong) Exponential Time Hypothesis.}
The \emph{Exponential Time Hypothesis} (ETH) is a conjecture by Impagliazzo et al.~\cite{ImpagliazzoETH} asserting that there is no $2^{o(n)}$-time algorithm for \textsc{3-SAT} on instances with $n$ variables.
The so-called sparsification lemma, also proved in \cite{ImpagliazzoETH}, shows that if ETH turns out to be true, then there is no $2^{o(n+m)}$-time algorithm solving \textsc{3-SAT} where $m$ is the number of clauses.
The \emph{Strong Exponential Time Hypothesis} (SETH) by Impagliazzo and Paturi~\cite{Impagliazzo01} further asserts that, for every $\delta < 1$, there is an integer $k$ such that \textsc{$k$-SAT} cannot be solved in time $O(2^{\delta n})$.
Cygan et al. showed that, assuming SETH, for any $\delta < 1$, some problems such as \textsc{Hitting Set} could not be solved in time $O(2^{\delta n})$ either \cite{CyganD12}, where $n$ is the number of elements.
The authors also conjectured that the same result should hold for the \textsc{Set Cover} problem, and gave some supporting pieces of evidence.
We will refer to the assumption that, for any $\delta < 1$, \textsc{Set Cover} instances with $n$ elements cannot be solved in time $O(2^{\delta n})$ as SCH (for \textsc{Set Cover}-hardness).
We insist on the fact that the implication SETH $\Rightarrow$ SCH is not known yet.

\textbf{Previous work.}
Many results about the complexity of \graphm are known. 
The problem is $\np$-hard even with strong restrictions. 
For instance, it remains $\np$-hard for bipartite graphs of maximum degree $4$ and motifs containing two colors only~\cite{fellows2011}, or for trees of maximum degree $3$ and when the motif is colorful (that is, no color occurs more than once)~\cite{fellows2011}, or for rooted trees of depth $2$~\cite{Ambalath2010}. 
However, the problem is solvable in polynomial time when the graph is a caterpillar~\cite{Ambalath2010}, or when both the number of colors in the motif and the treewidth of the graph are bounded by a constant~\cite{fellows2011}.

As \graphm is intractable even for very restricted classes of graphs, and considering that, in practice, the motif is supposed to be small compared to the graph, the parameterized complexity of \graphm relatively to the size of the motif has been tackled. 
It is indeed in $\fpt$ when parameterized by the size of the motif. 
At least seven different papers gave an FPT algorithm \cite{fellows2011,Betzler2011,guillemotfinding,Koutis2012,Bjorklund2016,Pinter14,PinterSZ14}.
The best (randomized) algorithm runs in time $O^*(2^k)$ where the $O^*$ notation suppresses polynomial factors~\cite{Bjorklund2016,Pinter14} and works well in practice for small values of $k$, even with hundreds of millions of edges~\cite{Bjorklund2015}.
The current best deterministic algorithm takes time $O^*(5.22^k)$~\cite{PinterSZ14}. 
However, an algorithm running in time $O^*((2-\epsilon)^k)$ would break the $2^n$ barrier in solving \textsc{Set Cover} instances with $n$ elements (that is, would disprove SCH)~\cite{Bjorklund2016}. 
Besides, it is unlikely that \graphm admits a polynomial kernel, even on a restricted class of trees~\cite{Ambalath2010}.
Ganian proved that \graphm is in $\fpt$ when the parameter is the size of a minimum vertex cover of the graph~\cite{Ganian11}. 
Actually, his algorithm is given for a smaller parameter called twin-cover. 
Ganian also showed that \graphm can be solved in $O^*(2^k)$ for graphs with neighborhood diversity $k$~\cite{Ganian2012}. 
On the negative side, the problem is $\wone$-hard with respect to the number of colors, even for trees \cite{fellows2011}.
To deal with the huge rate of noise in the biological data, many variants of the problem has been introduced. 
For example, the approach of Dondi \emph{et al.} requires a solution with a minimum number of connected components~\cite{Dondi2011a}, while the one of Betzler \emph{et al.} asks for a $2$-connected solution~\cite{Betzler2011}. 
In other variants stemming purely from bio-informatics, some colors can be added to
, substituted or subtracted from the solution \cite{Bruckner2009,Dondi2011a}. 


In light of the previous paragraphs, it is clear that the complexity of \graphm is well known for different versions and constraints on the problem itself. 
However, only few works take into account the structure of the input graph.  
We believe that this an interesting direction since \graphm has applications in real-life problems, where the input is not random. 
For example, some biological networks have been shown scale-free or with small diameter~\cite{Alm2003}.
We will therefore introduce a systematic study with respect to structural graph parameters~\cite{Niedermeier2012,FellowsLMMRS09}. 
We believe that this is also of theoretical interest, to understand how a given parameter influences the complexity of the problem.

 


\textbf{Our contribution.} 
In Section~\ref{sec:easy}, we improve the known FPT algorithms with parameter distance to clique, vertex cover number, and edge clique cover number. 
We also give a parameterized algorithm for the parameter distance to co-cluster 
which nicely reuses the FPT algorithms for both vertex cover number and distance to clique and another algorithm for parameter vertex clique cover number.
These last two algorithms are noteworthy since a bounded distance to co-cluster or a bounded vertex clique cover number do not imply a bounded neighborhood diversity, a parameter for which \graphm was already known to be in $\fpt$. 
We also show that a polynomial kernel for the aforementioned parameters is unlikely and give some ETH-based lower bounds for the worst case running time.
In Section~\ref{sec:hard}, we show that \graphm remains hard on graphs of constant distance to disjoint paths, or constant bandwidth, or constant distance to cluster, or constant dominating set number. 
More surprisingly, we establish that \graphm is $\wone$-hard (but in $\wpp$) for the parameter max leaf number.
To the best of our knowledge, there is no previously known problem behaving similarly when parameterized by max leaf number.
Indeed, graphs with bounded max leaf number are really simple and, for instance, all the problems studied in \cite{FellowsLMMRS09} are FPT for this parameter.
These positive and negative results draw a tight line between tractability and intractability (see Figure~\ref{fig:recap}).


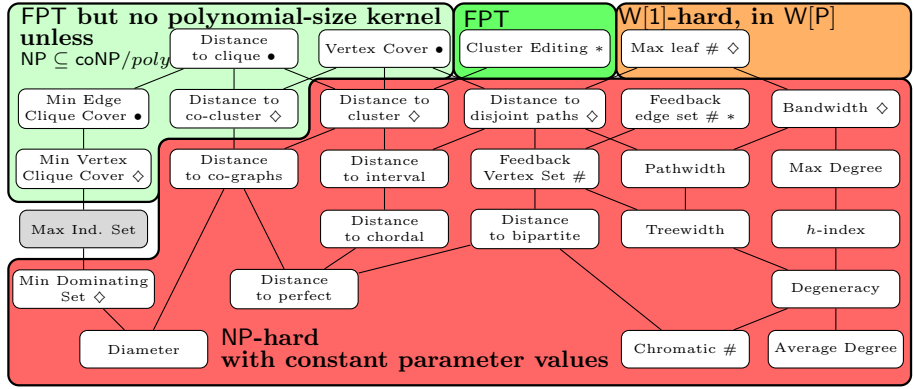
\begin{figure}[]
\begin{center}
{\scalefont{0.5} 
\begin{tikzpicture}[scale=0.8,every text node part/.style={align=center}]

      \path [draw=black, fill=red!60, line width=0.3mm,rounded corners]
            (-5.75,-5.6) -- (-5.75,-3.5) -- (-3.25,-3.5) -- (-3.25,-1.5) -- (-0.75,-1.5) -- (-0.75,-0.5)-- (9.25,-0.5) --  (9.25,-1.25) -- (9.25,-5.6)-- cycle;

\node () at (-1.4,-4.8) [ font=\bfseries] {{\small $\np$-hard}};
\node () at (1.0,-5.2) [ font=\bfseries] {{\small with constant parameter values}};

\path [draw=black, fill=green!20, line width=0.3mm,rounded corners]
	(1.65,0.75) -- (1.65, -0.5)  -- (-0.75, -0.5) -- (-0.75, -1.5) -- (-3.25, -1.5) -- (-3.25, -2.55) -- (-5.75,-2.55)-- (-5.75, 0.75)-- cycle;
\node () at (-2.05,0.5) [font=\bfseries,minimum width=1.7cm,minimum height=0.5cm] {{\small $\fpt$ but no polynomial-size  kernel}};
\node () at (-4.87,0.2) [font=\bfseries,minimum width=1.7cm,minimum height=0.5cm] {{\small  unless }};
\node () at (-4.3,-0.2) [font=\bfseries,minimum width=1.7cm,minimum height=0.5cm] {{\scriptsize  $\np \subseteq \mathsf{coNP}/poly$}};

\path [draw=black, fill=orange!60, line width=0.3mm,rounded corners]
	(9.25,0.75) -- (9.25, -0.5) -- (4.35,-0.5)  -- (4.35,0.75) -- cycle;
\node () at (6.2,0.5) [ font=\bfseries] {{\small $\wone$-hard, in $\wpp$}};

\path [draw=black, fill=green!60, line width=0.3mm,rounded corners]
	(4.35,0.75) -- (4.35, -0.5) -- (1.65,-0.5)  -- (1.65,0.75) -- cycle;

\node () at (2.2,0.5) [ font=\bfseries] {{\small $\fpt$}};



\node (dist to clique) at (-2,0) [draw,color=black, fill=white,rounded corners=0.1cm,minimum width=1.7cm,minimum height=0.5cm] {Distance \\ to clique $\bullet$};


\node (vertex cover) at (0.5,0) [draw,color=black, fill=white,rounded corners=0.1cm,minimum width=1.7cm,minimum height=0.5cm] {Vertex Cover $\bullet$};


\node (cluster editing) at (3,0) [draw,color=black, fill=white,rounded corners=0.1cm,minimum width=1.7cm,minimum height=0.5cm] {Cluster Editing $\ast$};




\node (leaf) at (5.5,0) [draw,color=black, fill=white,rounded corners=0.1cm,minimum width=1.7cm,minimum height=0.5cm] {Max leaf \# $\diamondsuit$};
%

\node (clique cover) at (-4.5,-1) [draw,color=black, fill=white,rounded corners=0.1cm,minimum width=1.7cm,minimum height=0.5cm] { Min Edge\\ Clique Cover $\bullet$};


\node (co-cluster) at (-2,-1) [draw,color=black, fill=white,rounded corners=0.1cm,minimum width=1.7cm,minimum height=0.5cm] { Distance to\\ co-cluster $\diamondsuit$} ;


\node (distance cluster) at (0.5,-1) [draw,color=black, fill=white,rounded corners=0.1cm,minimum width=1.7cm,minimum height=0.5cm] { Distance to\\ cluster $\diamondsuit$};


\node (paths) at (3,-1) [draw,color=black, fill=white,rounded corners=0.1cm,minimum width=1.7cm,minimum height=0.5cm] { Distance to\\ disjoint paths $\diamondsuit$};


\node (fes) at (5.5,-1) [draw,color=black, fill=white,rounded corners=0.1cm,minimum width=1.7cm,minimum height=0.5cm] {Feedback \\ edge set \# $\ast$};

\node (bandwidth) at (8,-1) [draw,color=black, fill=white,rounded corners=0.1cm,minimum width=1.7cm,minimum height=0.5cm] {Bandwidth $\diamondsuit$};

\node (VCC) at (-4.5,-2) [draw,color=black, fill=white,rounded corners=0.1cm,minimum width=1.7cm,minimum height=0.5cm] {Min Vertex\\Clique Cover $\diamondsuit$};

  %
  \node (IS) at (-4.5,-3) [draw,color=black, fill=gray!30,rounded corners=0.1cm,minimum width=1.7cm,minimum height=0.5cm] {Max Ind. Set};
  
  
  \node (cographs) at (-2,-2) [draw,color=black, fill=white,rounded corners=0.1cm,minimum width=1.7cm,minimum height=0.5cm] {Distance\\ to co-graphs};
  
  
    \node (interval) at (0.5,-2) [draw,color=black, fill=white,rounded corners=0.1cm,minimum width=1.7cm,minimum height=0.5cm] {Distance\\ to interval};
  

    \node (fvs) at (3,-2) [draw,color=black, fill=white,rounded corners=0.1cm,minimum width=1.7cm,minimum height=0.5cm] {Feedback\\Vertex Set \#};


    \node (pathwidth) at (5.5,-2) [draw,color=black, fill=white,rounded corners=0.1cm,minimum width=1.7cm,minimum height=0.5cm] {Pathwidth};

 
\node (degree) at (8,-2) [draw,color=black, fill=white,rounded corners=0.1cm,minimum width=1.7cm,minimum height=0.5cm] {Max Degree};
  %
  
\node (ds) at (-4.5,-4) [draw,color=black, fill=white,rounded corners=0.1cm,minimum width=1.7cm,minimum height=0.5cm] {Min Dominating\\ Set $\diamondsuit$};
  

\node (chordal) at (0.5,-3) [draw,color=black, fill=white,rounded corners=0.1cm,minimum width=1.7cm,minimum height=0.5cm] {Distance \\to chordal};


\node (bipartite) at (3,-3) [draw,color=black, fill=white,rounded corners=0.1cm,minimum width=1.7cm,minimum height=0.5cm] {Distance \\to bipartite};


\node (treewidth) at (5.5,-3) [draw,color=black, fill=white,rounded corners=0.1cm,minimum width=1.7cm,minimum height=0.5cm] {Treewidth};

\node (hindex) at (8,-3) [draw,color=black, fill=white,rounded corners=0.1cm,minimum width=1.7cm,minimum height=0.5cm] {$h$-index};
  %
\node (diameter) at (-3.5,-5) [draw,color=black, fill=white,rounded corners=0.1cm,minimum width=1.7cm,minimum height=0.5cm] {Diameter};


\node (perfect) at (-1,-4) [draw,color=black, fill=white,rounded corners=0.1cm,minimum width=1.7cm,minimum height=0.5cm] {Distance\\to perfect};


\node (degeneracy) at (8,-4) [draw,color=black, fill=white,rounded corners=0.1cm,minimum width=1.7cm,minimum height=0.5cm] {Degeneracy};
  %
\node (chromatic) at (5.5,-5) [draw,color=black, fill=white,rounded corners=0.1cm,minimum width=1.7cm,minimum height=0.5cm] {Chromatic \#};

\node (av degree) at (8,-5) [draw,color=black, fill=white,rounded corners=0.1cm,minimum width=1.7cm,minimum height=0.5cm] {Average Degree};

    \path[draw] (dist to clique) -- (clique cover) -- (VCC) -- (IS) -- (ds) -- (diameter) -- (cographs) -- (co-cluster) -- (dist to clique) -- (distance cluster) -- (cographs) -- (perfect) -- (chordal) -- (interval) -- (distance cluster) -- (vertex cover) -- (co-cluster);
    
     \path[draw]  (vertex cover) -- (paths) -- (interval);
     
     \path[draw] (perfect) -- (bipartite) -- (fvs) -- (paths) -- (leaf) -- (bandwidth) -- (degree) -- (hindex) -- (degeneracy) -- (chromatic) -- (bipartite);
     
     \path[draw] (cluster editing) -- (distance cluster);
     
     \path[draw] (fes) -- (fvs) -- (treewidth) -- (degeneracy) -- (av degree);
     
     \path[draw] (paths) -- (pathwidth) -- (bandwidth);
     
      \path[draw] (treewidth) -- (pathwidth);
      
  
\end{tikzpicture}
} 
\caption{ 
Hasse diagram of the relationship between different parameters (\cite{Niedermeier2012}). 
Two parameters are connected by a line if the parameter below can be polynomially upper-bounded in the parameter above. 
For example, \emph{vertex cover} is above \emph{distance to disjoint paths} since deleting a vertex cover produces an independent set, hence a set of disjoint paths. 
Therefore, positive results propagate upwards, while negative results propagate downwards. 
Results marked by~$\diamondsuit$ are obtained in this paper, those marked with~$\bullet$ are improvement of existing results, and those marked with~$\ast$ are corollaries of existing results.
Parameter neighborhood diversity is not depicted since its relations with vertex cover may be exponential.
We refer to~\cite[Figure 1]{Lampis2012} for a diagram with neighborhood diversity.
We note that neighborhood diversity would be below vertex cover, not comparable to feedback vertex set, patwidth or treewidth, but above cliquewidth (this last would be below treewidth).
}\label{fig:recap}
\end{center}
\end{figure}

\section{$\fpt$ algorithms, kernelization and ETH-based lower bounds}\label{sec:easy}

In this section, we improve or establish new FPT algorithms for several parameters. 
We complement those algorithms with some lower bounds under ETH, SETH, and SCH.
We also give a lower bound on the size of the kernel for all those parameters except \emph{cluster editing number}. 
Figure~\ref{fig:recap} summarizes those results.

\subsection{Cluster editing and linear neighborhood diversity}

The cluster editing number of a graph is the number of edge deletions or additions required to get a cluster graph. 
It can be computed in time $O^*(1.62^k)$~\cite{Bocker2012}. 
We will use a known result involving another parameter called neighborhood diversity introduced by Lampis~\cite{Lampis2012}. 
A graph has neighborhood diversity $k$ if there is a partition of its vertices into at most $k$ sets such that all the vertices in each set \emph{have the same type}.
And, two vertices $u$ and $v$ \emph{have the same type} if $N(v) \setminus \{u\} = N(u) \setminus \{v\}$.
We say that a graph parameter $\kappa$ has \emph{linear} (resp.~\emph{exponential}) \emph{neighborhood diversity} if, for every positive integer $k$, all the graphs $G$ such that $\kappa(G) \leqslant k$ have neighborhood diversity $O(k)$ (resp. $2^{O(k)}$).
We say that a parameter $\kappa$ has \emph{unbounded neighborhood diversity}, if there is \emph{no} function $f$ such that all graphs $G$ with $\kappa(G) \leqslant k$ have neighborhood diversity $f(k)$.

\begin{theorem}[\cite{Ganian2012}]\label{thm:Ganian}
\graphm can be solved in $O^*(2^k)$ on graphs with neighborhood diversity $k$.
\end{theorem}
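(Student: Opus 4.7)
The plan is to enumerate over the $2^k$ subsets of types in a neighborhood-diversity decomposition and reduce each subset to a polynomial-time feasibility check. First I would compute, in polynomial time, a neighborhood-diversity partition $V_1,\ldots,V_k$ of $V(G)$ (obtainable from the modular decomposition, as in \cite{Lampis2012}). Recall that each $V_i$ is either a clique or an independent set, and that for $i\neq j$ either every vertex of $V_i$ is adjacent to every vertex of $V_j$ in $G$ or no such edge exists. I would build the corresponding quotient graph $H$ on vertex set $[k]$, with $ij\in E(H)$ iff $V_i$ is fully joined to $V_j$ in $G$.

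The algorithm then iterates over all subsets $S\subseteq[k]$, with the intended semantic $S=\{i : R\cap V_i\neq\emptyset\}$. For each fixed $S$, the key structural observation is that the connectivity of $G[R]$ depends essentially only on $S$: a short case analysis shows $G[R]$ is connected iff (a) $|S|\geq 2$ and $H[S]$ is connected, or (b) $|S|=1$ and the unique type is a clique, or (c) $|S|=|M|=1$. The content of case (a) is that for each edge $ij$ of $H[S]$ every vertex of $R\cap V_i$ is adjacent in $G$ to every vertex of $R\cap V_j$, so connectedness of $H[S]$ lifts directly to connectedness of $G[R]$ regardless of the sizes of the $R\cap V_i$ and regardless of whether the types are cliques or independent sets.

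Having reduced the problem to a pure ``packing'' question for each valid $S$, the remaining task is to decide whether one can realize exactly the multiset $M$ by picking a non-empty subset of each $V_i$ with $i\in S$ and no vertex outside $\bigcup_{i\in S}V_i$. I would encode this as an $(s,t)$-flow instance with lower bounds: a source $s$ sends $m_M(\gamma)$ units into a node $n_\gamma$ for each color $\gamma$ in $M$, each arc $(n_\gamma,n_i)$ for $i\in S$ has capacity $|\{v\in V_i : c(v)=\gamma\}|$, and each arc $(n_i,t)$ for $i\in S$ has lower bound $1$ and upper bound $|V_i|$; integer feasible flows of value $|M|$ correspond bijectively to valid choices of the $R\cap V_i$ for $i\in S$, and such flows can be computed in polynomial time. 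Multiplying the $2^k$ outer iterations by the polynomial inner step gives the claimed $O^*(2^k)$ bound. The main technical point I expect to need care on is the justification of case (a) in the connectivity step, together with verifying that the lower-bound constraint of $1$ on the arcs $(n_i,t)$ correctly enforces $\{i : R\cap V_i\neq\emptyset\}=S$; everything else is bookkeeping.
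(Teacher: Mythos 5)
This theorem is not proved in the paper at all --- it is imported directly from Ganian~\cite{Ganian2012} and cited as a black box --- so there is no in-paper proof to compare against, and your argument has to stand on its own. It does. The connectivity characterization in terms of the quotient graph is correct: for $|S|\ge 2$, $G[R]$ is connected iff $H[S]$ is, since full joins between types lift any walk in $H[S]$ to a path in $G[R]$ and conversely any path in $G[R]$ projects to a walk in $H[S]$; the $|S|=1$ subcases are also handled correctly (and the omitted $S=\emptyset$, which only matters for $M=\emptyset$, is vacuous). The feasible-flow-with-lower-bounds encoding correctly captures both the multiset constraint and the requirement that each chosen type contribute at least one vertex, and integer feasibility of such a network is polynomial, giving $O^*(2^k)$. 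This is in substance the same strategy Ganian uses (enumerate the $2^k$ subsets of types, then run a polynomial-time combinatorial subroutine); you phrase the inner step as a flow with lower bounds where the source phrases it via bounded-variable ILP / bipartite matching, but these are interchangeable and buy nothing different.
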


The following result is a direct consequence of the fact that, restricted to connected graphs, cluster editing has linear neighborhood diversity.
\begin{corollary}
\graphm can be solved in $O^*(8^k)$, where $k$ is the cluster editing number.
\end{corollary}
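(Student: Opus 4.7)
The plan is to derive the corollary from Theorem~\ref{thm:Ganian} by showing that every \emph{connected} graph $G$ with cluster editing number $k$ has neighborhood diversity at most $3k+1$, which will give a running time of $O^*(2^{3k+1}) = O^*(8^k)$. Since any solution to \graphm induces a connected subgraph and hence lies in a single connected component of $G$, I may assume $G$ is connected: each component has cluster editing number at most $k$, and one simply runs the algorithm on every component separately.

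First I would fix an edge modification set $F$ with $|F| \leq k$ such that $G' := G \triangle F$ is a cluster graph, and write $c$ for the number of cliques of $G'$. The first key step is to observe that connectivity of $G$ forces $c \leq k+1$: adding edges can never increase the number of connected components, so breaking $G$ into $c$ components requires at least $c-1$ of the $k$ modifications to be edge deletions. Thus $c - 1 \leq k$.

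Next, let $T$ be the set of endpoints of edges in $F$, so that $|T| \leq 2k$, and note that every vertex $v \notin T$ satisfies $N_G(v) = N_{G'}(v)$. For any two untouched vertices $u, v$ lying in a common clique $C$ of $G'$, one checks directly that
\[
N_G(u) \setminus \{v\} \;=\; N_{G'}(u) \setminus \{v\} \;=\; C \setminus \{u,v\} \;=\; N_{G'}(v) \setminus \{u\} \;=\; N_G(v) \setminus \{u\},
\]
so $u$ and $v$ share the same type in $G$. Consequently the untouched vertices contribute at most $c \leq k+1$ type classes (one per clique of $G'$), while the at most $2k$ vertices of $T$ contribute at most $2k$ additional classes, yielding a neighborhood diversity of at most $3k+1$ for $G$.

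Plugging this bound into Theorem~\ref{thm:Ganian} gives the desired $O^*(2^{3k+1}) = O^*(8^k)$ algorithm, noting that a cluster editing set $F$ of size at most $k$ can be computed beforehand in time $O^*(1.62^k)$ using the algorithm of~\cite{Bocker2012}. The only subtle step is the connectivity-based inequality $c \leq k+1$; the rest of the argument is a direct comparison between neighborhoods in $G$ and in $G'$.
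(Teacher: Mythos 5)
Your proof is correct and follows essentially the same route as the paper: assume $G$ connected, bound the neighborhood diversity by (size of the set of vertices touched by an edit) plus (number of cliques in the resulting cluster graph), and invoke Theorem~\ref{thm:Ganian}. The only cosmetic difference is that you bound the number of cliques by $k+1$ via a precise counting of deletions, whereas the paper states the slightly looser (and in fact off-by-one) bound $k$; both give neighborhood diversity $O(k)$ and the same $O^*(8^k)$ running time.
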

\begin{proof}
Let $(G=(V,E),c,M)$ be any instance of \graphm.
We can assume that $G$ is connected, otherwise we run the algorithm in each connected component of $G$.
Let $X$ be the set of vertices which are an endpoint of an edited edge (deleted or added) and let $G'$ be the cluster graph obtained by the $k$ edge editions.
We may observe that $|X| \leqslant 2k$ and that the number of maximal cliques $C_1,\ldots,C_l$ in $G'$ is bounded by $k$ (otherwise, $G$ could not be connected).
For each $i \in [l]$, and for each vertex $v \in C_i \setminus X$, $N[v]=C_i$.
Thus the neighborhood diversity of $G$ is bounded by $|X|+l \leqslant 2k+k=3k$.
So, we can run the algorithm for bounded neighborhood diversity \cite{Ganian2012} and it takes time $O^*(2^{3k})$.
\end{proof}


\subsection{Parameters with exponential neighborhood diversity}

The next three parameters that we consider are \emph{distance to clique}, \emph{size of a minimum vertex cover}, and \emph{size of a minimum edge clique cover}.
For the first two, a value of $k$ entails that the neighborhood diversity is at most $k+2^k$; whereas, edge clique cover number $k$ implies that the neighborhood diversity is at most $2^k$.
Therefore, Ganian has already given an algorithm running in double exponential time for these parameters ($O^*(2^{k+2^k})$ or $O^*(2^{2^k})$, see Theorem~\ref{thm:Ganian}, \cite{Ganian11,Ganian2012}).
We improve this bound to single exponential time $2^{O(k)}$ (more precisely $O^*(3^k)$) for distance to clique and to $2^{O(k \log k)}$ for the vertex cover and edge clique cover numbers.
The latter running time is sometimes called \emph{slightly superexponential} FPT time \cite{Lokshtanov11}.
Then, we prove that for each of those three parameters, a polynomial kernel is unlikely.

As a preparatory lemma for the algorithm parameterized by distance to clique, we show that a variant of \textsc{Set Cover} with thresholds is solvable in time $O^*(2^n)$, where $n$ is the size of the universe.
In the problem that we call here \textsc{Colored Set Cover with Thresholds}, one is given a triple $(\U,\S=\C_1 \uplus \ldots \uplus \C_l,(a_1,\ldots,a_l))$ where $\U$ is a ground set of $n$ elements, $\S$ is a set of subsets of $\U$ partitioned into $l$ classes called \emph{colors} and $(a_1,\ldots,a_l)$ is a tuple of $l$ positive integers called \emph{threshold vector}.
The goal is to find a set cover $\T \subseteq \S$ (not necessarily minimum) such that for each $i \in [l]$, the number of sets with color $i$ (that is, in $\C_i$) in $\T$ is at most $a_i$.

\begin{lemma}
\textsc{Colored Set Cover with Thresholds} with $n$ elements and $m$ sets can be solved in time $O(nm2^n+nm)$.
\end{lemma}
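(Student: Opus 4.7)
My plan is a two-phase dynamic program. In the first phase, for each color $i \in [l]$ and each subset $X \subseteq \U$, I will compute $g_i(X)$, the minimum number of sets from $\C_i$ whose union contains $X$, using the classical set-cover recurrence $g_i(\emptyset) = 0$ and $g_i(X) = 1 + \min_{S \in \C_i} g_i(X \setminus S)$. Each of the $2^n$ table entries is updated $|\C_i|$ times at cost $O(n)$ per update (for the bitmask subtraction), so Phase~1 totals $O(nm 2^n)$ time. I then set $\chi_i(X) := [g_i(X) \leq a_i]$; by monotonicity of $g_i$, the family $\chi_i^{-1}(1)$ is downward-closed.

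In the second phase, I combine the colors. Let $\Phi_i(X) = 1$ iff $X$ admits a valid cover using only colors $1, \ldots, i$ within their thresholds; $\Phi_i$ is also downward-closed, since restricting a valid cover to a subset still respects thresholds. I will set $\Phi_0(X) = [X = \emptyset]$ and
\[
\Phi_i(X) \;=\; \bigvee_{Y \subseteq X} \chi_i(Y) \wedge \Phi_{i-1}(X \setminus Y),
\]
so that the answer to the instance is $\Phi_l(\U)$. A direct evaluation of this recurrence costs $\Theta(3^n)$ per color, which is too slow. Instead, I will compute the corresponding union convolution via subset zeta/M\"obius transforms: produce $\widehat{\chi_i}$ and $\widehat{\Phi_{i-1}}$ in $O(n 2^n)$, multiply them pointwise, and M\"obius-invert in $O(n 2^n)$ to recover $\Psi_i(X) = \sum_{Y \cup Z = X} \chi_i(Y)\,\Phi_{i-1}(Z)$, finally setting $\Phi_i(X) := [\Psi_i(X) > 0]$. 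Each color therefore costs $O(n 2^n)$, so Phase~2 runs in $O(n l \cdot 2^n)$, which is absorbed into $O(nm 2^n)$.

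The main obstacle I foresee is justifying that relaxing the disjoint split $Y \uplus (X \setminus Y) = X$ to the overlap-allowing union $Y \cup Z = X$ produced by the zeta-transform formula does not corrupt the semantics of $\Phi_i$. The argument I plan to give proceeds by induction on $i$, maintaining the invariant that $\Phi_i$ is downward-closed. Given any pair $(Y, Z)$, possibly overlapping, with $Y \cup Z = X$, $\chi_i(Y) = 1$ and $\Phi_{i-1}(Z) = 1$, the pair $(Y, X \setminus Y)$ is disjoint, still satisfies $\chi_i(Y) = 1$, and satisfies $\Phi_{i-1}(X \setminus Y) = 1$ because $X \setminus Y \subseteq Z$ and $\Phi_{i-1}$ is downward-closed by induction. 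Conversely, every disjoint witness is trivially an overlapping witness, so $\Psi_i(X) > 0$ iff $\Phi_i(X) = 1$. Summing the two phases yields the claimed $O(nm 2^n + nm)$ running time, with the additive $nm$ absorbing input reading and bookkeeping.
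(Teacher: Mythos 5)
Your proof is correct and takes a genuinely different route from the paper's. The paper orders the sets so that color classes are contiguous and runs a single $2^n\times m$ dynamic program in which $T[U,j]$ records the minimum number of sets of the \emph{current} color class used by a valid cover of $U$ drawn from $\{S_1,\dots,S_j\}$; the reset-to-$0$-or-$1$ rule at each color boundary is what enforces the per-color thresholds within one pass. You instead decompose explicitly: Phase~1 solves plain set cover per color class to obtain the downward-closed indicators $\chi_i$, and Phase~2 folds the colors together by $l$ successive union (cover) products computed via fast subset zeta and M\"obius transforms, using downward-closure of both operands to justify replacing the disjoint split by the cover product. What the paper gains is a flatter, single-pass argument with cheap per-entry work and an obvious one-bit traceback to reconstruct an actual solution; what your approach gains is modularity and a cleaner algebraic structure, at the price of invoking the transform machinery (with intermediate integer values as large as $4^n$) and of a more involved traceback for the explicit reconstruction that the paper folds into its $+nm$ term. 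Two minor patches: restrict the minimum in $g_i(X)=1+\min_{S\in\C_i}g_i(X\setminus S)$ to sets $S$ with $S\cap X\neq\emptyset$ so the recursion actually descends and correctly returns $g_i(X)=\infty$ on $X$ uncoverable by color $i$; and note explicitly that $l\le m$ (each nonempty color class contributes at least one set), which is what lets Phase~2's $O(nl2^n)$ be absorbed into $O(nm2^n)$.
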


\begin{proof}
We order the sets of $\S$ such that sets of the same color appear consecutively, say, first the sets of $\C_1$, then the sets of $\C_2$, and so on. 
The order within the sets of a same color is not important and is chosen arbitrarily.
We denote the sets resultantly ordered by $S_1, \ldots, S_m$ and function $c$ maps the index of a set to its color.
Therefore, $c(j)=i$ means that set $S_j$ has color $i$ ($S_j \in C_i$).
We fill by dynamic programming the table $T$, where $T[U,j]$ is meant to contain the minimum number of sets in $\C_{c(j)}$ among any subset of $\{S_1,\dots,S_j\}$ that covers $U \subseteq \U$ and respects the threshold vector.

As an initialization step, for each $U \subseteq \U$, we set $T[U,1] = 1$ if $U \subseteq S_1$, and $T[U,1] = \infty$ otherwise.
For each $j \in [2,m]$, assuming that $T[U',j-1]$ was already filled for every $U' \subseteq \U$, we distinguish two cases to fill $T[U,j]$.
If $S_j$ is the first set of the color class $\C_{c(j)}$ then:

$T[U,j] = 
\left \{
         \begin{array}{l}
               0 \text{ if } T[U,j-1]<\infty \text{~~~~~~~~~~~~~~~~~~~~~~~~~~~~~~~~~~~~(* discard } S_j \text{ *)}\\
               
               1 \text{ if } T[U,j-1]=\infty \text{ and } T[U \setminus S_j, j-1] < \infty \text{~ (* add } S_j \text{ *)}\\

               \infty \text{ otherwise}
            \end{array}
         \right.
 $
 
\medskip
Otherwise $S_j$ is not the first set in $\C_{c(j)}$ and:

$T[U,j] = \min
\left \{
	\begin{array}{l}
		T[U,j-1] \text{~~~~~~~~~~~~~~~~~~~~~~~~~~~~~~~~~~~~~~~~~~ (* discard } S_j \text{ *)}\\
		v+1 \text{ if } v<a_{c(j)} \text{ and } \infty \text{ otherwise } \text{~~~~~~~~~~ (* add } S_j \text{ *)}
	\end{array}
\right.
$

with $v=T[U \setminus S_j,j-1]$.

\medskip
A standard induction shows that the instance is positive if and only if $T[\U,m] \neq \infty$.
The only costly operation in filling one entry of table $T$ is the set difference which can be done in $O(n)$ time.
If we want to produce an actual solution (and not solely decide the problem), we can add one bit in each entry $T[U,j]$ signaling whether or not $S_j$ should be taken.
Should the instance be positive, it then takes time $O(nm)$ to reconstruct a solution from a filled table $T$.
Therefore, the running time is $O(n|T|+nm)=O(nm2^n+nm)$.
\end{proof}

\begin{theorem}\label{th:distclique}
\graphm can be solved in $O^*(3^k)$, where $k$ is the distance to clique.
\end{theorem}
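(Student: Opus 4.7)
The plan is to leverage the clique structure of $V \setminus X$, where $X$ is a modulator of size at most $k$. First I would compute such an $X$ in FPT time: since $V \setminus X$ is a clique iff $X$ is a vertex cover of the complement graph $\bar G$, any standard vertex cover algorithm supplies $X$ in time $O^*(1.2738^k)$, a cost that will be dominated anyway. Write $C := V \setminus X$ so that $G[C]$ is a clique.

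The main step is to enumerate the intersection $X' := R \cap X$ of a hypothetical solution $R$ with the modulator, which yields $2^k$ branches. For a fixed $X'$, set $M' := M \setminus c(X')$ (discard the branch if $c(X') \not\subseteq M$) and let $Q_1, \ldots, Q_p$ be the connected components of $G[X']$, so that $p \leq |X'|$. It remains to find $R_C \subseteq C$ with $c(R_C) = M'$ such that $G[X' \cup R_C]$ is connected. Because $C$ is a clique, any non-empty $R_C$ is internally connected, so the global connectivity condition boils down to the following alternative: either $R_C = \emptyset$ and $p \leq 1$ (in particular, a branch with $M' = \emptyset$ succeeds iff $p \leq 1$), or $R_C \neq \emptyset$ and every $Q_i$ contains at least one vertex with a neighbor in $R_C$.

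When $M' \neq \emptyset$, I would reduce the residual problem to a \textsc{Colored Set Cover with Thresholds} instance: take as ground set $\U := \{Q_1, \ldots, Q_p\}$; for each $v \in C$ form the set $S_v := \{Q_i : v \in N_G(Q_i)\}$ with color $c(v)$; use thresholds $a_\gamma := m_{M'}(\gamma)$. A crucial point to argue is that replacing the exact-count multiset constraint $c(R_C) = M'$ by the ``at most'' thresholds is sound: from any feasible CSCT solution $R_C^*$ one can pad with arbitrary vertices of the right colors from $C$ (after a preliminary check ensuring $C$ contains enough vertices of each color of $M'$); since padding happens entirely inside the clique $C$, it can only extend $R_C^*$ without breaking connectivity of $G[X' \cup R_C]$.

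By the previous lemma, each CSCT call runs in $O^*(2^p) \leq O^*(2^{|X'|})$. Summing over all branches gives $\sum_{X' \subseteq X} 2^{|X'|} \cdot \mathrm{poly}(|V|) = \sum_{i=0}^{k} \binom{k}{i}\, 2^i \cdot \mathrm{poly}(|V|) = O^*(3^k)$, matching the claimed bound. The main subtlety I anticipate is rigorously justifying the ``exact-to-at-most'' relaxation and treating the corner cases ($M' = \emptyset$, $X' = \emptyset$, $p = 0$) in both directions of the equivalence, so that soundness and completeness of the reduction to CSCT are preserved.
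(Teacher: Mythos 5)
Your proposal follows essentially the same route as the paper's proof: compute the modulator via \textsc{Vertex Cover} on $\bar G$, enumerate $X' = R \cap X$ in $2^k$ branches, reduce the residual connectivity problem on the clique to \textsc{Colored Set Cover with Thresholds} over the components of $G[X']$, and sum $\sum_i \binom{k}{i}2^i = 3^k$. Your explicit handling of the padding step and the $M' = \emptyset$ / $p \le 1$ corner cases is a careful spelling-out of what the paper states more tersely, not a different idea.
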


\begin{proof}
Let $(G=(V,E),c:V \rightarrow \mathcal C,M)$ be any instance of \graphm and assume $R$ is a solution, that is $G[R]$ is connected and $c(R)=M$.
If there is no solution, our algorithm will detect this eventually.
We first compute a set $S \subseteq V$ of size $k$ such that $C:=V \setminus S$ is a clique.
This can be done in time $O^*(2^k)$ by branching over the two endpoints of a \emph{non-edge}, or even in time $O^*(1.2738^k)$ by applying the state-of-the-art algorithm for \textsc{Vertex Cover} on the complementary graph \cite{Chen10}.  
Running through all the $2^k$ subsets of $S$, one can guess the subset $S'= R \cap S$ of $S$ which is in the solution $R$.
Let $S_1, S_2, \ldots, S_{k'}$ be the $k' \leqslant k$ connected components of $G[S']$.
It must hold that $c(S') \subseteq M$, otherwise $R$ would not be a solution.
Now, the problem boils down to finding a non-empty (an empty subset would mean that $S'=R$ which can be easily checked) subset $C' \subseteq C$ such that $G[S' \cup C']$ is connected and $c(C') \subseteq M \setminus c(S')$.
Then, the set $S' \cup C'$ can be extended into a solution by adding vertices of $C \setminus C'$ with the right colors.
The graph $G[S' \cup C']$ is connected if and only if each connected component $S_j$ of $G[S']$ has at least one neighbor in $N(C')$.
We build an equivalent instance of \textsc{Colored Set Cover with Thresholds} in the following way.
The ground set $\U$ is of size $k'$ with one element $x_j$ per connected component $S_j$ of $G[S']$.
For each vertex $v$ in $C$ colored by $i$, there is a set $S_v$ colored by $i$ such that $x_j \in S_v$ if and only if $N(v) \cap S_j \neq \emptyset$.
For each color $i$, the threshold $a_i$ is set to the multiplicity of $i$ in $M \setminus c(S')$.
The number of elements is $k'$ and the number of sets is polynomial. 
So, it takes time $O^*(2^{k'})$ to solve this instance.
Therefore, the overall running time is $O^*(2^k+\sum\limits_{S': S'\subseteq S} 2^{|S'|})=O^*(2^k+\sum\limits_{0 \leqslant k' \leqslant k} {k \choose k'} 2^{k'})=O^*(3^k)$. 
\end{proof}





\begin{theorem}\label{th:vcfpt}
\graphm can be solved in $O^*(2^{2k \log k})$ on graphs with a vertex cover of size $k$.
\end{theorem}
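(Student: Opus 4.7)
The plan is to first compute a vertex cover $S$ of size at most $k$ (using Chen--Kanj--Xia in $O^*(1.2738^k)$ time), and to enumerate the hypothetical intersection $S' := R \cap S$ together with a compact combinatorial ``skeleton'' for how $R \cap I$, where $I := V \setminus S$ is independent, stitches the components of $G[S']$. Each such guess will then be converted into a polynomial-time max-flow feasibility test.

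For each of the $2^k$ candidate subsets $S' \subseteq S$, one checks $c(S') \subseteq M$, lets $M' := M \setminus c(S')$, computes the connected components $C_1, \ldots, C_t$ of $G[S']$ (with $t \leq k$), and for each $v \in I$ computes the \emph{signature} $\pi(v) := \{i \in [t] : N(v) \cap C_i \neq \emptyset\}$. Since $I$ is independent, it remains to select $I' \subseteq I$ with $c(I') = M'$ such that every $v \in I'$ has $\pi(v) \neq \emptyset$ and the hypergraph $\{\pi(v) : v \in I'\}$ on $[t]$ is connected and spans $[t]$ (the cases $t \leq 1$ being trivial).

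To enumerate a skeleton, introduce the auxiliary graph $K$ on $[t]$ whose edges are those pairs $\{i,j\}$ with $\{i,j\} \subseteq \pi(v)$ for some $v \in I$. Any valid $I'$ realises a connected spanning subgraph of $K$, which admits a spanning tree $T$ on $[t]$. I would enumerate every labeled spanning tree $T$ ($t^{t-2}\leq k^{k-2}$ choices by Cayley's formula) and every partition $\Pi$ of $E(T)$ ($B_{t-1} \leq (k-1)!$ choices), intending each part $G \in \Pi$ to be filled by one ``core'' vertex from $I$ whose signature contains $V(G)$, the union of endpoints of the tree edges in $G$; the remaining picks of $I'$ are ``fillers''. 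Completeness: given a valid $I'$, take $T$ as any spanning tree of $K$ among edges covered by signatures in $I'$, then greedily assign each tree edge a witness in $I'$ and group tree edges that share a witness.

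Given a triple $(S', T, \Pi)$, feasibility is checked in polynomial time by maximum flow. The network has a source $s$, a sink $\tau$, a color node $\gamma_c$ with $s \to \gamma_c$ of capacity $m_{M'}(c)$ for each color $c$ in $M'$; for each $v \in I$ with $\pi(v) \neq \emptyset$, a vertex node $\nu_v$ with a unit edge from $\gamma_{c(v)}$; a slot node per part $G \in \Pi$ receiving unit edges from every $\nu_v$ with $\pi(v) \supseteq V(G)$ and sending a unit edge to $\tau$; and a filler node $F$ receiving unit edges from every $\nu_v$ and sending an edge of capacity $|M'| - |\Pi|$ to $\tau$. The guess is feasible iff the maximum $s$--$\tau$ flow equals $|M'|$; the unit capacity into each $\nu_v$ ensures each vertex of $I$ is used at most once. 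The total running time is $2^k \cdot k^{k-2} \cdot (k-1)!$ times a polynomial, which by Stirling is $O((2/e)^{k-1} \cdot k^{2k-3} \cdot \mathrm{poly}(k)) = O(2^{2k\log k})$. The main obstacle is achieving precisely this bound: the Stirling factor $(2/e)^{k-1}$ is exactly what absorbs the outer $2^k$ enumeration of $S'$; a secondary point is the completeness of the skeleton enumeration, handled by the greedy witness assignment above.
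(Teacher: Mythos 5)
Your proposal is correct and reaches the stated bound, but it takes a genuinely different route from the paper's. Both proofs begin identically (find $S$, enumerate $S' = R \cap S$, let $M' := M \setminus c(S')$, and observe that the rest of the solution lives in the independent set $I = V \setminus S$ and must stitch together the components of $G[S']$). The paper then fixes a \emph{minimal} connector set $R_d \subseteq R \cap I$ (noting $|R_d| \leq k'$), enumerates an \emph{ordered} partition $\langle A_1,\ldots,A_l\rangle$ of the components, where the ordering bakes in incremental connectedness ($r_i$ must attach both to all components of $A_i$ and to an earlier $A_j$), and then reduces to a maximum bipartite matching between parts and colors of $M'$. Your version instead enumerates an explicit topology: a labeled spanning tree $T$ on the $t \leq k$ components via Cayley's formula, plus a partition of $E(T)$ into ``core'' groups, and checks feasibility via a max-flow network that simultaneously encodes the color multiset constraint (including the ``filler'' vertices) and the coverage of the slots. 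The two skeleton enumerations are of comparable size ($l!\,B_{k'}$ vs. $t^{t-2}\,B_{t-1}$, both $2^{O(k\log k)}$), and a matching is just a special flow, so the subroutines are equivalent in power; the substantive difference is that the paper's ordering elegantly linearizes the connectivity argument, whereas your tree-plus-partition makes the intended adjacency structure explicit and lets the flow network carry the entire accounting (cores, fillers, and multiplicities) in one shot. Your completeness argument (pick a witness per tree edge, group by shared witness, observe that the union of endpoints is contained in the common witness's signature) is sound, and the implicit bound $|\Pi| \leq t-1$ on the number of cores is consistent with the fact that a minimal connector has at most $t-1$ elements. Minor nitpicks that do not affect correctness: when $|\Pi| > |M'|$ the filler capacity as written is negative and one should simply skip such guesses, and the phrase ``spanning tree of $K$'' in the completeness direction should be read as a spanning tree of the subgraph of $K$ whose edges are covered by signatures of vertices in $I'$, which is connected precisely when $G[S'\cup I']$ is.
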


\begin{proof}

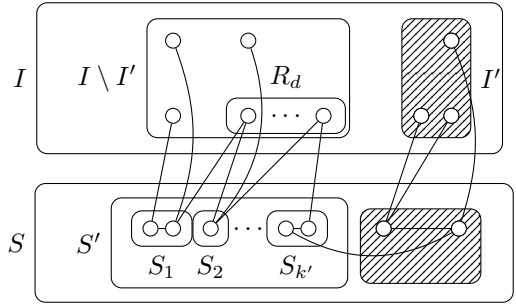
\begin{figure}[ht!]
\centering
\begin{tikzpicture}[scale=0.9,auto]

\node[vertex] (S11) {};
\node[vertex, right of=S11,node distance=0.3cm] (S12)  {};
\node[draw, rectangle,rounded corners,label=below:$S_1$,fit= (S11) (S12)] (S1) {};
\draw (S11) -- (S12) ;

\node[vertex,right of=S12,node distance=0.5cm] (S21) {};
\node[draw, rectangle,rounded corners,label=below:$S_2$,fit=(S21)] (S2) {};

\node[below of=S2,node distance=0.5cm] (bS2) {};

\node[right of=S21,node distance=0.53cm] (S2Sk) {\dots};

\node[vertex,right of=S21] (Sk1) {};
\node[vertex,right of=Sk1,node distance=0.3cm] (Sk2) {};
\node[draw, rectangle,rounded corners,label=below:$S_{k'}$,fit=(Sk1) (Sk2)] (Skp) {};
\draw (Sk1) -- (Sk2) ;

\node[draw,rectangle,rounded corners,fit= (S1) (bS2) (Skp),label={[name=lSp]left:$S'$}] (Sp) {};

\node[vertex,right of=Sk2] (Sr1) {};
\node[vertex,right of=Sr1] (Sr2) {};
\node[below of=Sr1,node distance=0.45cm] (Sr3) {};
\node[draw,rectangle,rounded corners,fit=(Sr1) (Sr2) (Sr3),pattern=north east lines] (Sremove) {};
\node at (Sr1) [vertex,fill=white] {};
\node at (Sr2) [vertex,fill=white] {};

\draw (Sr1) -- (Sr2) edge[bend left=30] (Sk1) ;

\node[draw, rectangle,rounded corners,label=left:$S$,fit=(lSp) (Sp) (Sremove)] (S) {};

\node[vertex,above of=S12,node distance=1.5cm] (I1) {};
\node[vertex,right of=I1] (I2) {};
\node[vertex,right of=I2] (I3) {};
\node[vertex,above of=I1] (I4) {};
\node[vertex,right of=I4] (I5) {};

\node[right of=I2,node distance=0.53cm] (dI23) {\dots};

\node[draw, rectangle, rounded corners,fit=(I2) (I3),label=$R_d$] {};

\draw (I1) -- (S11);
\draw (I2) -- (S12);
\draw (I2) -- (S21);
\draw (I3) -- (S21);
\draw (I3) -- (Sk2);
\draw (I4) edge[bend left=20] (S12);
\draw (I5) edge[bend left] (S21);
\node[draw,rectangle,rounded corners,label={[name=IminI]left:$I \setminus I'$},fit=(I1) (I2) (I3) (I4) (I5)] (If) {};

\node[vertex,right of=I3,node distance=1.3cm] (I6) {};
\node[vertex,right of=I6, node distance=0.4cm] (I7) {};
\node[vertex,above of=I7] (I8) {};
\draw (I6) -- (Sr1);
\draw (I7) -- (Sr1);
\draw (I8) edge[bend left=20] (Sr2);
\node[draw,rectangle,rounded corners,label=right:$I'$,fit=(I6) (I7) (I8),pattern=north east lines] (Iremove) {};

\node at (I6) [vertex,fill=white] {};
\node at (I7) [vertex,fill=white] {};
\node at (I8) [vertex,fill=white] {};

\node[draw, rectangle,rounded corners,label=left:$I$,fit=(IminI) (If) (Iremove)] (I) {};

\end{tikzpicture}
\caption{The subsets of $V$ relevant to the algorithm of Theorem~\ref{th:vcfpt}.
}\label{fig:vc}
\end{figure}

We start similarly to the previous algorithm.
We compute a minimum vertex cover $S$ of $G$ in time $O^*(2^k)$ (or $O^*(1.2738^k)$ \cite{Chen10}), and then guess in time $O^*(2^k)$ the subset $S'=S \cap R$, where $R$ is a fixed solution.
Again, we denote by $S_1, S_2, \ldots, S_{k'}$ the connected components of $G[S']$.
We remove $c(S')$ from the motif and we remove from $V$ the set $I'$ of the vertices of the independent set $I:=V \setminus S$ which have no neighbor in $S'$ (see Figure~\ref{fig:vc}).
Now, by the transformation presented in the algorithm parameterized by distance to clique, the problem could be made equivalent to a constrained version of \textsc{Colored Set Cover with Thresholds} where the intersection graph (with an edge between two sets if they have a non-empty intersection) of the solution has to be connected.  
Unfortunately, it is not clear whether or not this variant can be solved in time $2^{O(n)}$.
Thus, at this point, we have to do something different.

Let $R_d=\{r_1,r_2,\ldots,r_l\} \subseteq R \setminus S'$ be a minimal (inclusion-wise) set of vertices such that $G[S' \cup R_d]$ is connected.
We can observe that $l \leqslant k' \leqslant k$.
We guess in time $O^*(l!B_l)$ (where $B_l$ is the $l$-th Bell number, i.e., the number of partitions of a set of size $l$) an ordered partition $P:=\langle A_1,A_2,\ldots,A_l \rangle$ of the connected components $\{S_1,\ldots,S_{k'}\}$ such that, for each $i \in [l]$, (1) $r_i$ has at least one neighbor in each connected component of $A_i$ and (2) if $i \geqslant 2$, $r_i$ has at least one neighbor in a connected component of $\bigcup_{1 \leqslant j < i} A_j$. 
Note that such an ordered partition always exists since $G[S' \cup R_d]$ is connected.
Now, we build the bipartite graph $B=(P \cup M',F)$, where $M'=M \setminus c(S')$ and there is an edge between $A_i \in P$ and each copy of color $c \in M'$ if and only if there is a vertex $v \in I$ colored by $c$ in the original graph $G$ and such that (1) $v$ has at least one neighbor in each connected component of $A_i$ and (2) if $i \geqslant 2$, $v$ has at least one neighbor in a connected component of $\bigcup_{1 \leqslant j < i} A_j$.
By construction, $\{\{A_i,c(r_i)\}$ $|$ $i \in [l]\}$ is a maximum matching of size $|P|=l$ in graph $B$.
Thus, we compute in polynomial time a maximum matching $\{\{A_i,c_i\}$ $|$ $i \in [l]\}$ in $B$. 
Then, we obtain a solution to the \graphm instance by taking, for each $i \in |l]$ any vertex $v_i$ colored by $c_i$ and having (1) at least one neighbor in each connected component of $A_i$ and (2) if $i \geqslant 2$, at least one neighbor in a connected component of $\bigcup_{1 \leqslant j < i} A_j$.
This can also be done in polynomial time and the existence of such a $v_i$ is guaranteed by the construction of graph $B$.
Then, we complete set $S' \cup \bigcup_{i \in [l]}\{v_i\}$ into a solution by taking any vertices in $I \setminus I'$ with the right colors.
As $l! \leqslant l^l$, $B_l \leqslant (\frac{l}{2})^l$ (even $B_l < (\frac{0.792 l}{\ln{(l+1)}})^l$ \cite{Berend10}), and $l \leqslant k$ the overall running time is $O^*(2^k+2^kk!B_k)=O^*(k^kk^k)=O^*(2^{2k \log k})$. 
\end{proof}

In the \textsc{Edge Clique Cover} problem, one asks, given a graph $G=(V,E)$ and an integer $k$, for $k$ subsets $C_1,\ldots,C_k \subseteq V$, such that $\forall i \in [k]$, $G[C_i]$ is a clique, and $\forall e \in E$, $e$ lies in a clique $C_i$ for some $i \in [k]$.
The set $\{C_1,\ldots,C_k\}$ is called an \emph{edge clique cover} of $G$.
The \emph{edge clique cover number} of a graph $G$ is the smallest $k$ such that $G$ has an edge clique cover of size $k$.
\textsc{Edge Clique Cover} admits a kernel of size $2^k$ (which can be obtained in $O(n^4)$ time)~\cite{Gramm08} and, as observed in \cite{Cygan13}, it can be solved by dynamic programming in time $2^{O(n+m)}$.
Therefore, it can be solved in time $2^{O(2^k+2^{2k})} + O(n^4)$, that is $2^{2^{O(k)}} + O(n^4)$. 
On the negative side, \textsc{Edge Clique Cover} cannot be solved in time $2^{2^{o(k)}}$ under ETH~\cite{Cygan13}.
But, we may imagine that the instance comes with an optimal or close to optimal edge clique cover, or that we have a good heuristic to compute it (a polynomial time approximation with sufficiently good ratio is unlikely \cite{Lund94}).

\begin{theorem}\label{th:edge-clique-cover}
\graphm can be solved in time $2^{2^{O(k)}}  + O(n^4)$, where $k$ is the edge clique cover number, and in time $O^*(2^{2k \log k + k})$ if an edge clique cover of size $k$ is given as part of the input.
\end{theorem}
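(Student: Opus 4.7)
The plan is to tackle the second claim first, as it contains the main algorithmic content; the first claim then follows by bootstrapping through the known kernelization for \textsc{Edge Clique Cover}. The algorithm for the second claim adapts the template of Theorem~\ref{th:vcfpt} to the clique-cover setting.

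Assume the cover $\{C_1,\dots,C_k\}$ is given and define each vertex's \emph{type} $T(v):=\{i\in [k]:v\in C_i\}\subseteq [k]$; two vertices of $G$ are adjacent iff their types meet. Fix a solution $R$ (the case $|R|\leq 1$ is handled separately) and set $J:=\{i:R\cap C_i\neq\emptyset\}$, so that $T(v)\subseteq J$ for every $v\in R$. I would first guess $J$ in $2^k$ branches. Mimicking the role of $R_d$ in Theorem~\ref{th:vcfpt}, consider a minimal $R_{\mathrm{con}}\subseteq R$ that hits every clique in $J$ and induces a connected subgraph, and order it by a BFS as $r_1,\dots,r_l$, so that each $r_i$ with $i\geq 2$ shares a clique with some earlier $r_j$. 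The sets $A_i:=(T(r_i)\cap J)\setminus\bigcup_{j<i}T(r_j)$ then form an ordered partition of $J$ into non-empty parts satisfying (P1) $A_i\subseteq T(r_i)\subseteq A_1\cup\dots\cup A_i$ and (P2) for $i\geq 2$, $T(r_i)\cap\bigcup_{j<i}A_j\neq\emptyset$ (witnessed by any clique shared by $r_i$ with its BFS parent).

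The algorithm enumerates every such ordered partition---at most $\sum_l l!\,S(|J|,l)\leq k!\,B_k\leq k^{2k}$ of them---and, for each, builds a bipartite graph $H$ analogous to the one in Theorem~\ref{th:vcfpt}: its left side is $\{A_1,\dots,A_l\}$, its right side consists of a slot $(c,t)$ for each color $c\in M$ and $t\in[m_M(c)]$, and an edge $(A_i,(c,t))$ is present iff some vertex $v$ of color $c$ satisfies (P1)--(P2). A matching of size $l$ saturating the left side, computed in polynomial time, pins down the colors of the reps; the nesting in (P1) forces the selected reps to be pairwise distinct, since $T(r_i)=T(r_j)\supseteq A_{\max(i,j)}$ together with $T(r_{\min(i,j)})\subseteq A_1\cup\dots\cup A_{\min(i,j)}$ would contradict the non-emptiness of $A_{\max(i,j)}$. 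The motif is then completed greedily by picking, for each color $c$, the needed extra vertices of color $c$ with non-empty type contained in $J$; every such vertex shares a clique in $J$ with the rep handling that index, so adding it preserves the connectedness of $G[R]$.

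The main obstacle is justifying that the greedy completion is always feasible when the matching succeeds: this reduces to the polynomial-time check, independent of the matching, that for every color $c$ the number of vertices $v$ with $c(v)=c$ and $\emptyset\neq T(v)\subseteq J$ is at least $m_M(c)$---a necessary condition if any solution with this $J$ exists, and one which suffices to fill the remaining $m_M(c)-|\{i:c(r_i)=c\}|$ slots from the pool of non-reps. The overall running time is $2^k\cdot k!\,B_k\cdot\mathrm{poly}(n)=O^*(2^{2k\log k+k})$. For the first claim, I would apply the $O(n^4)$-time kernelization of Gramm et al.~\cite{Gramm08} to reduce \textsc{Edge Clique Cover} to an equivalent instance on at most $2^k$ vertices, solve it on the kernel in $2^{O(n+m)}=2^{2^{O(k)}}$ time via the dynamic programming of~\cite{Cygan13}, lift the resulting cover back to the original graph, and run the second-claim algorithm. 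The total is $O(n^4)+2^{2^{O(k)}}+O^*(2^{2k\log k+k})=O(n^4)+2^{2^{O(k)}}$.
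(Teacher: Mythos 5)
Your proposal takes a genuinely different route from the paper. The paper \emph{reduces} to the vertex-cover case: after guessing the set $J$ of cliques that meet a fixed solution $R$, it forms the bipartite incidence graph between cliques in $J$ and their vertices, gives the clique-side vertices a fresh color inserted into the motif with multiplicity $|J|$ (so all of them are forced into any solution), observes that the clique side is a vertex cover of size $|J|\leq k$, and invokes Theorem~\ref{th:vcfpt} as a black box; the correctness argument just translates paths in $G[R]$ into paths in the incidence graph and back. You instead re-run the vertex-cover algorithm directly on the original graph, with the connected components of $G[S']$ replaced by the cliques in $J$, and $R_d$ replaced by a minimal connected $R_{\mathrm{con}}\subseteq R$ hitting every clique of $J$. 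Both routes give $O^*(2^{2k\log k+k})$, and your handling of the first claim via the Gramm et al.\ kernel plus the $2^{O(n+m)}$ algorithm of Cygan et al.\ matches the paper.

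The one place where mimicking Theorem~\ref{th:vcfpt} is not as innocent as you make it look is the assertion that the sets $A_i=(T(r_i)\cap J)\setminus\bigcup_{j<i}T(r_j)$ form a partition of $J$ into \emph{non-empty} parts, which is what gives $l\leq|J|\leq k$ and justifies the $k!\,B_k$ bound on the enumeration. In the vertex-cover proof the analogous fact is immediate because $R_d$ lies inside the independent set $I$: minimality forces each $r_i$ to strictly decrease the number of components of $G[S'\cup\{r_1,\dots,r_i\}]$. Here $R_{\mathrm{con}}$ is \emph{not} independent, so a priori a vertex $r_i$ could hit no new clique of $J$ and yet be an unavoidable cut vertex of $G[R_{\mathrm{con}}]$. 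The claim does hold, but it needs an argument: if $T(r_i)\subseteq\bigcup_{j<i}T(r_j)$, then any BFS-child $r_c$ of $r_i$ shares a clique index $a\in T(r_c)\cap T(r_i)$, and since $a\in T(r_j)$ for some $j<i$, $r_c$ is adjacent to $r_j$ and would already have been discovered before $r_i$ is processed---so $r_i$ is a BFS leaf, hence removable (both connectivity and hitting are preserved), contradicting minimality. With this lemma supplied (and with the implicit preprocessing step of discarding vertices $v$ with $T(v)\not\subseteq J$ made explicit), your direct adaptation is correct; the paper's incidence-graph reduction is shorter precisely because it sidesteps this subtlety.
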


{
\begin{proof}
Let $I=(G=(V,E),c,M)$ be any instance of \graphm.
If not given, we first compute an edge clique cover $\{C_1,\ldots,C_k\}$ of size $k$ in $G$, in time $2^{2^{O(k)}} + O(n^4)$~\cite{Gramm08,Cygan13}.

We guess in time $O^*(2^k)$ the exact subset $\{C'_1,\ldots,C'_{k'}\} \subseteq \{C_1,\ldots,C_k\}$ of cliques $C_i$ such that $C_i \cap R$ is non-empty, for a fixed solution $R$.
Now, we turn the instance into an equivalent instance where the motif has size $|M|+k'$ and the graph has at most $|V|+k'$ vertices and a vertex cover of size $k'$.
The new graph is a bipartite graph $B=(A \cup W,F)$ such that $A$ contains one vertex $v(C'_i)$ per clique $C'_i$ (so, $A$ is a vertex cover of graph $B$ of size $k' \leqslant k$), $W=C'_1 \cup \ldots \cup C'_{k'} \subseteq V$, and there is an edge in $F$ between $v(C'_i) \in A$ and $w \in W$ if and only if $w \in C'_i$.
Each vertex in $W$ keeps the color it had in $G$.
A fresh color $\gamma$ is given to the $k'$ vertices of $A$, and color $\gamma$ is added to the motif $M$ with multiplicity $k'$.
This coloring is denoted by $c'$ and $M' := M \cup \{\gamma, \ldots, \gamma~(k'~\text{times})\}$.
We run on the instance $I'=(B,c',M')$ the algorithm parameterized by the vertex cover number of Theorem~\ref{th:vcfpt}. 
This algorithm has an overall running time of $O^*(2^k2^{2k \log k})$, if the edge clique cover is given, and $2^{2^{O(k)}} + O(n^4)$ otherwise.

We now explain why the reduction is correct.
We first claim that the set $A \cup R$ is a solution for the instance $I'$.
The colors of $A \cup R$ consist of $k'$ occurences of $\gamma$ plus the colors of $M$ which matches the multiset $M'$. 
Now, we show that $B[A \cup R]$ is connected by reporting a path from any pair $x, y$ of vertices in $A \cup R$.
Let $\psi: A \cup R \rightarrow R$ be the identity function when restricted to $R$ and map vertex $v(C'_i) \in A$ to an arbitrary fixed vertex of $C'_i \cap R$.
By construction $C'_i \cap R \neq \emptyset$, so $\psi$ is well-defined.
As $G[R]$ is connected there is a path between $\psi(x)$ and $\psi(y)$ in $G[R]$: $\psi(x)=u_1, u_2, \ldots, u_h=\psi(y)$.
By definition of a clique cover, any two consecutive vertices $u_\ell$ and $u_{\ell+1}$ ($\ell \in [h-1]$) along this path are in a same clique $C'_i$.
Therefore, in $B[A \cup R]$ there is a corresponding path $u_\ell, v(C'_i), u_{\ell+1}$.
Also $\psi(x)$ (resp.~$\psi(y)$) is either $x$ (resp.~$y$) or linked by an edge to $x$ (resp.~$y$).
Overall, this gives a path from $x$ to $y$ in $B[A \cup R]$.  

Conversely, assume there is a solution $S$ to $I'$.
Set $S$ has to contain $A$ otherwise the color $\gamma$ is not represented $k'$ times.
So, $S=A \uplus R'$.
We claim that $R'$ is a solution for the instance $I$.
In order to match the colors of $M'$, the colors of $R'$ should match the multiset constraint of $M$.
It remains to argue why $G[R']$ is connected.
Let $x,y$ be any two vertices of $R'$.
Since $B$ is bipartite and $B[S]$ is connected, there is a path in $B[S]$: $x=u_1,v(C'_{i_1}),u_2,v(C'_{i_2}),u_3,\ldots,u_{h-1},v(C'_{i_{h-1}}),u_h=y$ with $u_\ell \in R'$ for any $\ell \in [h]$.
As $u_\ell$ and $u_{\ell+1}$ are in the same clique $v(C'_{i_\ell})$ they are linked by an edge in $G[R']$.
Thus, $x=u_1, u_2, u_3, \ldots, u_h=y$ is a path in $G[R']$.
\end{proof}
}



The correctness of the reduction crucially relied on the fact that every edge is fully contained in at least one clique of the cover.
This would not be the case with a vertex clique cover (a partition of the vertex set into sets inducing cliques).
In Section~\ref{subsec:unbounded-nd}, we give a more complicated FPT algorithm parameterized by the vertex clique cover size (if such a cover is given in the input).
It is not surprising that the edges going from one clique to another play an important role in the greater difficulty of the parameterized algorithm.

Ganian~\cite{Ganian11}, Theorem~\ref{th:vcfpt} and Theorem~\ref{th:distclique} prove that \graphm is in $\fpt$ if the parameter is the vertex cover number or the distance to clique.
Therefore, the problem has a kernel for these two parameters~\cite{Niedermeier2006}.
Though, this does not imply that the size of the corresponding kernels is polynomial.
We show that the corresponding kernels cannot be polynomial unless  $\np \subseteq \mathsf{coNP}/poly$. 


\begin{theorem}\label{th:nokernelvc}
Unless  $\np \subseteq \mathsf{coNP}/poly$, \graphm has no polynomial kernel when parameterized by the vertex cover number or the distance to clique, even for (i) motifs with only 3 colors or (ii) when the motif is colorful.
\end{theorem}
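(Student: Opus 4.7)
The plan is to establish both no-poly-kernel claims via OR-cross-compositions~\cite{Bodlaender2014} in the sense recalled in Section~\ref{sec:prelim}. For part~(i) I would cross-compose from \graphm on bipartite graphs of maximum degree~$4$ with $2$-color motifs, which is $\np$-hard by~\cite{fellows2011}; for part~(ii), from \graphm on trees of maximum degree~$3$ with colorful motifs, also $\np$-hard by~\cite{fellows2011}. The polynomial equivalence relation $\R$ identifies two source instances whenever they share $|V|=n$, the motif $M$, and the coloring scheme; because the source restrictions keep $|x_i| = O(n)$, this partitions any finite set into $n^{O(1)}$ classes, as required.

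Given $t$ equivalent source instances $(G_i, c_i, M)$, I would assemble $(G^*, c^*, M^*)$ so that the vertex set splits into a \emph{shadow} independent set $L$ holding the per-instance choices and a small \emph{core} $C$ of size polynomial in $n + \log t$ that will serve as the vertex cover. The core contains a logarithmic binary \emph{selector} whose $t$ valid configurations are in bijection with $[t]$, together with an $O(n)$-size ``terminal layer'' corresponding to source positions; the per-edge information of each $G_i$ is realised by an edge-witness vertex placed in $L$, adjacent only to core vertices---namely its two terminals and the bit-vertices of the selector encoding instance~$i$---so that $L$ remains an independent set while $|C|$ stays independent of $t$. Colors on the core (a single padding color in part~(i), pairwise distinct fresh colors in part~(ii)) and calibrated multiplicities of padding/selector colors in $M^*$ force any solution to activate exactly one selector configuration, fixing a single $i^*\in[t]$; the connectivity constraint then forces the witnesses appearing in a solution to come from $E(G_{i^*})$ and to encode a spanning tree of a motif-matching connected subgraph of $G_{i^*}$.

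Correctness would follow by a routine two-way lifting between source solutions $R\subseteq V(G_{i^*})$ and solutions in $G^*$, mapping $R$ to the shadows $\{\tilde v_{i^*}:v\in R\}$, the witnesses of a spanning tree of $G_{i^*}[R]$, and the selector's $i^*$-encoding. The distance-to-clique variant is then obtained by adding every shadow--shadow edge, so that $L$ becomes a clique and the same $C$ is a distance-to-clique set of equal polynomial size. The main obstacle is the joint design of the selector and the witness-to-selector wiring: it must simultaneously (1) keep $|C|$ polynomial in $n + \log t$ even for arbitrary $t$ (so witnesses must live in $L$, never in $C$), (2) prevent ``mixed'' solutions whose witnesses come from several instances (enforced by tuning the padding/selector multiplicity budget against the witness-to-selector adjacencies), and (3) respect the color budget---only three colors in part~(i), colorful in part~(ii), which in the latter case demands that every witness and selector vertex carry a distinct color appearing in $M^*$. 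Meeting these three constraints simultaneously, while keeping the equivalence-class count polynomial, is the principal technical effort.
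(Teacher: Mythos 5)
Your blueprint is a genuinely different route from the paper's, but as written it contains a real gap, and it misses the key simplifying observation that makes the actual proof work.

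The paper does not cross-compose from \graphm; it cross-composes from \pxtc, and---more importantly---it does \emph{not} use a logarithmic binary selector living in the core. Instead it introduces one vertex $r_i$ per source instance, all colored with a single color that appears with multiplicity~$1$ in the motif, and places all $t$ of these selector vertices in the \emph{shadow} part (the eventual independent set, or the clique for the distance-to-clique variant). Since these $t$ vertices are outside the vertex cover, their number never threatens the polynomial bound on the parameter; the vertex cover consists only of the $O(q^3)$ ``triple'' vertices and the $3q$ ``element'' vertices, which are shared across all instances. Instance-specific data is encoded purely through the adjacencies between $r_i$ and the triple vertices. The multiplicity-$1$ budget on the selector color, together with connectivity, automatically forces exactly one $r_{i^*}$ to be chosen, i.e. exactly one instance to be ``activated'', and no logarithmic machinery is needed at all.

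The concrete gap in your plan is the interaction between the $\log t$ selector bits in the core and the strict color budget. For part~(i) you are allowed \emph{three} colors total. A binary selector inside the core would need to distinguish bit positions and bit values; with only three colors, and with the motif multiplicities already committed to other roles, it is not clear how to enforce ``the chosen bit-vertices encode a single $i\in[t]$'' without introducing more colors or blowing up the multiplicities in a way that no longer forces a unique instance. For part~(ii) (colorful motifs) the situation is arguably worse: each color appears at most once, so every selector-bit vertex with a fresh color would be forced into \emph{every} solution, and the selector would degenerate to a fixed set rather than an instance-dependent choice. You flag exactly these constraints as ``the principal technical effort'', but you do not resolve them, and they are precisely the place where the construction as sketched can fail. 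By contrast, the paper sidesteps the whole issue by never putting anything $t$-dependent into the core: the selection is done by a single multiplicity-$1$ color spread across $t$ interchangeable shadow vertices. If you were to adopt that trick---drop the logarithmic selector, move instance selection entirely into the shadow part---the rest of your outline (shared terminal layer in the core, instance data in the shadow-to-core adjacencies, add all shadow--shadow edges for the distance-to-clique variant) becomes essentially the paper's construction, modulo starting from \pxtc rather than from \graphm itself, which further simplifies the correctness argument because \pxtc solutions are just exact set partitions rather than connected subgraphs needing a spanning-tree witness.
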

\begin{proof}
We only detail the proof for (i) for parameter vertex cover.
\sloppy We will define an OR-cross-composition~\cite{Bodlaender2014} from the $\np$-complete \pxtc problem, stated as follows: given an integer $q$, a set $X = \{x_1,x_2,\dots,x_{3q}\}$ and a collection $\S = \{S_1,\dots,S_{|\S|}\}$ of 3-elements subsets of $X$, the goal is to decide if $\S$ contains a subcollection $\T \subseteq \S$ such that $|\T|=q$ and each element of $X$ occurs in exactly one element of $\T$. Given $t$ instances, $(X_1,\S_1), (X_2,\S_2), \dots, (X_t,\S_t)$, of \pxtc, we define our equivalence relation $\R$ such that any strings that are not encoding valid instances are equivalent, and $(X_i,\S_i), (X_j,\S_j)$ are equivalent if and only if $|X_i| = |X_j|$ and $|\S_i| = |\S_j|$. 
We will build an instance $(G,c,M)$ of \graphm parameterized by the vertex cover number, 
where $G$ is the input graph, $c$ the coloring function and $M$ the motif, such that there is a solution for \graphm if and only if there is an $i \in [t]$ such that there is a solution for $(X_i,\S_i)$. We will now describe how to build such instance of \graphm. The graph $G$ consists of 
$t$ independent nodes $r_1,r_2,\cdots, r_t$.
There are also $O((3q)^3)$ nodes $s_{x,y,z}, 1 \leq x<y<z \leq 3q$, with an edge between $r_i$ and $s_{x,y,z}$ if and only if the 3-element subset $\{x,y,z\}$ exists in $\S_i$. 
Finally, there are $|X_i|=3q$ nodes $x_i, 1 \leq i \leq 3q$, and there is an edge between $x_i$ and every subset $s_{x,y,z}$ where $x_i$ occurs (see Figure~\ref{fig:wratigan}). The coloration is $c(r_i) = 1$, for all $1 \leq i \leq t$, $c(s_{x,y,z}) = 2$ for all $1 \leq x<y<z \leq 3q$, and $c(x_i) = 3, 1 \leq i \leq 3q$. The multiset $M$ consists of 1 occurrence of the color 1, $q$ occurrences of color 2 and $3q$ occurrences of color 3.

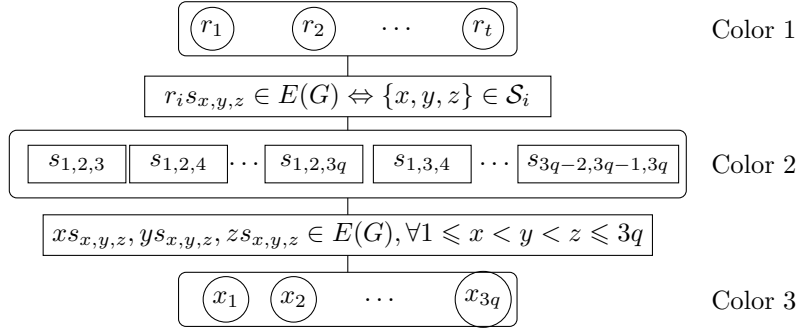
\begin{figure}[ht!]
\centering
\begin{tikzpicture}[scale=0.9,auto]

\node () at (5,-1) {Color 1};
\node () at (5,-3) {Color 2};
\node () at (5,-5) {Color 3};


\draw[color=black,fill=white,rounded corners=0.1cm] (-3.5,-1.4) rectangle (1.5,-0.6);
\node[vertex] (t1) at (-3,-1) {$r_1$};
\node[vertex] (t2) at (-1.5,-1) {$r_2$};
\node () at (-0.25,-1) {$\dots$};
\node[vertex] (tt) at (1,-1) {$r_t$};

\draw[color=black,fill=white,rounded corners=0.1cm] (-6.0,-3.5) rectangle (4.0,-2.5);
\node [draw,minimum width=1.3cm,minimum height=0.5cm] (e12) at (-5.0,-3) {$s_{1,2,3}$};
\node [draw,minimum width=1.3cm,minimum height=0.5cm] (e13) at (-3.5,-3) {$s_{1,2,4}$};
\node () at (-2.5,-3) {$\dots$};
\node[draw,minimum width=1.3cm,minimum height=0.5cm] (e1n) at (-1.5,-3) {$s_{1,2,3q}$};
\node[draw,minimum width=1.3cm,minimum height=0.5cm] (e23) at (0.1,-3) {$s_{1,3,4}$};
\node () at (1.2,-3) {$\dots$};
\node [draw,minimum width=2.0cm,minimum height=0.5cm] (en-1n) at (2.7,-3) {$s_{3q-2,3q-1,3q}$};

\draw[color=black,fill=white,rounded corners=0.1cm] (-3.5,-5.4) rectangle (1.5,-4.6);
\node[vertex] (v11) at (-2.8,-5) {$x_1$};
\node[vertex] (v12) at (-1.8,-5) {$x_2$};
\node () at (-0.5,-5) {$\dots$};
\node[vertex] (v1n) at (1,-5) {$x_{3q}$};


\draw[color=black,fill=white,rounded corners=0.0cm] (-4,-2.3) rectangle (2,-1.7);
\node () at (-1,-2) {$r_is_{x,y,z} \in E(G) \Leftrightarrow \{x,y,z\} \in \S_i$};

\draw (-1,-1.4) -- (-1,-1.7);
\draw (-1,-2.3) -- (-1,-2.5) ;

\draw[color=black,fill=white,rounded corners=0.0cm] (-5.5,-3.75) rectangle (3.5,-4.35);
\node () at (-1,-4.05) {$xs_{x,y,z}, ys_{x,y,z}, zs_{x,y,z} \in E(G), \forall 1 \leqslant x<y<z \leqslant 3q$};

\draw (-1,-3.5) -- (-1,-3.75);
\draw (-1,-4.35) -- (-1,-4.6);


\end{tikzpicture}
\caption{Illustration of the construction of $G$ for parameter vertex cover. The motif consists of 1 occurrence of color 1, $q$ of color 2 and $3q$ of color 3.}\label{fig:wratigan}
\end{figure}

It is easy to see that 
$\{s_{x,y,z} | 1 \leqslant x < y < z \leqslant 3q\}
\cup \{x_i | 1 \leq i \leq 3q \}
$
is a vertex cover for $G$ (as its removal leaves an independent set) 
and that its size is polynomial in $3q$ and hence in the size of the largest instance.

Let us show that there is a solution for our instance of \graphm if and only if at least one of the $(X_i,\S_i)$'s has a solution of size $q$.

Suppose that $(X_i,\S_i)$ has a solution $\T_i$ of size $q$. 
We set $R = \{r_i\} \cup \{s_{x,y,z} $ $|$ $ \{x,y,z \} \in \T_i\} \cup \{x_i | 1 \leq i \leq 3q\}$. 
One can easily check that $G[R]$ is connected and that $c(R) = M$.

Conversely, suppose now that there is a solution $R \subseteq V$ such that $G[R]$ is connected and $c(R) = M$. 
Due to the motif, only one of the nodes $r_i$ is in $R$ and all nodes $x_i$ are in $R$. 
We claim that there is then a solution $\T_i$ in $(X_i,\S_i)$, where $i$ is the index of the only node $r_i$ in $R$. 
We add in $\T_i$ the $q$ sets $\{x,y,z\}$ such that $s_{x,y,z} \in R$. 
Since $R$ is a solution, the nodes $s_{x,y,z}$ in $R$ correspond to a partition of $X$; otherwise, one of the nodes $x_i$ would be disconnected.
Then, $\T_i$ covers exactly all the elements of $X_i$. 
By the connectivity constraint, the $q$ sets added in $\T_i$ all occur in the instance $i$ such that $r_i \in R$. 

If the considered parameter is the distance to clique, one can consider the nodes $r_1,r_2,\dots,r_t$ as a clique. 
The removal of $\{s_{x,y,z} | 1 \leqslant x < y < z \leqslant 3q\} \cup \{x_i | 1 \leq i \leq 3q \}$ leaves one clique and its size is polynomial in the size of the largest instance.
The correctness is the same as for parameter vertex cover number, as only one occurrence of color 1 is in the motif.

The second item (ii) of the statement can be proven similarly following the ideas of~\cite[Theorem 6]{Bjorklund2016}.
That is, the nodes $s_{x,y,z}$ are duplicated $q$ times, i.e. into nodes $s_{x,y,z}^i, 1 \leq i \leq q$, where $c(s_{x,y,z}^i) = i$, forcing to have at most $q$ of such nodes in the solution.
Also, the $3q$ nodes $x_i$ receive a fresh unique color (say with colors $q+1$ to $q+1+3q$), forcing all of them to be in any solution.
The nodes $r_1,r_2,\dots,r_t$ are colored with color $q+1+3q+1$.

\end{proof}

\subsection{Parameters with unbounded neighborhood diversity}\label{subsec:unbounded-nd}

This section disproves the idea that \graphm is only tractable for classes with bounded neighborhood diversity.
Indeed, we show that \graphm is in $\fpt$ parameterized by the size of a \emph{vertex clique cover} or by the distance to co-cluster.
The former algorithm creates a win/win based on K\"onig's theorem applied to a bounded number of auxiliary bipartite graphs.
The latter is simpler and uses as subroutines the algorithms parameterized by vertex cover number and distance to clique.

In the \textsc{Vertex Clique Cover} problem (also known as \textsc{Clique Partition}), one asks, given a graph $G=(V,E)$ and an integer $k$, for a \textit{partition} of the vertices into $k$ subsets $C_1,\ldots,C_k \subseteq V$, such that $\forall i \in [k]$, $G[C_i]$ is a clique. 
The set $\{C_1,\ldots,C_k\}$ is called a \emph{vertex clique cover} of $G$.
The \emph{vertex clique cover number} of a graph $G$ is the smallest $k$ such that $G$ has an vertex clique cover of size $k$. 
This problem is equivalent to the \textsc{Graph Coloring} problem since a graph as a vertex clique cover of size $k$ if and only if its complement is $k$-colorable.
Therefore, \textsc{Vertex Clique Cover} is unlikely to be in $\xp$. 
However, if a vertex clique cover comes with the input, we show that \graphm is in $\fpt$ for parameter vertex clique cover number. 
One can notice that \graphm is $\np$-hard in $2$-colorable graphs.
This is a striking example of how easier can \graphm be on the denser counterpart of two complementary classes.

To realize that vertex clique cover number has unbounded neighborhood diversity, think of the complement of a bipartite graph.
The vertex clique cover is of size $2$ but the neighborhood diversity could be arbitrary;
for parameter distance to co-cluster, think of the complement of a cluster graph with an unbounded number of cliques.

\begin{theorem}\label{th:vertex-clique-cover}
\graphm can be solved in time $O^*(k^{O(k)})$ where $k$ is the vertex clique cover number, provided that the vertex clique cover is given as part of the input.
\end{theorem}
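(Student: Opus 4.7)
The plan is to combine enumeration over the ``backbone'' of the solution with a König-based argument that avoids individually enumerating bridge vertices. First I would guess $S \subseteq [k]$, the subset of cliques in the cover that the (hypothetical) solution $R$ intersects nontrivially; this costs a factor of $2^k$. If $|S|=1$ the instance is settled in polynomial time by checking that the motif's color multiplicities fit inside a single clique of the cover. Otherwise, since $G[R]$ must be connected, the ``contact graph'' on $S$ (where $i \sim j$ iff $R \cap C_i$ has a $G$-edge to $R \cap C_j$) is connected and hence contains a spanning tree; I enumerate a spanning tree $T$ on $S$ via Cayley's formula at cost $|S|^{|S|-2} \leq k^k$.

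Second, for each tree edge $e=\{i,j\} \in T$, the solution must contain a \emph{bridge edge} from the bipartite subgraph $H_e := G[C_i,C_j]$ of cross-clique edges. For each $H_e$ I apply König's theorem to compute, in polynomial time, a minimum vertex cover $Z_e$ of size equal to the maximum matching of $H_e$. This yields a win/win classification: either $|Z_e| \leq 2k$ (call $e$ \emph{narrow}), in which case every bridge edge of $H_e$ is incident to $Z_e$ and I branch over the at most $2k$ possible cover-endpoints; or $|Z_e| > 2k$ (call $e$ \emph{wide}), in which case $H_e$ has more than $2k$ vertex-disjoint matching edges. Because the whole solution uses at most $2(|S|-1) \leq 2k$ bridge-vertex slots in total, for a wide edge some matching edge is vertex-disjoint from all previously committed bridge vertices and can be installed without conflict.

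Third, once the bridges are fixed, the partial $R$ comprises at most $2(k-1)$ bridge vertices with known colors. Let $M' := M \setminus c(\text{bridge vertices})$ be the residual motif. Since each $C_i$ is a clique, enlarging $R_i$ by any non-bridge vertices of $C_i$ keeps $G[R]$ connected. It therefore suffices to verify, per color $c$, that the multiplicity of $c$ in $M'$ does not exceed the number of non-bridge $c$-colored vertices lying in $\bigcup_{i \in S} C_i$. This amounts to a bipartite perfect matching computation between residual motif slots and candidate vertices and runs in polynomial time.

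The main obstacle I anticipate is reconciling the ``wide'' bridge selection with the color-matching constraint: a greedily chosen bridge commits two motif-colors that must still balance in $M$. To make this rigorous, the argument will need either (i) an additional small branching per wide edge over a handful of color-type combinations compatible with $M$, or (ii) a local rearrangement step that swaps a greedy choice for an equivalent one when a color conflict arises, exploiting the surplus of vertex-disjoint matching edges guaranteed by König. Putting the pieces together, the total running time is $2^k \cdot k^k \cdot (2k)^{k-1} \cdot \mathrm{poly}(n) = k^{O(k)} \cdot \mathrm{poly}(n)$, matching the claimed $O^*(k^{O(k)})$ bound.
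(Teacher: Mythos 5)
Your overall skeleton (guess intersected cliques, guess a spanning tree over them, then a K\"onig win/win on bridge edges, then fill in the rest greedily) mirrors the paper's structure, but the K\"onig step is applied to the wrong object, and this is a real gap rather than a loose end. You apply K\"onig to the \emph{vertex} bipartite graph $H_e = G[C_i,C_j]$: a large matching there gives many \emph{vertex}-disjoint bridge candidates, which, as you note, says nothing about their \emph{colors}. The bottleneck in this problem is the multiset constraint, and a wide $H_e$ could consist entirely of edges whose endpoints all carry the same two colors, in which case none of your abundance helps. The paper's key move is to run K\"onig on a bipartite graph whose sides are the \emph{colors} occurring in $C'_i$ and $C'_j$ (with an edge $cc'$ iff some transversal edge realizes that color pair, and with the $cc$ edge suppressed when $c$ has multiplicity $1$ in $M$). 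A matching of size $\geq 2k'-3$ there means $\geq 2k'-3$ pairwise color-disjoint options, so after the other $\leq 2k'-4$ bridge endpoints have committed their colors, at least one pair still fits inside $M$. Your proposed patches --- ``small branching over a handful of color-type combinations'' or a ``local rearrangement'' --- are exactly what the color-level K\"onig argument makes precise; as stated they do not bound the branching, since the number of distinct color pairs realized by a wide $H_e$ can be polynomial in $n$, not in $k$.

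A second, more minor omission: the transversal edges forming your spanning tree need not be pairwise vertex-disjoint, and a valid solution may be \emph{forced} to reuse a bridge endpoint precisely to stay within the multiplicities of $M$ (e.g., a color of multiplicity $1$ shared by two incident bridges). If your enumeration silently assumes disjoint bridge vertices, it can miss solutions. The paper handles this by additionally guessing, at cost $B_{2(k'-1)} = k^{O(k)}$, which of the up to $2(k'-1)$ bridge endpoints coincide, and then treats a coincidence as fixing one color for several incident tree edges simultaneously. You would need an analogous guess before the K\"onig step for the accounting to be sound.

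The remainder of your argument (single-clique case, filling in non-bridge vertices per color after the backbone is fixed, and the $2^k \cdot k^{O(k)} \cdot \mathrm{poly}(n)$ bookkeeping) matches the paper and is fine.
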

\begin{proof}

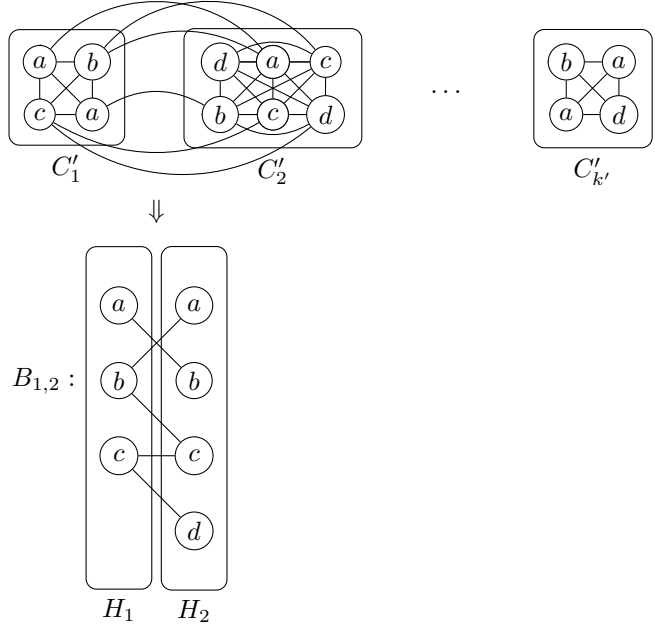
\begin{figure}[ht!]
\centering
\begin{tikzpicture}[scale=0.9,auto]

\node[vertex] (S11) {$a$};
\node[vertex, right of=S11,node distance=0.7cm] (S12)  {$b$};
\node[vertex, below of=S11,node distance=0.7cm] (S13)  {$c$};
\node[vertex, right of=S13,node distance=0.7cm] (S14)  {$a$};
\node[draw,rectangle, rounded corners,label=below:$C'_1$,fit= (S11) (S12) (S13) (S14)] (S1) {};
\foreach \i in {1,...,4} {
	\foreach \j in {\i,...,4} {
		\draw (S1\i) -- (S1\j);
	}
}

\node[vertex, right of=S12,node distance=1.7cm] (S21) {$d$};
\node[vertex, right of=S21,node distance=0.7cm] (S22)  {$a$};
\node[vertex, below of=S21,node distance=0.7cm] (S23)  {$b$};
\node[vertex, right of=S23,node distance=0.7cm] (S24)  {$c$};
\node[vertex, right of=S24,node distance=0.7cm] (S25)  {$d$};
\node[vertex, above of=S25,node distance=0.7cm] (S26)  {$c$};
\node[draw,rectangle, rounded corners,label=below:$C'_2$,fit= (S21) (S22) (S23) (S24) (S25) (S26)] (S2) {};
\foreach \i in {1,...,6} {
	\foreach \j in {\i,...,6} {
		\draw (S2\i) -- (S2\j);
	}
}
\node[vertex, fill=white,right of=S21,node distance=0.7cm] (S22)  {$a$};
\draw (S21) edge[bend left] (S26);
\node[vertex, fill=white,right of=S23,node distance=0.7cm] (S24)  {$c$};
\draw (S23) edge[bend right] (S25);

\node[right of=S26,node distance=1.6cm,yshift=-0.4cm] (dots) {$\ldots$};

\node[vertex, right of=S26, node distance=3.2cm] (Sk1) {$b$};
\node[vertex, right of=Sk1,node distance=0.7cm] (Sk2)  {$a$};
\node[vertex, below of=Sk1,node distance=0.7cm] (Sk3)  {$a$};
\node[vertex, right of=Sk3,node distance=0.7cm] (Sk4)  {$d$};
\node[draw,rectangle, rounded corners,label=below:$C'_{k'}$,fit= (Sk1) (Sk2) (Sk3) (Sk4)] (Sk) {};
\foreach \i in {1,...,4} {
	\foreach \j in {\i,...,4} {
		\draw (Sk\i) -- (Sk\j);
	}
}

\draw (S11) edge[bend left=50] (S22);
\draw (S12) edge[bend left=30] (S22);
\draw (S12) edge[bend left=50] (S26);
\draw (S14) edge[bend left=30] (S23);
\draw (S13) edge[bend right=30] (S24);
\draw (S13) edge[bend right=40] (S25);

\begin{scope}[scale=0.9,xshift=1.3cm,yshift=-4cm,auto]
\node at (0.62,1.55) {$\Downarrow$} ;
\node[vertex, inner sep=-0.26cm] (Ha) {$a$};
\node[vertex, below of=Ha] (Hb) {$b$};
\node[vertex, inner sep=-0.255cm, below of=Hb] (Hc) {$c$};
\node[inner sep=0.24cm, below of=Hc] (Hd) {};

\node[draw,rectangle, rounded corners,label=below:$H_1$,fit= (Ha) (Hb) (Hc) (Hd)] (H1) {};

\node[vertex, inner sep=-0.26cm, right of=Ha] (H2a) {$a$};
\node[vertex, below of=H2a] (H2b) {$b$};
\node[vertex, inner sep=-0.255cm, below of=H2b] (H2c) {$c$};
\node[vertex, below of=H2c] (H2d) {$d$};

\node[draw,rectangle, rounded corners,label=below:$H_2$,fit=(H2a) (H2b) (H2c) (H2d)] (H2) {};

\node[left of=Hb] () {$B_{1,2}:$};

\draw (Ha) -- (H2b);
\draw (Hc) -- (H2c);
\draw (Hc) -- (H2d);
\draw (Hb) -- (H2c);
\draw (Hb) -- (H2a);
\end{scope}

\end{tikzpicture}
\caption{The cliques $C'_1, C'_2, \ldots, C'_{k'}$, the edge interaction between $C'_1$ and $C'_2$, and the corresponding auxiliary bipartite graph $B_{1,2}$ when the multiset $M$ contains $a$ with multiplicity exactly one and $c$ with multiplicity at least $2$ (indeed, observe that the edge $cc$ is present in $B_{1,2}$ but not the edge $aa$).
}\label{fig:vcc}
\end{figure}

Let $(G=(V,E),c,M)$ be the instance and suppose that the partition into cliques $\{C_1,\ldots,C_k\}$ of the graph $G$ is given.
We remove all the vertices whose color does not belong to $M$, since they cannot be part of a solution. 
Observe also that this can only decrease the vertex clique cover number.
First, we guess in time $O^*(2^k)$ which of the cliques $\mathcal S=\{C'_1,\ldots,C'_{k'}\} \subseteq \{C_1,\ldots,C_k\}$ have a non-empty intersection with a fixed solution $R$, and we remove from $G$ the cliques which are not in $\mathcal S$.

We denote by $E(X,Y)$ the set of edges of $E$ having one endpoint in $X$ and the other in $Y$.
We call \emph{transversal edge} an edge in $E(C'_i,C'_j)$ with $i \neq j \in [k']$.
Such a transversal edge is said to have \emph{type} $\{i,j\}$.
An \emph{inner edge} is an edge which lies within the same clique $C'_i$ for some $i \in [k']$.
As $G[R]$ is connected, one may observe that there is a set $E_c \subseteq E(G[R])$ of $k'-1$ transversal edges such that between every pair of vertices $u$, $v \in R$, there is a path made only of edges in $E_c$ and inner edges.
Informally, $E_c$ is a spanning tree of the $k'$ cliques of $\mathcal S$ seen as vertices (see Figure~\ref{fig:vcc}).
More precisely, the edges of $E_c$ form a subforest of $G$.
We guess in time $O^*(k'^{2(k'-1)})$ the type of each edge in $E_c$.
We denote by $T_c$ the corresponding set of $k'-1$ types.

One may first think of the tansversal edges of $E_c$ as a matching.
Although two edges of $E_c$ leaving the same clique $C'_i$ can share the same vertex in $C'_i$.  
Actually this piece of information will prove useful for the algorithm to work.
Therefore, we also guess in time $O^*(B_{2(k'-1)})=O^*((2k')^{2k'})$ if two edges in $E_c$ of types $\{i,j\}$ and $\{i,j'\}$, happen to have a common endpoint.
One can see it the following way: among the potentially $2(k'-1)$ endpoints of the \emph{matching} $E_c$, we needed to find the correct partition into the classes of the equality relation. 
As $R$ is a solution, $M \subseteq c(C'_1 \cup \ldots \cup C'_{k'})$ holds.
Therefore, it all boils down to finding $k'-1$ transversal edges whose set of types is precisely $T_c$ and such that the multiset of colors of their at most $2(k'-1)$ endpoints is included in $M$.

For each type $\{i,j\} \in T_c$, we build the bipartite graph $B_{i,j}=(H_i \uplus H_j,F)$ where $H_i$ (resp. $H_j$) are all the colors of the vertices of $C'_i$ (resp. $C'_j$).
There is an edge in $F$ between color $c \in H_i$ and color $c' \in H_j$ whenever there is a transversal edge of type $\{i,j\}$ whose endpoint in $C'_i$ is colored by $c$ and whose endpoint in $C'_j$ is colored by $c'$.
In the special case when $c$ and $c'$ is in fact the same color \emph{and} that color appears only once in $M$, we remove the edge $cc'$ from $F$.
We indeed know that no solution will contain such a tranversal edge.
We remove all the isolated vertices of every $B_{i,j}$.
We also remove every vertex $c \in H_i$ from $B_{i,j}$ if there is a $j'$ such that we have guessed that the transversal edges of type $\{i,j\}$ and $\{i,j'\}$ share a common point and $c$ is not in the $H_i$ of $B_{i,j'}$ (it was an isolated vertex).
The rest of the algorithm is a win/win based on the classic K\"onig's theorem which states that, in a bipartite graph, the size of a minimum vertex cover is equal to the size of a maximum matching.
The core idea is that either there is a large diversity of colors for the endpoints of a transversal edge, and a suitable transversal edge can always be found at the end, or there is only a limited choice of colors for those endpoints and one can branch over those possibilities.   
By branching, we commit ourselves to find a transversal edge $uv$ whose endpoint, say, $u$ has a specific color $c$.  
In that case, we say that the endpoint $u$ has its color \emph{fixed}.
In a first step, we will branch until the endpoints of all the transversal edges are fixed (or can always be fixed). 
In a second step, we will build a solution respecting the fixed colors.  

We distinguish two cases.
Either, there is a matching $S_{i,j}$ in $B_{i,j}$ with at least $2k'-3$ edges.
Then, for any multiset of colors $M_o \subseteq M$ of size at most $2k'-4$, there is an edge $\{c,c'\}$ in $S_{i,j}$ such that $M_o \cup \{c,c'\} \subseteq M$.
Indeed, since $|S_{i,j}|>|M_o|$, there is at least one edge of $S_{i,j}$ whose endpoints are not colored by an element of $M_o$.
Recall also that there can be an edge between two vertices of the same color only if the multiplicity of that color in $M$ is at least $2$.
Therefore, whatever the multiset $M_o \subseteq M$ of colors at the endpoints of the $k'-2$ other transversal edges is, one can always find a transversal edge of type $\{i,j\}$ colored by $c$ and $c'$ such that $M_o \cup \{c,c'\} \subseteq M$.
Thus, we can forget about this particular transversal edge, and we say that the transversal edge of type $\{i,j\}$ is \emph{abundant}.

Otherwise, there is a vertex cover of $B_{i,j}$ with at most $2k'-4$ vertices. 
Note that a vertex $c \in H_i$ (resp.~$H_j$) in the graph $B_{i,j}$ corresponds to choosing color $c$ for the endpoint in $C'_i$ (resp.~$C'_j$) of the transversal edge of type $\{i,j\}$. 
Therefore, we branch on those at most $2k'-4$ possibilities of coloring one of the endpoints of the transversal edge of type $\{i,j\}$.

This describes what we do when no endpoint of the transversal edge has its color fixed.
Now, suppose we have a transversal edge of type $\{i,j\}$ such that the color of the endpoint in, say, $C'_i$ is fixed to color $c$.
If the number of neighbors of vertex $c \in H_i$ in the graph $B_{i,j}$ is at least $2k'-3$, we declare this edge abundant and no longer care about this edge.
Otherwise, if this number is at most $2k'-4$, we branch on the at most $2k'-4$ ways of coloring the endpoint in $C'_j$ of the transversal edge of type $\{i,j\}$.

Note also that when we fix the color of an endpoint in $C'_i$ of a transversal edge of type $\{i,j\}$, it also fixes the color of the endpoints in $C'_i$ of potential transversal edges of type $\{i,j'\}$ which we have guessed to share a common endpoint (in $C'_i$) with the transversal edge of type $\{i,j\}$.
Although, this potential set of transversal edges might very well be empty.
After a branching of depth at most $2k'-2$ and arity at most $2k'-4$, we reach a situation where each transversal edge is either abundant or both its endpoints have fixed colors. 
We fix the colors of the endpoints of the abundant transversal edges (which are not fixed yet) in the following way.
For each tree of the forest $E_c$, we root them arbitrarily.
We then consider an arbitrary parent of some deepest leaves.
We fix the colors of the endpoints corresponding to this parent and all its children.
We explained above why this is always possible.
We iterate this until every vertex of this tree has its color fixed.

Now, all the endpoints of the transerval edges have their color fixed.
By guessing the set $T_c$ of types of the transversal edges and whether or not two transversal edges are incident, we have in fact guessed the shape of a forest that those edges constitute in the original graph $G$.
For each tree of this abstract forest, we have to compute the actual transversal edges.
At this point, a node in this tree is naturally labeled by a pair (clique,color) $(C'_i,c)$.
We associate a subset of vertices to a node of this labeled tree in a bottom-up fashion.
Each leaf labeled by $(C'_i,c)$ is associated with the subset $J_{i,c} \subseteq C'_i$ of vertices colored by $c$ (that is, $\forall u \in C'_i$, $u \in J_{i,c} \Leftrightarrow c(u)=c$).
We associate each inner node labeled by $(C'_i,c)$ whose $r$ children are associated with sets $J_{i_1,c_1}, \ldots, J_{i_r,c_r}$ with the subset $J_{i,c} \subseteq C'_i$ of vertices colored by $c$ which have at least one neighbor in $J_{i_h,c_h}$ for each $h \in [r]$. 
When the last node $e$ of the tree gets its set $J$, this set is non empty if we have made all our guesses accordingly to solution $R$.
We define $e$ as the root of the tree.
Now, in a top-down manner we find the corresponding transversal edges.
We take in the solution an arbitrary vertex $u \in J$.
In each set associated with a child of $e$ we take arbitrarily a neighbor of $u$; and so on, up to the leaves.
By construction, this is always possible.
It is possible that while doing this process on two different trees of the forest, we take "twice" the same vertex in some $C'_i$.
This can only help since the goal is not to exceed the multiplicities of $M$.
Equivalently, we could have guessed the forest of transversal edges with the least number of connected components, to forbid this possibility.

We summarize the algorithm.\\
1) Guess the shape of the forest formed by a fixed subset $E_c$ of $k'-1$ transversal edges ensuring the connectivity between the cliques in a fixed solution $R$.\\
2) Win/win to properly guess the colors of the endpoints of $E_c$: (a) either the variety of colors is more than enough and this color can be fixed arbitrary later, or (b) the are only few choices and one can branch.\\
3) For each tree of $E_c$, find the transversal edges: one bottom-up procedure to check if there is indeed a solution and one top-down to select the actual vertices.\\ 
4) As $R$ is a solution, one can complete this to a solution by taking arbitrary (since everything is connected) vertices with the right colors. 

Observe that during step 2), we first do all the branchings advocated by (b).
Then we reach a point when no further branching is possible, and we fix the colors arbitrarily as indicated by (a).

The running time of the algorithm is $O^*(2^kk^{2k-2}(2k)^{2k}(2k-4)^{2k-2})=O^*((4\sqrt{2}k)^{6k})=O^*(k^{O(k)})$.
\end{proof}



\begin{theorem}\label{th:cocluster}
\graphm can be solved in $O^*(2^{2k \log k})$, where $k$ is the distance to co-cluster.
\end{theorem}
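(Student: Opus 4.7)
The plan is to compute a modulator $S \subseteq V$ with $|S| \leqslant k$ such that $G - S$ is a co-cluster, i.e.~a complete multipartite graph with parts $I_1, \ldots, I_p$. Such a modulator can be obtained in FPT time by branching on occurrences of the constant-size forbidden induced subgraph $K_1 \cup K_2$ characterising co-clusters, or equivalently by computing a cluster vertex deletion modulator in $\bar{G}$.

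Each part $I_j$ is an independent set in $G$ since the parts of a complete multipartite graph contain no internal edge. I would then split the analysis of any hypothetical solution $R$ into two cases depending on how $R \setminus S$ meets the parts.

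\textbf{Case A: $R \setminus S \subseteq I_j$ for some $j$ (possibly empty).} Because $I_j$ is independent in $G$, the set $S$ is a vertex cover of $G[S \cup I_j]$ of size at most $k$. For each $j \in [p]$ I would run the vertex-cover parameterized algorithm of Theorem~\ref{th:vcfpt} on the instance $(G[S \cup I_j], c, M)$, in time $O^*(2^{2k \log k})$; summing over the $p \leqslant n$ choices costs only a polynomial overhead. Every solution in $G$ of Case A type is recovered by the run corresponding to its part.

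\textbf{Case B: $R \setminus S$ uses vertices from at least two distinct parts.} The key structural observation is that any subset $T \subseteq V \setminus S$ meeting two parts induces a connected subgraph of the complete multipartite graph $G - S$, and is moreover dominated by every vertex of $V \setminus S$, so adding any extra vertex of $V \setminus S$ to $T$ preserves connectivity. Thus, for the sole purpose of connectivity, $V \setminus S$ behaves exactly like a clique once two of its parts are ``seeded''. I would guess, in $O(n^2)$ time, a pair $(u,v)$ of vertices of $R \setminus S$ lying in distinct parts, and then reuse the algorithm of Theorem~\ref{th:distclique}: guess $R \cap S$ in $O^*(2^k)$ time, and reduce to a \textsc{Colored Set Cover with Thresholds} instance whose elements index the connected components of $G[R \cap S]$, whose sets are indexed by the vertices of $V \setminus S$, and whose threshold vector is $M \setminus c(R \cap S) \setminus \{c(u),c(v)\}$, with $u$ and $v$ hardwired into the chosen cover. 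The same amortised analysis as in Theorem~\ref{th:distclique} yields a running time of $O^*(n^2 \cdot 3^k) = O^*(3^k)$ for Case B.

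Summing both cases gives an overall running time of $O^*(2^{2k \log k})$. The main conceptual step is the Case B observation that a complete multipartite graph offers the same connectivity flexibility as a clique as soon as two of its parts are touched; the main technical obstacle is enforcing this ``at least two parts'' requirement cleanly, which the explicit $O(n^2)$ seed guess resolves and which allows the distance-to-clique machinery to be reused essentially as a black box.
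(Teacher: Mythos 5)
Your proof is correct and follows essentially the same route as the paper: compute the $\overline{P_3}$-modulator, split according to whether the putative solution touches at most one or at least two parts of the co-cluster, and invoke the vertex-cover and distance-to-clique algorithms respectively, with a quadratic-time guess of a ``seed'' pair in the multi-part case. The only cosmetic difference is that the paper realizes the second case by adding all intra-part edges to $G - \{s,t\}$ and invoking Theorem~\ref{th:distclique} verbatim, whereas you inline the \textsc{Colored Set Cover with Thresholds} reduction with $u,v$ hardwired; both are implementations of the same observation that a complete multipartite graph offers the connectivity flexibility of a clique once two parts are occupied.
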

{
\begin{proof}
Let $(G=(V,E),c,M)$ be any instance of \graphm and let $R$ be a solution.
Let $X$ be a minimum subset (of size $k$) whose deletion makes the graph $G$ a co-cluster. 
Co-cluster graphs are exactly the $\overline{P_3}$-free graphs.  A $P_3$ graph is a path with 3 nodes (a $\overline{P_3}$ graph is its complement, thus one node and one edge).
We can apply a bounded-depth branching algorithm by finding a $\overline{P_3}$ and branching on which of the three vertices to put into the solution. 
This leads to an $O^*(3^k)$ algorithm to find $X$.
Let $S_1,S_2,...,S_q$ be the partition of the co-cluster graph $G[V \setminus X]$ into maximal independent sets.
The idea is to run the algorithm parameterized by the vertex cover number if at most one $S_i$ is inhabited by solution $R$, and the one parameterized by distance to clique otherwise.
Therefore, we distinguish two cases: 

\begin{enumerate}[(A)]
\item  $|\{i \in [q]$ $|$ $R \cap S_i \neq \emptyset\}| \leqslant 1$, \label{cA}
\item $|\{i \in [q]$ $|$ $R \cap S_i \neq \emptyset\}| \geqslant 2$.\label{cB}
\end{enumerate}

In case (\ref{cA}) holds, we will find a solution by solving, for each $i \in [q]$, the instance $(G[X \cup S_i],c_{|X \cup S_i},M)$.
As $X$ is a vertex cover of size $k$ in $G[X \cup S_i]$, this can be done in time $O^*(2^{2k \log k})$ by Theorem~\ref{th:vcfpt}.

In case  (\ref{cB}) holds, we can guess in time $n^2$ one vertex $s \in S_i \cap R$ and one vertex $t \in S_j \cap R$ with $i \neq j \in [q]$.
Then, we will find a solution by solving $(G'=(V \setminus \{s,t\},E'),c_{V \setminus \{s,t\}},M \setminus c(\{s,t\}))$ where $E'=(E \cup \{\{u,v\}$ $|$ $u,v \in S_a, a \in [q]\})_{|V \setminus \{s,t\}}$.
Indeed, if $Y \subseteq V \setminus \{s,t\}$ induces a connected subgraph in $G'$, then $G[Y \cup \{s,t\}]$ is connected.
As $G'-X$ is now a clique, this can be done in time $O^*(3^k)$ by Theorem~\ref{th:distclique}.
The overall running time is $O^*(3^k + q 2^{2k \log k}+n^2 3^k)=O^*(2^{2k \log k})$.
\end{proof}
}




\subsection{ETH-based lower bounds}

Here, we show that a parameterized subexponential algorithm (i.e., running in $O^*(2^{o(k)})$) solving \graphm for the parameters $k$ that we considered in this section, is unlikely.
We get those negative results as a corollary of the fact that, while trying out all the subsets of vertices obviously solves \graphm in time $O^*(2^n)$, a subexponential time algorithm (i.e., running in $2^{o(n)}$) is unlikely:

\begin{theorem}\label{thm:lower-bound-eth-main}
Under ETH, \graphm cannot be solved in time $2^{o(n)}$, even (i) on graphs with distance $1$ to cluster, and (ii) on trees.
\end{theorem}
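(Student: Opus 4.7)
Under ETH, the sparsification lemma of~\cite{ImpagliazzoETH} rules out any $2^{o(n+m)}$-time algorithm for \textsc{3-SAT} on $n$ variables and $m$ clauses. I would therefore reduce, in polynomial time, an arbitrary \textsc{3-SAT} formula to a \graphm instance whose underlying graph has $N = O(n+m)$ vertices and falls in the advertised class. Any $2^{o(N)}$-time algorithm for \graphm would then yield a $2^{o(n+m)}$-time algorithm for \textsc{3-SAT}, contradicting ETH. Two separate constructions are required, one for each of (i) and (ii).

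\textbf{For (ii) trees.} I would use the reduction from \textsc{3-SAT} to tree \graphm already underlying the NP-hardness results of~\cite{fellows2011,Ambalath2010}. The construction takes the formula and produces a rooted tree of constant depth consisting of a root vertex with a private colour (forced into every solution by its motif multiplicity one), one variable subtree per variable (a ``selector'' forcing a truth-value choice through the motif) and one clause subtree per clause (sharing colours with the variable subtrees so that the motif is realisable iff every clause has a satisfied literal consistent with the global assignment). The connectivity constraint forces the inclusion of the root, which in turn propagates the constraint to each branch. Since this produces $O(n+m)$ vertices, the ETH bound transfers.

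\textbf{For (i) distance $1$ to cluster.} Here a direct reduction is called for. Take a universal vertex $u$ coloured with a fresh $\rho$ of motif multiplicity $1$ (which forces $u$ into every solution and provides global connectivity across the otherwise disjoint cliques, so that after deleting $u$ the graph is indeed a disjoint union of cliques). Encode every variable $x_i$ by a variable clique $K_i$ carrying two ``chooser'' vertices $v_i^T,v_i^F$ for the two truth values, plus a tuned number of compensator vertices. Encode every clause $C_j$ by a clause clique $L_j$ carrying one vertex per literal of $C_j$. Colours are shared between $K_i$ and those $L_j$'s containing $x_i$ or $\neg x_i$: positive occurrences share colour $T_i$ with compensators of $K_i$, negative occurrences share $F_i$, and a further private clause colour forces the inclusion of at least one literal vertex per $L_j$. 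The multiplicities of $T_i$ and $F_i$ in the motif are set from the positive and negative occurrence counts $n_i^+,n_i^-$ of $x_i$, and the compensators inside $K_i$ are calibrated so that the multiset equation is achievable iff there is a consistent satisfying assignment. The graph so obtained has $O(n+m)$ vertices.

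\textbf{Main obstacle.} The delicate part of (i) is encoding disjunctive constraints (``at least one literal of each clause is true'' and ``all occurrences of a variable are polarity-coherent'') by a multiset \emph{equality}. This is the role of the compensator vertices inside each $K_i$: they must absorb precisely the budget of $T_i$ and $F_i$ that is \emph{not} consumed by literal-vertices in the clause cliques, so that any valid multiset solution is forced to pick clause-witnesses whose polarities agree with the truth value selected at $K_i$. Getting the multiplicities right, and then arguing both directions of the correspondence between valid \graphm solutions and satisfying assignments, is where the actual work goes; the $2^{o(n)}$ lower bound then follows immediately from the ETH via sparsification.
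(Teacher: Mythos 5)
Your overall plan---reduce in polynomial time from a problem with a known $2^{o(n)}$ ETH lower bound, then check that the constructed \graphm instance has linearly many vertices---is the right shape, but the substance of the two reductions is missing. For item~(i) you candidly write that calibrating the compensator multiplicities and proving the forward and backward correspondence ``is where the actual work goes''; that work \emph{is} the proof, and without it there is no argument. The difficulty you gesture at is real: in \graphm the multiset constraint is an exact equality, whereas in a satisfying assignment a clause may have one, two, or three true literals, so the number of colour occurrences consumed by clause gadgets is not determined by the assignment. Absorbing this slack coherently inside the variable cliques, while preventing a solution from cheating by picking inconsistent polarities in different clause cliques, is precisely the step you leave open, and it is not obviously achievable with the compensator mechanism you describe. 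The same issue silently affects your tree construction for~(ii): you invoke ``the reduction from \textsc{3-SAT} to tree \graphm already underlying the NP-hardness results of~\cite{fellows2011,Ambalath2010}'', but those papers reduce from \textsc{X3C}, not \textsc{3-SAT}, and X3C is used exactly because its exact-cover structure matches the exact multiset equality of \graphm without any slack to absorb.

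The paper avoids all of this by reducing not from \textsc{3-SAT} but from \textsc{Dominating Set} restricted to degree-$6$ graphs, for which Fomin et al.\ already established a $2^{o(n)}$ lower bound under ETH. Because a dominating set of size $t$ must hit the closed neighbourhood of every vertex and closed neighbourhoods have size at most~$7$, one gets a clean ``exactly one per gadget'' structure for free. Concretely, for each vertex $v$ of the input graph $H$ the paper creates a clique of size $|N_H[v]|\leq 8$ whose vertices carry the colours of $N_H[v]$ plus one special colour $c$, adds a hub vertex $z$ of colour $c$ adjacent to all the $c$-coloured clique vertices, and sets the motif to one copy of each $c_v$ together with $t+1$ copies of $c$. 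The resulting graph has at most $8|V(H)|+1$ vertices and becomes a cluster graph after removing $z$, giving~(i); replacing each clique by a star centred at its $c$-coloured vertex gives the tree case~(ii). Correctness is inherited verbatim from the existing NP-hardness proofs (Theorem~\ref{th:cluster} and~\cite{Ambalath2010}), so no new gadget analysis is needed. If you want to salvage a \textsc{3-SAT}-based route, the cleanest fix is to first pass through \textsc{X3C} or another exact-cover problem with a linear-size ETH-preserving reduction, rather than trying to emulate exactness with compensators.
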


\begin{proof}
Under ETH, \textsc{Dominating Set} restricted to graphs with degree $6$ is not solvable in time $2^{o(n)}$ where $n$ is the number of vertices of the input graph \cite{Fomin04}.
From a degree-$6$ graph $H$ and an integer $t$, we build an instance $\mathcal I=(G=(V,E),c:V \rightarrow \mathcal C,M)$ of \graphm such that there is a dominating set of size at most $t$ in $H$ if and only if $\mathcal I$ is a YES-instance.
First we show item (i).
There are $|V(H)|+1$ different colors in $\mathcal C$, one color $c_v$ for each vertex $v$ of $H$, and one special color $c$.
For each vertex $v$ in $H$, we introduce a clique in $G$ of size $|N_H[v]|$ ($\leqslant 8$) where one vertex is colored by the special color $c$, and the others are colored by each color of $\{c_w | w \in N_H[v]\}$.
We add a vertex $z$ colored by $c$ and link it to all the other vertices colored by $c$ (in the cliques).
The motif $M$ consists of $c$ with multiplicity $t+1$ and $c_v$ (for each $v \in V(H)$) with multiplicity $1$.
That ends the construction.
Observe that the number of vertices of $G$ is linear in $|V(H)|$ (it is at most $8|V(H)|+1$), and removing $z$ from $G$ gives a cluster graph of $|V(H)|$ cliques of size at most $8$ each.

To obtain item (ii), $G$ is transformed in the following way: each clique is replaced by a star where the center is the vertex with the special color $c$.

Those reductions are identical to the reduction showing that \graphm is hard on trees of diameter $4$ \cite{Ambalath2010} (for (ii)) and Theorem~\ref{th:cluster} (for (i)), and therefore the reader is referred to paper~\cite{Ambalath2010} for correctness.
\end{proof}

\begin{corollary}
Under ETH, for every parameter upper-bounded by $n$, \graphm cannot be solved in time $2^{o(k)}$, even on trees.
\end{corollary}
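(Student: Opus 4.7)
The corollary is an immediate consequence of Theorem~\ref{thm:lower-bound-eth-main}(ii), so the plan is just to chain the two observations together. Suppose, for the sake of contradiction, that for some graph parameter $k$ satisfying $k(G) \leqslant |V(G)|$ on every graph $G$ there were an algorithm solving \graphm in time $2^{o(k)} \cdot |I|^{O(1)}$, even when restricted to trees. The plan is to substitute $k \leqslant n$ into that running time to obtain an algorithm for \graphm on trees running in $2^{o(n)} \cdot |I|^{O(1)}$, which directly contradicts Theorem~\ref{thm:lower-bound-eth-main}(ii) under ETH.

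The only small point to spell out is that $2^{o(k)}$ with $k \leqslant n$ really does upper-bound by $2^{o(n)}$: this is because $o(k)$ denotes a function $f(k)$ with $f(k)/k \to 0$, and composing with the (non-decreasing) bound $k \leqslant n$ gives $f(k) \leqslant f(n)$ and $f(n)/n \to 0$ as $n \to \infty$, so $f(k) \in o(n)$. Hence the assumed algorithm would indeed run in $2^{o(n)} \cdot |I|^{O(1)}$ on the tree instances produced by Theorem~\ref{thm:lower-bound-eth-main}(ii), contradicting ETH.

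There is no real obstacle here, since item (ii) of Theorem~\ref{thm:lower-bound-eth-main} already hands us a subexponential lower bound in $n$ on trees, and the assumption $k \leqslant n$ is exactly what is needed to transport that lower bound to any such parameter. In particular, all the structural parameters considered in this section (vertex cover number, distance to clique, edge clique cover number, vertex clique cover number, distance to co-cluster, cluster editing number, etc.) are trivially bounded by $n$, so the corollary applies uniformly to them.
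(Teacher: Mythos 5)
Your proof is correct and matches the paper's (implicit) reasoning exactly: the corollary follows directly from Theorem~\ref{thm:lower-bound-eth-main}(ii) by substituting $k \leqslant n$ into the hypothetical $2^{o(k)}$ running time. The paper does not spell out the little-$o$ manipulation you include, but the argument is the same.
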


Among the six parameters for which we gave an FPT algorithm, two are not upper-bounded by $n$ but by $n^2$: cluster editing and edge clique cover numbers.
Though, we can observe that the graph built in item (i) of the proof of Theorem~\ref{thm:lower-bound-eth-main} has both a cluster editing of size $n$ (by removing the $n$ edges between $z$ and the $n$ other vertices colored by $c$) and an edge clique cover of size $2n$.
Therefore, for all the six parameters, a subexponential parameterized algorithm in $2^{o(k)}$ would disprove ETH.

We finally show finer lower bounds under SETH and SCH, for parameter vertex cover and distance to clique.
In particular, Theorem~\ref{th:schdtc} implies that, even though there should be an algorithm solving \graphm in time $O^*(c^k)$ with $c<8$, and $k$ being the distance to a clique, (thereby, improving over Theorem~\ref{th:distclique}), it is unlikely that $c$ goes below $2$. 

\begin{theorem}\label{th:sethvc}
Under SETH, for any $\varepsilon > 0$, \graphm cannot be solved in time $O((2-\varepsilon)^k)$, where $k$ is the vertex cover number.
\end{theorem}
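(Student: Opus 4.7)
The plan is to reduce from \textsc{Hitting Set} parameterized by the universe size. By Cygan et al.~\cite{CyganD12}, under SETH, \textsc{Hitting Set} on a universe of $n$ elements cannot be solved in $O((2-\varepsilon)^n)$ time for any $\varepsilon > 0$. If I reduce it to \graphm so that the constructed graph has vertex cover number at most $n$, then a hypothetical $O((2-\varepsilon)^k)$ algorithm for \graphm would yield such an algorithm for \textsc{Hitting Set}, contradicting SETH.

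Given an instance $(\U = \{e_1,\ldots,e_n\}, \S = \{S_1,\ldots,S_m\}, t)$ of \textsc{Hitting Set}, I will build $(G,c,M)$ as follows: $V(G) := \{r\}\cup\{v_1,\ldots,v_n\}\cup\{u_1,\ldots,u_m\}$ (a root, one \emph{element vertex} per element of $\U$, one \emph{set vertex} per set of $\S$), and $E(G) := \{r v_i : i \in [n]\} \cup \{v_i u_j : e_i \in S_j\}$. The coloring uses three distinct colors $\alpha, \beta, \gamma$ with $c(r)=\alpha$, $c(v_i)=\beta$ for all $i$, and $c(u_j)=\gamma$ for all $j$; the motif $M$ contains $\alpha$ with multiplicity $1$, $\beta$ with multiplicity $t$, and $\gamma$ with multiplicity $m$. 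Observe that $\{v_1,\ldots,v_n\}$ is a vertex cover of $G$ since every edge is incident to an element vertex, so the cover number is at most $n$.

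It then remains to establish the equivalence. For the forward direction, from any hitting set $H$ of size at most $t$ (padded arbitrarily so that $|H| = t$), the set $R := \{r\} \cup \{v_i : e_i \in H\} \cup \{u_1,\ldots,u_m\}$ satisfies $c(R)=M$, and $G[R]$ is connected since $r$ is adjacent to every chosen $v_i$ and each $u_j$ has, by the hitting property, some neighbor $v_i \in R$. Conversely, any solution $R$ must contain $r$, every $u_j$, and exactly $t$ element vertices; since the $u_j$'s form an independent set and are not adjacent to $r$, each $u_j$ can only be connected to the rest of $R$ through a chosen $v_i$, forcing $\{e_i : v_i \in R\}$ to be a hitting set of size $t$. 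The main subtlety is precisely what keeps the vertex cover down to $n$: omitting the edges between $r$ and the set vertices (and among the $u_j$'s) is what channels connectivity through the element vertices and couples the motif problem to the hitting condition. Applying the hypothetical $O((2-\varepsilon)^k)$ algorithm to the reduced instance then solves \textsc{Hitting Set} in $O((2-\varepsilon)^n)$ time, the desired contradiction.
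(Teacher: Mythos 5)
Your proof is correct and follows essentially the same strategy as the paper's: a parameterized reduction from \textsc{Hitting Set} in which the element vertices form a vertex cover of size $n$, and the motif multiplicities force the solution to contain all set vertices and exactly $t$ element vertices, so that connectivity of the set vertices is equivalent to the hitting condition. The only cosmetic difference is that you ensure connectivity among the chosen element vertices via an extra root vertex of unique color, whereas the paper simply makes the element vertices a clique (yielding a split graph) and dispenses with the root; both choices leave the vertex cover equal to the $n$ element vertices.
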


\begin{proof}
In the \textsc{Hitting Set} problem, one is given a set of sets $\mathcal S=\{S_1, \ldots, S_m\}$ over elements $\mathcal X=\{x_1, \ldots, x_n \}$, and an integer $t$, and one has to find a set $\mathcal X' \subseteq \mathcal X$ (the hitting set) of size at most $t$ such that $\forall S \in \mathcal S, S \cap \mathcal X' \neq \emptyset$.
It is known that under SETH, for any $\varepsilon>0$, \textsc{Hitting Set} is not solvable in time $O((2-\varepsilon)^n)$~\cite{CyganD12}.
From any instance $(\mathcal X, \mathcal S, t)$ of \textsc{Hitting Set} with $n$ elements, we construct an equivalent instance $(G=(V,E),c,M)$ of \graphm where the graph $G$ has a vertex cover of size $n$.
We create one vertex $v(x_i)$ for each element $x_i$ of $\mathcal X$ and one vertex $v(S_j)$ for each set $S_j$ of $\mathcal S$.
The \emph{element} vertices (the $v(x_i)$'s) are colored by $1$ and form a clique, while the \emph{set} vertices (the $v(S_j)$'s) are colored by $2$ and constitute an independent set.
We link an \emph{element} vertex to a \emph{set} vertex if the corresponding element is in the corresponding set; that is, $v(x_i)v(S_j) \in E \Leftrightarrow x_i \in S_j$.
Therefore, $G$ is the adjacency split graph of the set-system $(\mathcal X,\mathcal S)$ where the \emph{element} vertices are the clique.
$M$ contains $1$ with multiplicity $t$ and $2$ with multiplicity $m$. 
Observe that the set of all the \emph{element} vertices is a vertex cover of $G$ of size $n$.

If $\mathcal X'=\{x_{a_1}, \ldots, x_{a_t}\}$ is a solution (potentially, add arbitrary elements to get a solution with \emph{exactly} $t$ elements) to the hitting set instance, then $R:=\bigcup_{j \in [m]}v(S_j) \cup \{v(x_{a_1}), \ldots, v(x_{a_t})\}$ (obtained by taking all the \emph{set} vertices and the $t$ \emph{element} vertices corresponding to the elements of $\mathcal X'$) satisfies the multiset constraint. 
Also, the subgraph $G[R]$ is indeed connected by the definition of a hitting set, and the fact that $\{v(x_{a_1}), \ldots, v(x_{a_t})\}$ is a clique. 

Conversely, let $R \subseteq V$ be a solution for the constructed instance of \graphm. 
By the multiset constraint, $R$ should contain all the vertices colored by $2$: $\bigcup_{j \in [m]}v(S_j)$, and $t$ vertices colored by $1$: $\{v(x_{a_1}), \ldots, v(x_{a_t})\}$.
We claim that $\mathcal X' := \{x_{a_1}, \ldots, x_{a_t}\}$ is a hitting set (of size $t$).
Indeed, if a set $S_j$ was not hit by $\mathcal X'$, then the \emph{set} vertex $v(S_j)$ would not be connected to the clique $\{v(x_{a_1}), \ldots, v(x_{a_t})\}$, and $G[R]$ would have at least $2$ connected components.
\end{proof}

\begin{theorem}\label{th:schdtc}
Under SCH, for any $\varepsilon > 0$, \graphm cannot be solved in time $O((2-\varepsilon)^k)$, where $k$ is the distance to clique.
\end{theorem}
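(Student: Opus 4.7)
The plan is to mirror the proof of Theorem~\ref{th:sethvc}, but to reduce from \textsc{Set Cover} (instead of \textsc{Hitting Set}) and to swap the roles of the clique and the independent set, so that the distance-to-clique parameter plays the role previously taken by the vertex cover. Recall that SCH asserts there is no $O((2-\varepsilon)^n)$ algorithm for \textsc{Set Cover} on instances with $n$ elements, regardless of the number $m$ of sets. Hence the parameter in the \graphm instance must be bounded by $n$, not by $m$.

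Given an instance $(\mathcal U, \mathcal S, t)$ of \textsc{Set Cover} with $|\mathcal U|=n$ and $|\mathcal S|=m$, I would build a graph $G$ with one vertex $v(x_i)$ per element $x_i \in \mathcal U$ and one vertex $v(S_j)$ per set $S_j \in \mathcal S$. The \emph{set} vertices are turned into a clique, the \emph{element} vertices form an independent set, and $v(x_i)v(S_j) \in E$ iff $x_i \in S_j$. Color the element vertices with color $1$, the set vertices with color $2$, and let the motif $M$ contain color $1$ with multiplicity $n$ and color $2$ with multiplicity $t$. Removing the $n$ element vertices leaves the clique of set vertices, so the distance to clique of $G$ is at most $n$.

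The equivalence between the two instances is straightforward. If $\{S_{a_1},\ldots,S_{a_t}\}$ is a set cover, then $R := \{v(x_i) : i\in[n]\} \cup \{v(S_{a_j}) : j\in[t]\}$ obviously matches $M$, the set vertices in $R$ form a clique, and every element vertex is connected to at least one chosen set vertex (since every element is covered), so $G[R]$ is connected. Conversely, any solution $R$ must take all $n$ color-$1$ vertices and exactly $t$ color-$2$ vertices; connectivity of $G[R]$ forces the chosen set vertices to hit every element, yielding a set cover of size $t$. Plugging a hypothetical $O((2-\varepsilon)^k)$ algorithm for \graphm with $k=$ distance to clique gives an $O((2-\varepsilon)^n)$ algorithm for \textsc{Set Cover}, contradicting SCH.

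There is no real obstacle here, since the construction almost copies the one in Theorem~\ref{th:sethvc}; the only subtle point is to verify that the roles have been swapped correctly so that $k\leqslant n$ (and not $k\leqslant m$), which is exactly what is achieved by putting the set vertices into the clique rather than the element vertices. Note also that everything can be produced in polynomial time in $n+m$, which is enough to translate a subexponential-in-$n$ algorithm for the target problem into one for \textsc{Set Cover}.
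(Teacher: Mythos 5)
Your proposal is correct and matches the paper's proof essentially verbatim: both build the adjacency split graph of the set system with the set vertices forming the clique and the element vertices the independent set, color them $2$ and $1$ respectively, and set the motif to contain color $1$ with multiplicity $n$ and color $2$ with multiplicity $t$. The paper, like you, observes the analogy with Theorem~\ref{th:sethvc} and notes that removing the $n$ element vertices leaves a clique, so the distance-to-clique bound is exactly as you derived.
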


\begin{proof}
From an instance of \textsc{Set Cover} with $n$ elements, we build an equivalent instance of \graphm where the distance from the graph to a clique is $n$.
Again, we create one vertex for each element and one vertex for each set.
The \emph{element} vertices are colored by $1$ and constitute an independent set, while the \emph{set} vertices are colored by $2$ and form a clique.
We link each \emph{element} vertex to each \emph{set} vertex if the corresponding element is in the corresponding set.
The graph is the adjacency split graph where the \emph{set} vertices are the clique.
$M$ contains $1$ with multiplicity $n$ and $2$ with multiplicity $t$. 
The removal of the set of all the \emph{element} vertices (of size $n$) would leave a clique. 
The correctness of the reduction is similar to the one of Theorem~\ref{th:sethvc}.
\end{proof}


\section{Parameters for which \graphm is hard}\label{sec:hard}

In this section, we provide several parameters for which \graphm is not in $\xp$, unless $\p = \np$. 
In other words, the problem is $\np$-hard even for fixed values of the parameter. 
We also prove that the problem remains $\wone$-hard for parameter max leaf number. 
Figure~\ref{fig:recap} summarizes these results.



\subsection{Deletion set numbers}

We study parameters which correspond to the minimum number of vertices to remove to make the graph belong to a restricted class.
We will show that \graphm remains $\np$-hard for constant values of those parameters. 
More precisely, the colorful restriction of \graphm is hard even if we can obtain a set of disjoint paths by removing $1$ vertex, a cluster graph by removing $1$ vertex, and an acyclic graph by removing $0$ edge. 

\begin{theorem}[\cite{fellows2011}] \graphm is $\np$-hard even when $G$ is a tree of maximum degree 3 and the motif is colorful.
\end{theorem}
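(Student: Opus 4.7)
The plan is to reduce from a bounded-occurrence variant of 3-\textsc{SAT} (which remains $\np$-hard; the bounded occurrences are needed to respect the max-degree-3 constraint). Given such a formula on variables $x_1,\ldots,x_n$ and clauses $C_1,\ldots,C_m$, I would build a tree $T$, a coloring $c$ and a colorful motif $M$ such that the formula is satisfiable iff $(T,c,M)$ has a \graphm solution. Semantically, a colorful motif is a ``pick exactly one vertex per color'' constraint, which lines up perfectly with the ``pick one witness per clause'' and ``pick one truth value per variable'' structure of \textsc{SAT}.

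Concretely, $T$ would be a ``backbone plus variable gadgets'' tree. Along the backbone, each variable $x_i$ has its own hub $u_i$ of degree at most 3 (its backbone neighbor(s) plus one off-branch); every backbone and hub vertex receives a private color that appears once in $M$, so the skeleton is forced into any solution and guarantees global connectivity. The off-branch at $u_i$ leads to two siblings $T_i$ and $F_i$ that \emph{share a single private color} $c_i^{\text{choice}}$ included exactly once in $M$: since a solution must have exactly one vertex of color $c_i^{\text{choice}}$, this device forces precisely one of $T_i,F_i$ into any solution, encoding a truth value for $x_i$. Beneath $T_i$ (resp.\ $F_i$) I hang the positive (resp.\ negative) literal-occurrences of $x_i$ as leaves; each clause $C_j$ has a unique color $c_{C_j}$ appearing once in $M$, and the leaves colored $c_{C_j}$ are exactly the literal-occurrences of $C_j$ across the variable gadgets. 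By colorfulness, the solution must pick exactly one such leaf per clause, and by tree-connectivity that leaf must lie under whichever of $T_i,F_i$ was selected.

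The main technical obstacle is keeping the maximum degree at 3 while each variable may have several literal occurrences, without introducing ``intermediate'' colors that would either force both of $T_i,F_i$ into the solution or leave spurious colors dangling and violate the multiset constraint. Starting from a variant where each variable has at most two positive and two negative occurrences makes every $T_i$ and $F_i$ of degree at most 3 (one edge to $u_i$ plus at most two literal leaves), so the construction is degree-legal and no intermediate padding is required. Correctness is then a routine two-direction check: a satisfying assignment yields a solution by taking the whole skeleton, the $T_i$'s (resp.\ $F_i$'s) of the \textsc{True} (resp.\ \textsc{False}) variables, and one satisfying literal-leaf per clause; conversely, any \graphm solution induces a truth assignment via its $T_i$/$F_i$ choices, which satisfies every clause because of the witness leaf it provides, and consistency of the assignment (no variable receives both truth values) is automatic from the shared-color device on $(T_i,F_i)$.
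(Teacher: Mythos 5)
The paper cites this theorem from Fellows et al.~(2011) without reproducing a proof, so there is no in-paper argument to compare your proposal against. That said, your blind reduction from bounded-occurrence 3-\textsc{SAT} is correct in substance and is in the same spirit as the standard proof of this result. The mechanism is right: a colorful motif is exactly a ``one vertex per color'' constraint, so the shared color on the pair $T_i,F_i$ encodes a truth value for $x_i$, the shared clause color on the literal-occurrence leaves forces exactly one witness per clause, and tree-connectivity forces that witness to lie below the selected one of $T_i,F_i$, so the induced assignment satisfies every clause. Starting from a variant with at most two positive and two negative occurrences per variable is a legitimate $\np$-hard source and keeps every degree at most 3.

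One bookkeeping point should be stated explicitly to close the argument. As you describe it, the common parent of the siblings $T_i$ and $F_i$ (an auxiliary branch vertex, needed so that $u_i$ itself has degree at most 3) is neither a ``backbone'' nor a ``hub'' vertex, yet it must also be forced into every solution: if its color were absent from $M$ it could not be taken, and $T_i,F_i$ would be disconnected from the spine, so the shared color $c_i^{\text{choice}}$ could never be collected. The fix is one line --- give that branch vertex its own private color occurring once in $M$, exactly as you do for the backbone and hubs --- but your sentence assigning forced colors only to ``every backbone and hub vertex'' does not literally cover it. Once that is made explicit, both directions you sketch go through as stated.
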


\begin{corollary}
\graphm is $\np$-hard even for graphs with feedback edge set number 0 and when the motif is colorful.
\end{corollary}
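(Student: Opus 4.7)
The plan is to observe that the cited theorem already provides everything needed, so the corollary reduces to a single syntactic remark about graph parameters. Specifically, a tree is by definition acyclic, so no edge needs to be removed from it in order to destroy all its cycles; equivalently, its feedback edge set number is $0$. Thus, the class of graphs on which the cited theorem establishes $\np$-hardness (trees of maximum degree $3$ with a colorful motif) is a subclass of the class of graphs with feedback edge set number $0$ and colorful motif.

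Concretely, I would phrase the argument in one sentence: given any instance $(G,c,M)$ produced by the reduction of the cited theorem, $G$ is a tree, hence has feedback edge set number $0$; since the reduction already yields a colorful motif, the hardness transfers directly. No new reduction is needed and there is no real obstacle, since the only thing to check is the monotonicity of the hardness statement under restricting from "tree of maximum degree $3$" to the strictly larger class "graph with feedback edge set number $0$".
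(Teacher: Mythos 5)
Your proposal is correct and matches the paper's (implicit) reasoning exactly: the paper presents this as an immediate corollary of the cited theorem on trees, and the only observation needed is precisely yours — that trees are acyclic, hence have feedback edge set number $0$.
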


\begin{theorem}\label{thm:disjoint-paths}
\graphm is $\np$-hard even (i) for graphs with distance $1$ to disjoint paths and when the motif is colorful and (ii) for graphs with bandwidth $6$ and when the motif is colorful.
\end{theorem}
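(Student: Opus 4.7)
I would prove both items by explicit polynomial-time reductions from an NP-hard problem with exact-cover flavor, producing \graphm instances whose motif is colorful by construction. A natural starting point is \pxtc (already used elsewhere in this paper). In each case the main work is to design a colored graph whose structure (distance $1$ to disjoint paths for (i), bandwidth $6$ for (ii)) is respected, and whose connected colorful sub-motif selections are in bijection with exact $3$-covers of the source instance.

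For (i), I would build a \emph{spider}: a hub vertex $v$ to which, for each set $S_i=\{x_a,x_b,x_c\}$ of the \pxtc instance, a short path-gadget $P_i$ is attached at one endpoint. The spider removes to a disjoint union of paths when $v$ is deleted, so the distance to disjoint paths is $1$. Every element $x_j$ gets a private color $c_j$ used once in the motif, while each $P_i$ carries three vertices $a_i,b_i,c_i$ colored by $c_a,c_b,c_c$ preceded by one or two ``plug'' vertices in dedicated colors. The motif contains the hub color $c_v$, every element color (once each), and a carefully chosen set of plug colors. The design goal for the plugs is to force an ``all-or-nothing'' choice on each path: since connectivity requires prefixes of $P_i$ starting from $v$, and since colorfulness forbids a partial prefix of one path from colliding with the element colors of another, a valid solution must consist of $q$ fully taken paths with pairwise disjoint element color sets -- exactly an exact $3$-cover.

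For (ii), I would reshape the construction from (i) into a \emph{ribbon}: a backbone path with short side-gadgets attached at bounded distance. Since a spider in general does not have bounded bandwidth, I would either reduce from a version of \pxtc with naturally linear structure (e.g.\ instances whose incidence graph has bounded pathwidth) or from a tiling/\textsc{Narrow 3-Sat} problem. Each gadget encodes a local decision and occupies a constant number of consecutive positions along the backbone, with element colors locally duplicated so that no gadget needs to reach more than $6$ positions away. Colorfulness on the element colors enforces global consistency; every bookkeeping color is private to a single gadget, so it lives at a single layout position and contributes no long edges.

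\textbf{Main obstacle.} The core difficulty in both cases is reconciling colorfulness with a combinatorial counting task: colorfulness rules out the standard trick of giving all selectors a common color with motif-multiplicity~$q$, so the fact that ``exactly $q$ sets are selected'' must be encoded purely through connectivity. In (i), the delicate step is designing the plug colors so that a non-empty prefix of a path-gadget must in fact be the full path, eliminating spurious partial selections; in (ii), the same all-or-nothing gadget must additionally fit within a bandwidth-$6$ layout, which I expect to require local duplication of element-colored vertices inside each set-gadget while still letting the colorful motif globally enforce that each element of $X$ is covered exactly once.
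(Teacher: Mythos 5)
There is a genuine gap in your plan for item (i), and your plan for item (ii) drifts away from what is actually needed. For (i), you correctly identify that the crux is an all-or-nothing mechanism: with a colorful motif you cannot put a ``selector'' color with multiplicity $q$, so something else must force a non-empty prefix of a set-gadget to be the \emph{entire} gadget. But your proposed fix---one path $P_i$ per set, with ``plug'' vertices in dedicated colors---does not achieve this. Whatever prefix of $P_i$ you take, it consumes some plug and some of $S_i$'s element colors; nothing forbids taking the plug plus only one or two of the three element vertices, as long as the remaining element colors are scavenged from prefixes of other paths. Colorfulness only forbids double-covering an element, not partial coverage, so spurious solutions built from partial prefixes of several gadgets can exist. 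The paper's key idea, which you are missing, is to use \emph{two} paths per set $S_i$: a long path $a_i^1$--$(x_a,x_b,x_c)$--$b_i^1$ and a short dummy path $a_i^2$--$b_i^2$, where $a_i^1,a_i^2$ share color $i$ and $b_i^1,b_i^2$ share color $|\S|+i$. Exactly one $a_i$ and one $b_i$ is taken; and reaching $b_i^2$ requires $a_i^2$, which is blocked if $a_i^1$ is taken, so taking $a_i^1$ forces the entire long path up to $b_i^1$, including all three element vertices. That pairing of two vertices with the same color at the two ends of competing branches is exactly the ``all-or-nothing'' device; your single-branch-plus-plugs design has no analogue of it.

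For (ii), you propose to change the source problem (to some linearly structured X3C variant or to a tiling/narrow SAT), which introduces a new and unproven NP-hardness claim, and would still leave the all-or-nothing issue unresolved. The paper avoids all of this: it simply ``stretches'' the high-degree root $r$ of the construction from (i) into a spine path $r_1^1 r_1^2 r_2^1 r_2^2 \cdots r_{|\S|}^1 r_{|\S|}^2$, attaches $a_i^j$ to $r_i^j$, and gives each spine vertex a fresh unique color so that the whole spine is forced into any solution. The result is a comb graph with teeth of length at most $6$, hence bandwidth at most $6$, and the correctness argument from (i) goes through unchanged because the spine acts as a guaranteed connected backbone. This is a small structural modification of the construction rather than a wholesale redesign, and it sidesteps the bandwidth obstruction you anticipate without needing any new source problem.
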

{
\begin{proof}
We will detail only (i). We propose a reduction from \textsc{Exact Cover by 3-Sets} (\pxtc). This special case of \textsc{Set Cover} is known to be $\np$-complete. Recall that \pxtc is stated as follows, given a set $X = \{x_1,x_2,\dots,x_{3q}\}$ and a collection $\S = \{S_1,\dots,S_{|\S|}\}$ of 3-elements subsets of $X$, the goal is to decide if $\S$ contains a subcollection $\T \subseteq \S$ such that each element of $X$ occurs in exactly one element of $\T$. The size of $X$ must be a multiple of three since a solution is a set of triplets where each element of $X$ must appear exactly once.

Let us now describe the construction of an instance $\I'=(G=(V,E),c,M)$ of \graphm from an arbitrary instance $\I = (X,\S)$ of \pxtc (see also Figure~\ref{fig:X3Cconstruction}). The graph $G=(V,E)$ is built as follows: there is a distinct root $r$, for each $S_i \in \S$, there are two paths built from $r$, the first one is made of a node $a_i^1$, three nodes representing the elements in $S_i$ and a node $b_i^1$, the other one is made of two nodes $a_i^2$ and $b_i^2$. The graph is thus a tree such that removing $r$ gives a collection of $2|\S|$ paths.

The set of colors is $\C = \{ 1,2,\ldots, 2|\S|+3q+1 \} $. The coloration of $G$ is such that $c(a_i^1) = c(a_i^2) = i$ and $c(b_i^1) = c(b_i^2) = |\S|+i$ for $1 \leq i \leq |\S|$, the $3q$ colors $2|\S|+1,\ldots,2|\S|+3q$ are assigned to vertices corresponding to $X$, and $c(r) = 3q+2|\S|+1$. The motif is equal to the set of colors and is thus colorful. This construction is clearly done in polynomial time in regards of $\I$.

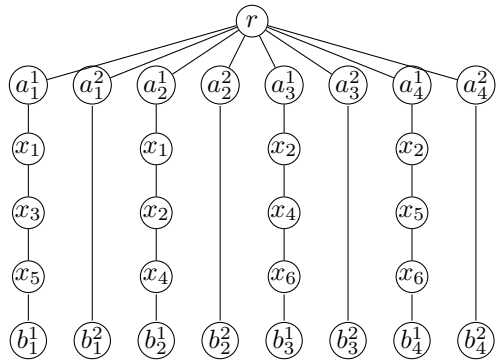
\begin{figure}[ht!]
\begin{center}
\begin{tikzpicture}[scale=.85,>=stealth,shorten <=.5pt,shorten >=.5pt]
\tikzstyle{vertex}=[circle,fill=black!0,minimum size=12pt,inner sep=0pt]

	\node[vertex,draw] (r) at (3.5,2) {$r$};
	\node[vertex,draw] (a11) at (0,1) {$a_1^1$};
	\node[vertex,draw] (11) at (0,0) {$x_1$};
	\node[vertex,draw] (13) at (0,-1) {$x_3$};
	\node[vertex,draw] (15) at (0,-2) {$x_5$};
	\node[vertex,draw] (b11) at (0,-3) {$b_1^1$};
	\node[vertex,draw] (a12) at (1,1) {$a_1^2$};
	\node[vertex,draw] (b12) at (1,-3) {$b_1^2$};
	\node[vertex,draw] (a21) at (2,1) {$a_2^1$};
	\node[vertex,draw] (21) at (2,0) {$x_1$};
	\node[vertex,draw] (22) at (2,-1) {$x_2$};
	\node[vertex,draw] (24) at (2,-2) {$x_4$};
	\node[vertex,draw] (b21) at (2,-3) {$b_2^1$};
	\node[vertex,draw] (a22) at (3,1) {$a_2^2$};
	\node[vertex,draw] (b22) at (3,-3) {$b_2^2$};
	\node[vertex,draw] (a31) at (4,1) {$a_3^1$};
	\node[vertex,draw] (32) at (4,0) {$x_2$};
	\node[vertex,draw] (34) at (4,-1) {$x_4$};
	\node[vertex,draw] (36) at (4,-2) {$x_6$};
	\node[vertex,draw] (b31) at (4,-3) {$b_3^1$};
	\node[vertex,draw] (a32) at (5,1) {$a_3^2$};
	\node[vertex,draw] (b32) at (5,-3) {$b_3^2$};
	\node[vertex,draw] (a41) at (6,1) {$a_4^1$};
	\node[vertex,draw] (42) at (6,0) {$x_2$};
	\node[vertex,draw] (45) at (6,-1) {$x_5$};
	\node[vertex,draw] (46) at (6,-2) {$x_6$};
	\node[vertex,draw] (b41) at (6,-3) {$b_4^1$};
	\node[vertex,draw] (a42) at (7,1) {$a_4^2$};
	\node[vertex,draw] (b42) at (7,-3) {$b_4^2$};
	
	\path[draw] 
      		(r) -- (a11) -- (11) -- (13) -- (15) -- (b11);
	\path[draw] 
      		(r) -- (a21) --  (21) -- (22) -- (24)-- (b21);
	\path[draw] 
      		(r) -- (a31) -- (32) -- (34) -- (36)-- (b31);
	\path[draw] 
      		(r) -- (a41) -- (42) -- (45) -- (46)-- (b41);
	\path[draw] 
      		(r) -- (a42) -- (b42);
	\path[draw] 
      		(r) -- (a32) -- (b32);
	\path[draw] 
      		(r) -- (a22) -- (b22);
	\path[draw] 
      		(r) -- (a12) -- (b12);
\end{tikzpicture}
\end{center}
\caption{The graph $G$ built from $X = \{x_1,x_2, \dots ,x_6\}$ (thus with $q=2$) and $\S = \{\{x_1,x_3,x_5\},\{x_1,x_2,x_4\},\{x_2,x_4,x_6\},\{x_2,x_5,x_6\}\}$.}\label{fig:X3Cconstruction}
\end{figure}

Let us now prove that if there is a solution for an instance $\I$ of \pxtc, then there is solution for the instance $\I'$ of \graphm. Given a solution $\T \subseteq \S$ for $\I$, a solution $P$ for $\I'$ is built as follows: take the root, for each $S_i \in \T$, take the whole path from $a_i^1$ to $b_i^1$, and for each $S_i \notin \T$, take the path $a_i^2b_i^2$. Informally speaking, for each set, either the set is in $\T$ and thus the path with the nodes corresponding to the elements is taken, otherwise the path with only two nodes is taken. By definition of a solution for $\I$, each color $2|\S|+1,\ldots,2|\S|+3q$ is taken only once, and for each color $1,\ldots,2|\S|$, exactly one of the two occurrences is taken. The root is also taken and thus the solution is connected.

Conversely, let us now prove that there is a solution for the instance $\I$ of \pxtc if there is a solution for the instance $\I'$ of \graphm. First observe that the root $r$ must be in the solution since it is the only node with this color. Also, for each $1 \leq i \leq |\S|$, either $a_i^1$ or $a_i^2$ must be in the solution since it is the only node with color $i$. The same holds for $b_i^1$ and $b_i^2$. Also, observe that if $a_i^1$ is in the solution, then $b_i^1$ must also be in the solution, with the three element nodes along the path. Indeed, if it is not the case, the color $c(b_i^1)$ will never be in the solution since the only other node with this color is $b_i^2$. However, in order to add $b_i^2$ in the solution, $a_i^2$ must be in the solution to respect the connectivity constraint, which is impossible since $c(a_i^1) = c(a_i^2)$. Therefore, either the three element nodes corresponding to a set $S_i \in \S$ are entirely in the solution $P$, or none are.
The solution is built as follows: $\T = \{S_i : a_i^1 \in P \}$. Since $P$ is a solution, colors of $P$ appear exactly once. Therefore, each element of $X$ appears exactly once in $\T$.

For (ii), we slightly modify the graph $G$. 
Instead of having one vertex $r$ linked to each $a_i^j$ (for $i \in [|\mathcal S|]$ and $j \in [2]$), we now have a path $R=r_1^1r_1^2r_2^1r_2^2 \ldots r_{|\mathcal S|}^1r_{|\mathcal S|}^2$, and for each $i \in [|\mathcal S|]$ and $j \in [2]$, there is an edge between $r_i^j$ and $a_i^j$.
We call that new graph $H$. 
We may observe that $H$ is a comb graph whose spine is $R$.
The set of colors is now $\mathcal C=[4|\mathcal S|+3q]$.
All the vertices in $G - r$ keep the same colors, and for each $i \in [|\mathcal S|]$ and $j \in [2]$, $c(r_i^j)=2|\mathcal S|+3q+2(i-1)+j$.
In other words, we give a fresh and distinct color to each vertex of $R$.
Again, the motif $M$ is the entire set of colors $\mathcal C$. 
The correctness is the same as for (i), since all vertices of $R$ must be in any solution because they are the only occurrences of their respective color.
Since the maximal paths having exactly one vertex in the spine $R$, called \emph{teeth}, are of length at most $6$, the bandwidth of $H$ is bounded by $6$, too.
Indeed, one can number the vertices increasingly tooth by tooth.
A more careful analysis shows that the bandwidth of $H$ is actually $5$.
\end{proof}

Actually, one could also follow the reduction of \cite{Cygan12} but start from a version of \textsc{Sat} where each literal appears in at most two clauses.
This variant is also $\np$-complete, and the graph produced would have bandwidth $4$. 

}

\begin{theorem}\label{th:cluster}
\graphm is $\np$-hard even for graphs with distance $1$ to cluster and when the motif is colorful.
\end{theorem}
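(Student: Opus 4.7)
The plan is to adapt the colorful \pxtc reduction used in Theorem~\ref{thm:disjoint-paths}, replacing path gadgets by clique gadgets and introducing a ``slot'' mechanism to enforce that exactly $q$ sets are chosen. Given an instance $(X = \{x_1,\ldots,x_{3q}\},\ \S = \{S_1,\ldots,S_{|\S|}\})$ of \pxtc, I would construct $G$ as follows. Add a root vertex $r$ with a fresh color $c_r$. For every pair $(i,k) \in [|\S|] \times [q]$, create a $4$-clique $K_i^k$ on vertex set $\{\sigma_i^k,v_{i,a}^k,v_{i,b}^k,v_{i,c}^k\}$, where $S_i = \{x_a,x_b,x_c\}$, coloring the selector $\sigma_i^k$ with the slot color $c_k^{\mathrm{sel}}$ (shared across all $i$ for a fixed $k$) and each element vertex $v_{i,j}^k$ with the element color $c_{x_j}$. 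Finally, add an edge between $r$ and every selector $\sigma_i^k$. Removing $r$ leaves the disjoint union of the $K_i^k$'s, so the distance to cluster is exactly $1$. The motif $M = \{c_r\} \cup \{c_1^{\mathrm{sel}},\ldots,c_q^{\mathrm{sel}}\} \cup \{c_{x_1},\ldots,c_{x_{3q}}\}$ has $1+4q$ pairwise distinct elements and is therefore colorful.

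Next I would check both directions. Forward: given an exact cover $\T = \{S_{i_1},\ldots,S_{i_q}\}$, set $R := \{r\} \cup \bigcup_{k=1}^q V(K_{i_k}^k)$; the multiset of colors of $R$ matches $M$ because $\T$ covers $X$ exactly once, and $G[R]$ is connected since $r$ is adjacent to each $\sigma_{i_k}^k$, which in turn is adjacent inside its clique to the three element vertices of $K_{i_k}^k$. Backward: let $R$ be any solution. Then $r \in R$ (unique vertex of color $c_r$), exactly one selector $\sigma_{i(k)}^k$ per slot $k \in [q]$ lies in $R$, and exactly one vertex of color $c_{x_j}$ lies in $R$ for each $j \in [3q]$. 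The key observation is that an element vertex $v_{i,j}^k$ has no neighbor outside $K_i^k$, so its only route to $r$ in $G[R]$ passes through $\sigma_i^k$; connectivity of $G[R]$ therefore forces $\sigma_i^k \in R$, i.e.\ $i = i(k)$. Hence every $x_j$ belongs to some $S_{i(k)}$, so $X \subseteq \bigcup_{k=1}^q S_{i(k)}$. Since $|X| = 3q = \sum_{k=1}^q |S_{i(k)}|$, the sets $S_{i(1)},\ldots,S_{i(q)}$ must be pairwise disjoint (and thus pairwise distinct), so $\{S_{i(k)} : k \in [q]\}$ is an exact cover of $X$.

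The main obstacle, and the reason a direct translation of the path construction of Theorem~\ref{thm:disjoint-paths} does not work, is that inside a single clique the two ``endpoint'' vertices are already adjacent, so selecting them no longer forces the intermediate element vertices into the solution. The slot mechanism circumvents this: by duplicating each set-clique across $q$ slots and using $q$ distinct slot colors, the colorful motif itself forces exactly $q$ selectors (one per slot) to appear in any solution, and the connectivity argument above then propagates each such choice to a specific set $S_{i(k)}$, recovering the exactness required by \pxtc.
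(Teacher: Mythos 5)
Your proof is correct. It is worth comparing it with what the paper does, because the two take genuinely different routes to a structurally similar end-product. The paper's proof is a one-sentence modification of the reduction of Ambalath et al.\ from \textsc{Colorful Set Cover} to \graphm on superstars (trees of diameter at most~4): replace each star-subtree representing a set $S_i$ by a clique on $|S_i|+1$ vertices, keeping the colors, and observe that removing the root now yields a disjoint union of cliques and that the correctness argument carries over verbatim (the attachment vertex of each clique is still the only one adjacent to the root, so it must be selected whenever any other vertex of that clique is). Your proof is self-contained and starts from \pxtc instead: you build the ``one choice per color class'' mechanism of \textsc{Colorful Set Cover} from scratch by introducing $q$ slots with fresh slot colors, duplicating each set-clique once per slot, and letting the colorful motif force exactly one selector per slot. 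The resulting graph has the same shape (a root joined to a disjoint union of cliques, giving distance~$1$ to cluster), and your backward-direction argument (element vertices can only reach $r$ through the unique selector in their clique, hence the $q$ selected cliques cover $X$; a size count then forces disjointness and distinctness) is the same kind of connectivity argument the paper invokes implicitly. What your version costs is size, namely $\Theta(q|\S|)$ cliques instead of $|\S|$, which is of course still polynomial and harmless. What it buys is independence from the details of Ambalath et al.'s \textsc{Colorful Set Cover} reduction, making the hardness proof readable without chasing a reference. Your remark about why a naive cliquification of the Theorem~\ref{thm:disjoint-paths} gadget would fail (the two path endpoints become adjacent inside a clique, so intermediate element vertices are no longer forced in) is a correct observation, but note that this is not the construction the paper cliquifies; it modifies the superstar gadget of Ambalath et al., where the attachment vertex alone carries the forcing and the issue you identify does not arise.
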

{
\begin{proof}
To prove this theorem, one can use the reduction from \textsc{Colorful Set Cover} to \graphm where the input graph is a tree of diameter at most 4 (called superstar)~\cite{Ambalath2010}. 
The idea is just to replace each subtree representing a set $S_i$ by a clique of size $|S_i|+1$. 
Removing the root of the former superstar in this new graph yields a disjoint union of cliques and the rest of the proof carries over.
\end{proof}
}


\subsection{Dominating set number}

Being given a small dominating set of the graph cannot help in solving \graphm.
For any instance $(G=(V,E),c,M)$, one may add a universal new vertex $v$ to $G$, and color it with a color which does not appear in motif $M$.
The minimum dominating set $\{v\}$ is of size~$1$.
Vertex $v$ cannot be part of the solution due to its color, so answering the new problem is as hard as solving the original instance.
However, this could be considered as cheating since a vertex whose color is not in $M$ can immediately be discarded from the graph.
We show that even when $\forall v \in V$, $c(v) \in M$, graphs with dominating set of size $2$ can be hard to solve.

\begin{theorem}\label{th:ds}
\graphm is $\np$-hard even for graphs with a minimum dominating set of size~$2$ and when the motif is colorful.
\end{theorem}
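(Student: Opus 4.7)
I would reduce from \pxtc, adapting the construction in the proof of Theorem~\ref{thm:disjoint-paths}. Given an X3C instance $(X,\mathcal{S})$, I first apply a trivial padding step that duplicates one arbitrary set $S_1$ (the resulting instance has the same X3C answer). I then build the graph $G_0$ of Theorem~\ref{thm:disjoint-paths}, with root $r$, ``path-$1$'' gadgets $r\,a_i^1\,e_i^1\,e_i^2\,e_i^3\,b_i^1$ and ``path-$2$'' gadgets $r\,a_i^2\,b_i^2$ for every $S_i$, and its colourful motif $M_0$.

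Next I augment $G_0$ with two new vertices $u_1,u_2$ joined by an edge, splitting the domination between them. Specifically, $u_1$ is made adjacent to every ``path-$1$ interior'' vertex (the $a_i^1$'s and the $e_i^j$'s), and $u_2$ is made adjacent to $r$, to every ``path-$2$'' vertex $a_i^2,b_i^2$, and to the $b_i^1$'s. A direct check shows that $\{u_1,u_2\}$ is a dominating set while neither of them alone dominates the graph, so the minimum dominating set has size exactly~$2$. To guarantee $c(v)\in M$ for every vertex, I set $c(u_1)=c(u_2)=c(r)$ and keep $M:=M_0$. The motif is still colourful (this colour has multiplicity~$1$) and every solution $R$ contains exactly one of $\{r,u_1,u_2\}$.

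The desired case is $r\in R$, where $u_1,u_2\notin R$ and the analysis of Theorem~\ref{thm:disjoint-paths} carries over verbatim: $R$ encodes an exact cover of $X$. For $u_1\in R$, the pair $(a_i^2,b_i^2)$ would have to lie entirely in or out of $R$, but its only outside neighbours are $r\notin R$ and $u_2\notin R$, so taking it creates an isolated connected component of $R$; hence every gadget must sit in its path-$1$ configuration $(a_i^1,b_i^1)$, and since $u_1$ is not adjacent to the $b_i^1$'s, the classical connectivity argument of Theorem~\ref{thm:disjoint-paths} forces the whole triple $e_i^1,e_i^2,e_i^3$ into $R$ for every $i$. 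The duplicated set from the padding then produces two forced element vertices of the same colour, contradicting colourfulness.

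The main obstacle is the symmetric case $u_2\in R$: $u_2$'s larger neighbourhood (which contains the $b_i^1$'s) can anchor the ``$b$-side'' of the gadgets, seemingly allowing solutions unrelated to any exact cover. The plan is to rebalance the adjacencies of $u_1$ and $u_2$ (moving the domination of the $b_i^1$'s from $u_2$ to $u_1$ while keeping $u_1$ away from the central $e_i^2$'s, so that the original forcing argument is preserved in the $r\in R$ case) until the $u_2\in R$ case also produces a colour collision on the padded X3C instance. Once this balance is achieved, solutions to \graphm on the augmented graph correspond bijectively to exact covers, and the \np-hardness follows.
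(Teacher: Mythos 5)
Your proposal is not a complete proof, and the gap you flag at the end is in fact fatal as stated.

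The $u_2 \in R$ case really does break the reduction. Since $u_2$ is adjacent to $r$, to every $a_i^2,b_i^2$ and to every $b_i^1$, a solution containing $u_2$ can take $a_i^2$ and $b_i^1$ simultaneously for every $i$, anchor a chain $b_i^1-e_i^3-e_i^2-\cdots$ from the $b$-side, and stop at an \emph{arbitrary} prefix of the three element vertices. The motif then only asks that the union of these prefixes be exactly $X$, with no ``all-or-nothing'' constraint on each triple. Concretely, take $X=\{x_1,\dots,x_6\}$ and $\S=\{\{x_1,x_2,x_3\},\{x_3,x_4,x_5\},\{x_5,x_6,x_1\}\}$ with the paths in the listed order (so $e_1^3=x_3$, $e_2^3=x_5$, $e_3^3=x_1$); this has no exact cover, yet with $u_2\in R$, all $a_i^2$, all $b_i^1$, and the prefixes $\{x_3,x_2\}$, $\{x_5,x_4\}$, $\{x_1,x_6\}$ one obtains a connected subgraph whose colour multiset is exactly the motif. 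The padding (duplicating a set) does not close this gap, because the duplicate can simply be parked on $a_i^2,b_i^2$. Your closing paragraph is a plan, not a proof, and it is not clear any ``rebalancing'' can work: whichever of $u_1,u_2$ dominates the $b_i^1$'s will provide exactly this kind of $b$-anchored shortcut, and the same problem reappears. (Also, the claim that $u_1\in R$ forces the whole triple $e_i^1,e_i^2,e_i^3$ is incorrect --- $u_1$'s adjacency to the $e_i^j$'s breaks the tree-path argument and only $e_i^3$ is forced --- though the conclusion survives via the padding; this is a minor slip.)

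The paper avoids the issue entirely with a different and much shorter construction. It starts from the \emph{rooted} variant of \graphm (NP-hard by Ambalath et al.), adds a single universal vertex $u$ adjacent to all of $V$, and a pendant path $s-t-r$ with two \emph{fresh} colours: $c'(s)=y$ is unique so $s$ must be chosen, which forces $t$, and since $c'(t)=c'(u)=x$ and the motif is colourful, $u$ is excluded; the surviving solution is then rooted at $r$ and lives in the original graph with no shortcuts. The dominating set is $\{u,t\}$, of size $2$. The key structural insight you are missing is that a (near-)universal vertex should be forced out \emph{by the colouring} rather than by restricting its adjacencies: any vertex that dominates a large part of the graph will offer connectivity shortcuts that undermine path-forcing arguments, and the colour-collision trick sidesteps this cleanly.
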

{
\begin{proof}
We reduce from a rooted variant of \graphm, where the solution should contain a special vertex $r$.
This variant was proven $\np$-hard by Ambalath et al.~\cite{Ambalath2010}.

We will now prove that the problem remains hard with a small dominating set. The informal idea is to add a universal node $u$ such that the dominating set is small, but with a gadget to avoid the possibility of having this universal node in a solution (making the problem easy since any subset will be connected due to $u$). More formally, from any instance $\mathcal I=(G=(V,E),c,M)$, and any fixed vertex $r$ in $V$, we build the instance $\mathcal I'=(G'=(V \cup \{u,s,t\},E'),c',M')$, where $E'=E \cup \{\{s,t\},\{t,r\}\} \cup \{\{u,w\}$ $|$ $w \in V\}$, $c'(w)=c(w)$ for each $w \in V$, $c'(t)=c'(u)=x$, $c'(s)=y$, with $x$ and $y$ being two distinct fresh colors, and $M'=M \cup \{x,y\}$. 
By construction, $\{u,t\}$ is a dominating set in $G'$ of size $2$.
Let $R$ be a solution of \graphm for instance $\mathcal I'$.
Vertex $s$ is the only vertex with color $y$, so it has to be in $R$.
But then, as the only neighbor of $s$ is $t$ (and $|M'| \geqslant 2$), $t$ should also be in $R$.
Only one vertex with color $x$ can be in $R$, so $u$ cannot be part of the solution.
Now, the problem is as hard as solving instance $\mathcal I$ rooted in $r$. 
\end{proof}
}

\subsection{Max leaf number}

The \emph{max leaf number} of a graph $G$, denoted by $\text{ml}(G)$, is the maximum number of leaves (i.e., vertices of degree $1$) in a spanning tree of $G$. 
Therefore, if $G$ is itself a tree, $\text{ml}(G)$ is simply the number of leaves of $G$.
We will first show that \graphm is in $\xp$ parameterized by max leaf number.
The $n^{O(\text{ml(G)})}$ running time of our algorithm relies on a simple structural lemma that we state here:

\begin{lemma}\label{lem:max-leaf-easy}
Let $G=(V,E)$ be a connected graph and $S \subseteq V$ be the subset of all the vertices of $G$ of degree at least $3$.
Then $|S| \leqslant 4\text{ml}(G)$ and $G[V \setminus S]$ is a disjoint union of at most $5\text{ml}(G)$ paths.
\end{lemma}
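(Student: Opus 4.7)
The plan is to fix a spanning tree $T$ of $G$ realizing $L := \text{ml}(G)$ leaves and to analyze both bounds through the combinatorial structure of $T$, exploiting its leaf-maximality via local exchange arguments.

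I would first classify every vertex $v \in V$ according to $\deg_T(v)$. By definition, $T$ has exactly $L$ leaves. The standard identity $\sum_v \deg_T(v) = 2(|V|-1)$, combined with the fact that $L$ vertices have degree $1$ and all other vertices have degree at least $2$ in $T$, shows that the number of \emph{branch vertices} of $T$ (those with $\deg_T(v) \geq 3$) is at most $L-2$. Every other vertex has $\deg_T(v) = 2$. A vertex $v \in S$ is then in one of three categories: a branch vertex of $T$, a leaf of $T$ with at least two chords (edges of $G \setminus T$) incident to $v$, or a degree-$2$ vertex of $T$ with at least one chord. The first two categories jointly contribute at most $(L-2) + L \leq 2L$ vertices to $S$.

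To bound the remaining contribution to $S$ (vertices of degree $2$ in $T$ but in $S$), I would use an exchange argument. If such a $v$ had a chord $\{v,w\}$, consider the fundamental cycle $C$ it creates in $T$ and swap a well-chosen tree edge $e \in C$ incident to $v$ for the chord $\{v,w\}$. An appropriate case analysis on $\deg_T$ of the two endpoints of $e$ and of $w$ shows that, unless $v$ is within tree-distance $O(1)$ of a leaf or a branch vertex of $T$, one of the possible swaps would strictly increase the number of leaves of the resulting spanning tree, contradicting the maximality of $T$. Charging each such $v$ to a nearby leaf or branch vertex of $T$ gives at most $2L$ extra vertices, hence $|S|\leq 4L$.

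For the second statement, every vertex of $V \setminus S$ has $G$-degree at most $2$, so $G[V\setminus S]$ has maximum degree $2$ and is a disjoint union of paths and cycles. If $G$ is itself a single cycle, the statement is degenerate (with $L=2$, $S = \emptyset$); otherwise, any cycle entirely in $V \setminus S$ would, by connectedness of $G$, have a vertex with an extra neighbor outside it, putting that vertex into $S$, a contradiction. Hence $G[V\setminus S]$ is a union of paths, and I would count their endpoints. Each endpoint has at most one neighbor inside $V\setminus S$, and therefore either has $G$-degree $1$ (a global leaf of $G$, hence also a leaf of $T$, contributing at most $L$) or has a neighbor in $S$. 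The number of endpoints of the second type is controlled by $\sum_{v \in S}\deg_G(v)$ restricted to neighbors in $V\setminus S$, and a handshake-style accounting combined with $|S|\leq 4L$ delivers the bound of $5L$ connected components.

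The main obstacle is the exchange argument for degree-$2$ vertices of $T$ belonging to $S$: the case analysis of where the endpoints of the chord lie (leaf, degree-$2$, or branch vertex of $T$), together with the choice of the specific tree edge to swap out, is the technical core, and careful bookkeeping is needed to reach the explicit constants $4L$ and $5L$ rather than a weaker $O(L)$ bound.
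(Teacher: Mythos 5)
The paper treats the two claims asymmetrically: for $|S| \leqslant 4\,\text{ml}(G)$ it simply cites Kleitman and West~\cite{Kleitman91}, and only the second bound is proved, via a direct construction. The paper takes a spanning forest of $G[S]$ (at most $|S|$ components), uses at most $|S|-1$ of the paths of $G[V\setminus S]$ to join these components, and observes that every \emph{remaining} path, once attached to this partial tree, supplies at least one leaf of the resulting spanning tree. If there were more than $5\,\text{ml}(G)$ paths, the remaining ones alone would force strictly more than $\text{ml}(G)$ leaves, a contradiction.

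Your treatment of the second bound has a genuine gap. After correctly arguing that $G[V\setminus S]$ is a disjoint union of paths, you propose to bound the ``boundary'' endpoints (those with a neighbour in $S$) by $\sum_{v\in S}|N(v)\cap (V\setminus S)|$ and then invoke $|S|\leqslant 4\,\text{ml}(G)$. But that sum is not controlled by $|S|$ at all: a single vertex of $S$ may have arbitrarily many neighbours in $V\setminus S$ (a vertex of degree $m$ with $m$ pendant length-$2$ paths has $|S|=1$ but $m$ boundary endpoints). So the inequality chain ``number of boundary endpoints $\leqslant \sum_{v\in S}\deg_G(v) \leqslant O(|S|) \leqslant O(\text{ml}(G))$'' does not close, and the claimed $5L$ bound is not delivered by handshaking. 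The accounting must instead go through a spanning tree, as in the paper: each path of $G[V\setminus S]$ is either one of the at most $|S|-1$ ``bridges'' linking components of $T[S]$, or it is a dead end and yields a tree leaf; the two budgets $|S|-1\leqslant 4\,\text{ml}(G)-1$ and $\text{ml}(G)$ sum to fewer than $5\,\text{ml}(G)$.

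For the first bound, the paper offers no proof of its own, so reproving it is optional, but your sketch is also not yet a proof: the final charging step assigns each degree-$2$-in-$T$ member of $S$ to a ``nearby'' leaf or branch vertex of $T$, yet the tree-ball of constant radius around a branch vertex of high degree is not of constant size, so the claimed $2L$ cap on this category does not follow without further structure. If you want a self-contained argument here, it would be simpler to follow the dead-leaf/expansion style of the Kleitman--West proof rather than a generic edge-exchange.
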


\begin{proof}
The first part of the lemma (\emph{$|S| \leqslant 4ml(G)$}) is already known \cite{Kleitman91}.
Let us now prove the second part: \emph{$G[V \setminus S]$ is a disjoint union of at most $5\text{ml}(G)$ paths}.

As $G$ is connected, we can find $s-1$ paths $P_1, \ldots, P_{s-1}$ of $G[V \setminus S]$ such that $G[S \cup P_1 \cup \ldots \cup P_{s-1}]$ is connected, where $s$ is the number of connected components of $G[S]$.
Therefore, we build the following spanning tree of $G$: we start by taking the edges of any spanning forest of $G[S]$, plus all the edges incident to at least one vertex of a path $P_i$ (for $i \in [s-1]$).
Now, all the remaining paths in $G[V \setminus S]$ will provide (at least) one leaf each.
As $s \leqslant |S| \leqslant 4k$, if the number of paths in $G[V \setminus S]$ were larger than $5k$, then we could exhibit a spanning tree with at least $k+1$ leaves, which is a contradiction to $k=\text{ml}(G)$.
\end{proof} 

On the negative side, we will prove that \graphm is $\wone$-hard with parameter max leaf number, which is to the best of our knowledge, the first problem to exhibit such a behavior.
In fact, we will even prove that it is $\wone$-hard on trees with parameter \emph{number of leaves in the tree plus number of distinct colors in the motif}. 
This strengthens the previously known result that the problem is $\wone$-hard on trees with parameter number of distinct colors in the motif~\cite{fellows2011}.

\begin{theorem}\label{thm:max-leaf-easy}
\graphm can be solved in time $O^*(16^kn^{10k})=n^{O(k)}$, where $k=\text{ml}(G)$ and is in $\wpp$ with respect to that parameter.
\end{theorem}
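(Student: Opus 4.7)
The plan is to exploit Lemma~\ref{lem:max-leaf-easy}. Let $S \subseteq V$ be the set of vertices of $G$ of degree at least $3$; then $|S| \leqslant 4k$ and $G[V \setminus S]$ consists of at most $t \leqslant 5k$ paths $P_1, \ldots, P_t$. First I would guess the intersection $R' := R \cap S$ of a hypothetical solution $R$ with $S$ by iterating over the $2^{|S|} \leqslant 16^k$ subsets of $S$.

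The key structural observation is that, once $R'$ is fixed, each intersection $R \cap P_i$ has a very constrained shape. Every internal vertex of $P_i$ has degree exactly $2$ in $G$, so any $v \in R \cap P_i$ is reached from $R'$ inside $G[R]$ only through a subpath of $P_i$ entirely contained in $R$. Consequently, denoting by $s_{i,1}, s_{i,2} \in S$ the (at most two) vertices of $S$ adjacent to the extremities of $P_i$, the set $R \cap P_i$ must be the union of a prefix of $P_i$ starting at the endpoint adjacent to $s_{i,1}$ (empty unless $s_{i,1}$ exists and lies in $R'$) and a suffix of $P_i$ ending at the endpoint adjacent to $s_{i,2}$ (analogously). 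Thus each $R \cap P_i$ is described by two integers of $\{0, 1, \ldots, n\}$, leaving at most $(n+1)^2$ options per path.

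The algorithm then enumerates the at most $16^k \cdot (n+1)^{2t} \leqslant 16^k \cdot n^{10k}$ candidate sets $R$ obtained by this guess, and checks in polynomial time both the connectivity of $G[R]$ (which mostly follows from the construction, but still has to be verified because $R'$ may not itself be connected in $G[R']$) and the multiset equality $c(R) = M$. Two boundary situations need separate treatment: when $S = \emptyset$, $G$ is itself a path or a cycle and any solution is a contiguous subpath or subarc; when $R \cap S = \emptyset$ with $S \neq \emptyset$, a solution is a contiguous subpath of a single $P_i$. Both cases contribute only $O(k n^2)$ additional candidates, so the overall running time is $O^*(16^k n^{10k}) = n^{O(k)}$, as claimed.

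For $\wpp$-membership, the same decomposition yields a certificate of polylogarithmic size: a solution $R$ is encoded either by the subset $R' \subseteq S$ ($\leqslant 4k$ bits) together with the $2t \leqslant 10k$ integers of $\{0, 1, \ldots, n\}$ describing each $R \cap P_i$, or, in the boundary cases above, by the two endpoints of a contiguous subpath. In all cases the witness has length $O(k \log n)$ and is verifiable in polynomial time. This matches the standard characterization of $\wpp$ as the class of parameterized problems decided by a nondeterministic polynomial-time algorithm using $O(f(k) \log n)$ nondeterministic bits for some computable $f$. Hence \graphm parameterized by $\text{ml}(G)$ lies in $\wpp$. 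The main obstacle is handling the boundary situations (pendant paths with a single $S$-neighbor, the isolated-cycle case, and $R \cap S = \emptyset$ with $S \neq \emptyset$) carefully enough so that every solution is considered exactly once without inflating the $n^{O(k)}$ bound; once the prefix/suffix structure of $R \cap P_i$ is established, both the $\xp$ algorithm and the $\wpp$ witness follow directly.
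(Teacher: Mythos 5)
Your proof is correct and follows essentially the same route as the paper's: guess $R\cap S$ in $16^k$ time using Lemma~\ref{lem:max-leaf-easy}, observe that the trace of a solution on each of the $\leqslant 5k$ residual paths is determined by at most two integers ($O(n^2)$ options per path, matching the paper's count $\binom{l}{2}+l+1$), enumerate and check, and invoke Cesati's bounded-nondeterminism characterization of $\wpp$. You spell out the prefix/suffix structure and the $O(k\log n)$-bit witness slightly more explicitly than the paper does, but the decomposition, the key lemma, and the running-time analysis are the same.
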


{
\begin{proof}
Let $(G=(V,E),c,M)$ be any instance of \graphm, $k=\text{ml}(G)$, and $S$ the set of vertices with degree strictly greater than $2$ in $G$.
Again, we may assume that $G$ is connected and also that $G$ is not a cycle, since otherwise \graphm is trivially solvable in time $O(n^2)$.

It is known that $|S| \leqslant 4k$ (even $4k-2$) \cite{Kleitman91}.
First, we can exhaustively find in time $2^{4k}=16^k$ the intersection $T=S \cap R$, where $R$ is a fixed solution.
By definition, $V \setminus S$ are vertices of degree at most $2$. 
In particular, $G[V \setminus S]$ is a disjoint union of paths (some of the paths may consist of a single vertex). 
Indeed, there cannot be a cycle in $G[V \setminus S]$ since this cycle could not be connected to the rest of $G$.
By Lemma \ref{lem:max-leaf-easy}, the number of paths in $G[V \setminus S]$ is at most $5k$.

To satisfy the connectivity constraint, solution $R$ can intersect each of the at most $5k$ paths of $G[V \setminus S]$ in at most $n^2$ different ways (more precisely in at most ${l \choose 2}+l+1$ where $l$ is the number of vertices in the path).
So, we can guess the intersection $R \cap (V \setminus S)$ in time $(n^2)^{5k}=n^{10k}$.
Overall, we can decide \graphm in time $O^*(16^kn^{10k})$ where $k$ is the max leaf number.

We can also show that \graphm parameterized by $\text{ml}(G)$ is in $\wpp$ with the characterization of this class by Turing machines with bounded non-determinism \cite{Cesati03}.
\end{proof}
}

\begin{theorem}\label{thm:max-leaf-hard}
\graphm is $\wone$-hard with respect to the max leaf number plus the number of colors, even on trees.
\end{theorem}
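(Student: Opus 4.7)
The plan is to give a parameterized reduction from \textsc{Multicoloured Clique}, which is $\wone$-hard with respect to the number $k$ of colour classes, to \graphm restricted to trees, in such a way that the constructed tree has $O(k^2)$ leaves and the motif uses $O(k^2)$ distinct colours. Since on a tree the max leaf number coincides with the number of leaves, this yields the announced $\wone$-hardness with respect to the sum of those two parameters and at the same time strengthens the known hardness result on trees parameterized by the number of colours.

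Starting from an instance $(G=(V,E),V_1 \uplus \dots \uplus V_k)$ of \textsc{Multicoloured Clique} with $|V_i|=n$ and vertices labelled $v_{i,1},\dots,v_{i,n}$, the tree $T$ would be built as a spider-like structure with a single hub $r$ carrying a unique colour $\gamma$ whose motif multiplicity is exactly $1$. Attached to $r$ there would be one \emph{selector path} $P_i$ per class $V_i$ and one \emph{verifier path} $Q_{i,j}$ per pair $1 \leqslant i<j \leqslant k$, giving $k+\binom{k}{2}=O(k^2)$ leaves. The colour palette would consist of $\gamma$, one ``filler'' colour $\alpha_i$ per selector, one ``filler'' colour $\beta_{i,j}$ per verifier, plus, for every ordered pair $(i,j)$ with $i \neq j$, a family of $O(1)$ ``tag'' colours shared between $P_i$ and $Q_{i,j}$; altogether $O(k^2)$ distinct colours.

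Because $r$ is forced in any solution, the intersection of the solution with each $P_i$ (resp.\ $Q_{i,j}$) is a connected prefix of that path. I would design the colouring of $P_i$ so that the motif multiplicities of $\alpha_i$ and of the tag colours shared with the $Q_{i,j}$'s force this prefix to have a length encoding exactly one $t_i \in [n]$, thereby selecting a single $v_{i,t_i}\in V_i$. Symmetrically, each $Q_{i,j}$ would be colour-coded so that its admissible prefixes are in bijection with the edges of $G$ between $V_i$ and $V_j$, with the tag colours shared with $P_i$ and $P_j$ arranged so that the only way to meet their prescribed multiplicities is that the edge encoded on $Q_{i,j}$ connects the vertex $v_{i,t_i}$ encoded on $P_i$ with the vertex $v_{j,t_j}$ encoded on $P_j$. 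This is the usual ``incidence'' trick in reductions from \textsc{Multicoloured Clique}. The forward direction is then immediate: from a multicoloured clique $\{v_{1,t_1},\dots,v_{k,t_k}\}$ one builds the solution by taking $r$, the prefix of each $P_i$ encoding $t_i$ and the prefix of each $Q_{i,j}$ encoding the edge $(v_{i,t_i},v_{j,t_j})$. The backward direction follows because any feasible solution contains $r$, hence a prefix of every selector and every verifier, and the motif multiplicities leave no choice other than a consistent clique.

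The main obstacle is the simultaneous enforcement of (i) a unique vertex choice on each $P_i$, (ii) a unique edge choice on each $Q_{i,j}$, and (iii) agreement between the vertex encoded on $P_i$ and the two endpoints encoded on the $k-1$ verifiers $Q_{i,j}$ it participates in, all while keeping the palette of size $O(k^2)$ and the leaf count of size $O(k^2)$. I plan to solve this by a unary encoding on each path together with a small family of tag colours per ordered pair $(i,j)$, each imposed with multiplicity exactly one in the motif; the multiplicity constraints then form a rigid system whose only non-negative integer solutions correspond to a clique in $G$. The careful part will be showing that the available multiplicities on the $\alpha_i$'s and the shared tag colours admit no ``spurious'' solution where, for instance, the prefix of $P_i$ could encode one $t_i$ while the prefix of $Q_{i,j}$ encodes an edge incident to a different $t'_i$; this is ruled out by making every tag colour appear at a single position of $P_i$ and a single position of $Q_{i,j}$, so that including it on one path forces including it on the other.
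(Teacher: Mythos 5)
Your high-level plan coincides with the paper's: reduce from \textsc{Multicolored $k$-Clique} to \graphm on a subdivided star with $O(k^2)$ leaves and $O(k^2)$ colours, using a forced center, $k$ selector paths $P_i$, and $\binom{k}{2}$ verifier paths, with each path's prefix (measured from the center) encoding a choice. However, the piece you yourself flag as ``the careful part'' is exactly where your sketch has a gap that would make the reduction fail. You propose to enforce agreement between the vertex encoded on $P_i$ and the edge encoded on $Q_{i,j}$ via ``$O(1)$ tag colours shared between $P_i$ and $Q_{i,j}$, each imposed with multiplicity exactly one,'' each placed at a single position on each path. A multiplicity-one colour appearing once on each of the two paths only says ``take it on exactly one of them'' --- a single bit. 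Constantly many such bits per pair cannot separate the $\Theta(n)$ candidate prefixes of $P_i$ (one per vertex of $V_i$) from one another, so spurious solutions would survive in the backward direction; there is no ``rigid system whose only solutions correspond to a clique'' with so little information flowing between the paths.

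The paper closes this gap with a single colour $ij$ per \emph{unordered} pair $i<j$, of motif multiplicity $t^2$ (with $t=|V_i|$), shared between \emph{three} paths $P_i$, $P_j$ and $P_{i,j}$, together with an arithmetic encoding: each block of $P_i$ carries $t$ vertices coloured $ij$ and each block of $P_j$ carries one, so a prefix of $q_i$ blocks of $P_i$ and $q_j$ blocks of $P_j$ contributes exactly $tq_i+q_j$ vertices of colour $ij$ --- a bijection from $[0,t-1]\times[0,t-1]$ onto $[0,t^2-1]$. The verifier path $P_{i,j}$ is then built so that its admissible prefixes contribute exactly $t^2-x$ vertices of colour $ij$ for $x$ ranging over the encodings of actual edges between $V_i$ and $V_j$; the exact multiplicity $t^2$ forces $tq_i+q_j$ to encode an edge. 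Blocks are delimited by just two \emph{global} filler colours $c_b$, $c_e$ (not one filler per path), and a dedicated slack path absorbs the leftover $c_b/c_e$ occurrences. If you replace your multiplicity-one tag colours with this single high-multiplicity colour per pair and reproduce the blockwise counting, your outline becomes a complete proof.
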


{
\begin{proof}
We show the stronger result that \graphm is $\wone$-hard on subdivisions of the star $K_{1,k}$ with parameter $k+|\mathcal C|$ where $\mathcal C$ is the set of colors.
From any instance $H=(H_1 \uplus \ldots \uplus H_k,E)$ of the $\wone$-hard problem \textsc{Multicolored $k$-Clique}, we construct an equivalent instance $(T=(V,E'),c:V \rightarrow \mathcal C,M)$ of \graphm where $T$ is a tree with $k+{k \choose 2}+1$ leaves and $\mathcal C$ consists of ${k \choose 2}+3$ colors.
More precisely, $T$ is a subdivision of the star with $k+{k \choose 2}+1$ leaves.
We recall that the \textsc{Multicolored $k$-Clique} problem asks for a $k$-clique in $H$ hitting each $H_i$ (exactly once).
By potentially adding some isolated vertices, we can assume that each $H_i$ contains the same number $t$ of vertices, and $H_i=\{u_{i,1}, \ldots, u_{i,t}\}$.

The set of colors $\mathcal C$ is $\{c_0,c_b,c_e\} \cup \bigcup_{i < j \in [k]}\{ij\}$ ($|\mathcal C|={k \choose 2}+3$).
The motif $M$ contains $c_0$ with multiplicity $1$, both $c_b$ and $c_e$ with multiplicity $s:=k(t-1)+{k \choose 2}t^2$, and for any $i < j \in [k]$, color $ij$ with multiplicity $t^2$.
We write $M=\{1 \times c_0, s \times c_b, s \times c_e\} \cup \bigcup_{i<j \in [k]}\{t^2 \times ij\}$, with the convention that $\text{mul} \times \text{col}$ means color $\text{col}$ appears in the multiset with multiplicity $\text{mul}$.

The tree $T$ is a subdivision of a star with $k+{k \choose 2}+1$ leaves whose center $v$ is the only vertex colored by $c_0$.
Thus, $v$ should necessarily be in any solution.
By construction, $T[V \setminus \{v\}]$ is a disjoint union of $k+{k \choose 2}+1$ paths. 
We can think those paths as \emph{oriented} from the vertex neighbor of $v$ (the \emph{first} vertex of the path) to the vertex the farther away from $v$ (the \emph{last} vertex of the path).
We will extensively call those paths \emph{oriented paths}. 
By this, we only mean something informal about a potential solution growing from $v$ along those paths, and we do \emph{not} mean that the graph we build is directed.
For each $i \in [k]$, a path $P_i$ will correspond to the vertices of $H_i$ and for any pair $i<j \in [k]$, a path $P_{i,j}$ will encode the edges of $E_{i,j}:=E(H_i,H_j)$.
Additionally, we have a path $P_{be}$ with $2s$ vertices alternating color $c_b$ and $c_e$; the first vertex of the path is colored by $c_b$, the second by $c_e$ and so forth.

Before we describe the $P_i$s and the $P_{i,j}$s, we introduce the notion of \emph{block} and indicate a useful property that the construction will satisfy.
A \emph{block} is a subpath of an oriented path which starts with a vertex colored by $c_b$ (as \textbf{b}egin), ends with a vertex colored by $c_e$ (as \textbf{e}nd), and such that no internal vertex in the subpath has color $c_b$ or $c_e$.
The path $P_{be}$ can be seen as $s$ consecutive \emph{empty} \bls.
We may also observe that two different \bls of the same oriented path cannot intersect.
We will construct the $P_i$s and the $P_{i,j}$s such that they are entirely spanned by \bls; and we call that \emph{alternating property}.
Therefore, every vertex except $v$ is contained in a (unique) \bl.
In particular, each oriented path $P_i$ or $P_{i,j}$ has its first vertex colored by $c_b$ and its last vertex colored by $c_e$.
And, if we only consider vertices colored by $c_b$ and $c_e$ along the path, they alternate $c_b-c_ec_b-c_e \ldots$ with the extra property that there is no vertex between color $c_e$ and $c_b$ (see Figure~\ref{fig:ml-hardness1}).
A connected subgraph of $T$ containing $v$ (i.e., a potential solution) is entirely defined by $k+{k \choose 2}+1$ \emph{stopping points}: one for each oriented path $P_{be}$, $P_i$, or $P_{i,j}$. 
A \emph{stopping point} of an oriented path $P$ with respect to a given (attempt of) solution $R$ is the farthest vertex from $v$ lying in $R \cap P$. 
Observe that the unique path from $v$ to a stopping point is exactly the intersection of the solution and the oriented path.
If $R \cap P = \emptyset$, by convention, the stopping point is $v$.
It is easy to see that, in each oriented path $P_{be}$, $P_i$, or $P_{i,j}$, a stopping point relative to an actual solution is either $v$ or a vertex colored by $c_e$ (that is the end of a \bl). 
Put differently, if $R$ is a solution and $B$ is a \bl, $R \cap B = \emptyset$ or $R \cap B = B$.
Indeed, if it is not the case, because of the alternating property, the chosen connected subgraph would contain at least one more vertex colored by $c_b$ than colored by $c_e$, and would not satisfy the multiset constraint.
Therefore, within a \bl, the order of the internal vertices does not matter.

\begin{figure}
\centering
\begin{tikzpicture}[scale=0.84]
\node (v) at (-1.4,3) {$v$};
\node[draw,circle,inner sep=-0.25cm] (c) at (-1,3) {$c_0$};
\foreach \i in {0,2,3,5} {
\foreach \j in {0,...,9} {
\node[draw,circle] (u\i\j) at (\j,\i) {} ;
}
\draw (c) -- (u\i0) -- (u\i1) -- (u\i2) -- (u\i3) -- (u\i4) -- (u\i5) -- (u\i6) -- (u\i7) -- (u\i8) -- (u\i9) ;
\draw[dotted] (u\i9) --++ (0.7,0) ;
}

\draw[very thick] (4.5,-0.3) -- (4.5,0.3) ;
\draw[very thick] (6.5,1.7) -- (6.5,2.3) ;
\draw[very thick] (7.5,2.7) -- (7.5,3.3) ;
\draw[very thick] (5.5,4.7) -- (5.5,5.3) ;
\draw[very thick] (7.5,5.7) -- (7.5,6.3) ;

\node[circle,draw,fill=green] (z1) at (0,2) {} ;
\node[circle,draw,fill=red] (z2) at (6,2) {} ;
\node[circle,draw,fill=green] (z3) at (7,2) {} ;
\node[circle,draw,fill=red] (z4) at (9,2) {} ;

\node[circle,draw,fill=green] (x1) at (0,5) {} ;
\node[circle,draw,fill=red] (x2) at (5,5) {} ;
\node[circle,draw,fill=green] (x3) at (6,5) {} ;
\node[circle,draw,fill=red] (x4) at (9,5) {} ;

\node[circle,draw,fill=green] (y1) at (0,3) {} ;
\node[circle,draw,fill=red] (y2) at (3,3) {} ;
\node[circle,draw,fill=green] (y3) at (4,3) {} ;
\node[circle,draw,fill=red] (y4) at (7,3) {} ;
\node[circle,draw,fill=green] (y5) at (8,3) {} ;

\node[circle,draw,fill=green] (t1) at (0,0) {} ;
\node[circle,draw,fill=red] (t2) at (4,0) {} ;
\node[circle,draw,fill=green] (t3) at (5,0) {} ;

\node (pbe) at (11,6) {$P_{be}$} ;

\draw[thick,decorate,decoration={brace,amplitude=2.5pt}] (10,5.1) -- (10,2.9);
\node (pis) at (11.2,4) {$k$ paths $P_i$s} ;

\draw[thick,decorate,decoration={brace,amplitude=2.5pt}] (10,2.1) -- (10,-0.1);
\node (pijs) at (11.45,1) {${k \choose 2}$ paths $P_{i,j}$s} ;

\foreach \j in {0,2,4,6,8} {
\node[draw,circle,fill=green] (s\j) at (\j,6) {} ;}

\foreach \j in {1,3,5,7,9} {
\node[draw,circle,fill=red] (s\j) at (\j,6) {} ;}

\draw (c) -- (s0) -- (s1) -- (s2) -- (s3) -- (s4) -- (s5) -- (s6) -- (s7) -- (s8) -- (s9) ;
\draw[dotted] (s9) --++(0.7,0) ; 

\node (vd1) at (4.5,1.1) {$\vdots$};
\node (vd2) at (4.5,4.1) {$\vdots$};

\end{tikzpicture}
\caption{Illustration of the global construction and the alternating property. Color $c_b$ is represented in green (light gray) and $c_e$ in red (dark gray). The stopping points just precede the vertical cuts.}
\label{fig:ml-hardness1}
\end{figure}
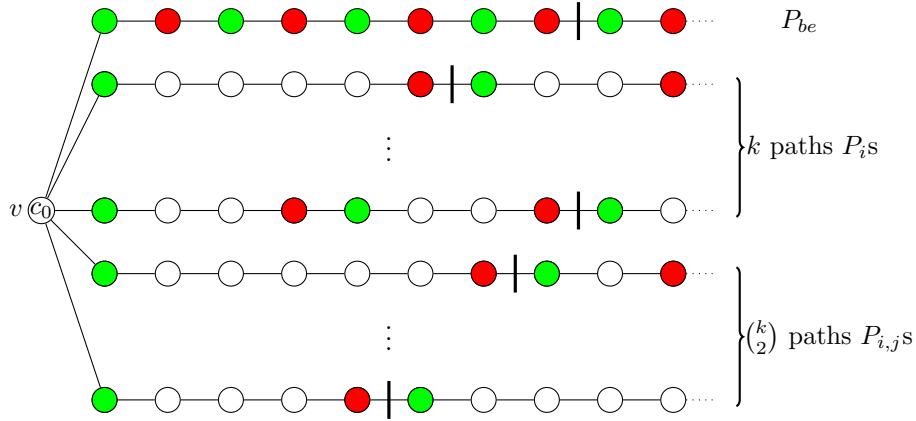

We now describe the path $P_i$ for each $i \in [k]$.
The oriented path $P_i$ consists of $t-1$ copies of the same \bl $B_i$ put one after the other.
The internal vertices of $B_i$ consist of one vertex colored by $li$ for each $l \in [i-1]$ and $t$ vertices colored by $ij$ for each $j \in [i+1,k]$ (see Figure~\ref{fig:ml-hardness2}).
We may recall that the order of the internal vertices of a \bl is irrelevant.
Notice also that the $P_i$s depends only on the number $t$ of vertices per $H_i$. 
As $P_i$ is made of $t-1$ \bls, there are $t$ stopping points, and, intuitively, the $q$-th stopping point corresponds to taking $u_{i,q}$ as part of the multicolored clique in $H$.
As a slight overload of notation, we will also denote by $u_{i,q}$ the $q$-th stopping point of path $P_i$.
By convention, $u_{i,1}$ is $v$.

To motivate the definition of the $P_{i,j}$s, we need to explain how we can think pairs of $H_i \times H_j$ as integers of $[0,t^2-1]$.
Say, the stopping point of a given solution $R$ is $u_{i,q+1}$ in $P_i$ for some $q \in [0,t-1]$, and $u_{j,q'+1}$ in $P_j$ for some  $q' \in [0,t-1]$ (with $i<j$).
The number of vertices colored by $ij$ contained in $R \cap (P_i \cup P_j)$ is $tq+q'$; this number corresponds to a unique pair of stopping points. 
Indeed, function $\phi: x \in [0,t^2-1] \mapsto (\lfloor x/t \rfloor,x \mod t) \in [0,t-1] \times [0,t-1]$ is bijective since $\lfloor x/t \rfloor$ and $x \mod t$ are the quotient and the remainder of the euclidean division of $x$ by $t$.

For any $i<j \in [k]$, the oriented path $P_{i,j}$ consists of $|E_{i,j}|$ \bls whose internal vertices are all colored by $ij$.
We define three auxiliary lists of $|E_{i,j}|$ integers each, indexed from $1$ to $|E_{i,j}|$.
The third list will correspond to how many vertices colored by $ij$ we put in the $|E_{i,j}|$ consecutive blocks. 
The first list $A_{i,j}$ contains, in the increasing order, every integer $x \in [0,t^2-1]$ such that if $\phi(x)=(q,q')$, it holds that $u_{i,q+1}u_{j,q'+1} \in E_{i,j}$.
Intuitively, it is the sorted list of integers in $[0,t^2-1]$ which are \emph{edges} of $E_{i,j}$.
The second list $L_{i,j}$ contains, in the increasing order, all the integers $t^2-x$ such that $x \in A_{i,j}$.
The easiest way to obtain $L_{i,j}$ from $A_{i,j}$ is to complement to $t^2$ each integer in $A_{i,j}$ which yields a list sorted in decreasing order, and to reverse the result.
The third list $D_{i,j}$ is defined by $D_{i,j}[1]:=L_{i,j}[1]$ and for every $h \in [2,|E_{i,j}|]$, $D_{i,j}[k]=L_{i,j}[h]-L_{i,j}[h-1]$.
Finally, for every $h \in [|E_{i,j}|]$, the $h$-th \bl of $P_{i,j}$ gets $D_{i,j}[h]$ vertices colored by $ij$ (see Figure~\ref{fig:ml-hardness2}).
This ends the construction of the instance of \graphm.

\begin{figure}
\centering
\begin{tikzpicture}

\begin{scope}[scale=0.5]

\node (Pi) at (-1,0) {$P_i$} ;
\foreach \t in {0,1,2}{

\begin{scope}[xshift=7*\t cm]
\node[draw,circle,fill=green] (a0\t) at (0,0) {};
\node[draw,circle,fill=red] (a5\t) at (6,0) {};


\foreach \i in {1,...,4}{
         \node[inner sep = -0.3cm,draw,circle] (a\i\t) at ({\i+0.5},0) {$ij$}; 
}       
\draw[dotted] (a0\t) -- (a1\t) ;
\draw (a1\t) -- (a2\t) ;
\draw[dotted] (a2\t) -- (a3\t) ;
\draw (a3\t) -- (a4\t) ;
\draw[dotted] (a4\t) -- (a5\t) ;

\node (t\t) at (3,-1) {$t$} ;

\draw[decorate,decoration={brace,amplitude=2.5pt}] (4.8,-0.5) -- (1.2,-0.5);

\end{scope}
}
\draw(a50) -- (a01) ;
\draw(a51) -- (a02) ;

\node (Pj) at (-1,-3) {$P_j$} ;

\foreach \t in {0,1,2}{

\begin{scope}[xshift=7*\t cm,yshift=-3cm]
\node[draw,circle,fill=green] (b0\t) at (0,0) {};
\node[draw,circle,inner sep = -0.3cm,] (b1\t) at (3,0) {$ij$};
\node[draw,circle,fill=red] (b2\t) at (6,0) {};
 

\draw[dotted] (b0\t) -- (b1\t) -- (b2\t) ;
\end{scope}
}
\draw(b20) -- (b01) ;
\draw(b21) -- (b02) ;

\draw[dotted] (a52) --++(1,0) ;
\draw[dotted] (b22) --++(1,0) ;
\end{scope}

\node (uj1) at (-0.5,-2) {$u_{j,1}$};
\node (uj2) at (3,-2) {$u_{j,2}$};
\node (uj3) at (6.5,-2) {$u_{j,3}$};
\node (uj4) at (10,-2) {$u_{j,4}$};

\node (ui1) at (-0.5,-0.5) {$u_{i,1}$};
\node (ui2) at (3,-0.5) {$u_{i,2}$};
\node (ui3) at (6.5,-0.5) {$u_{i,3}$};
\node (ui4) at (10,-0.5) {$u_{i,4}$};

\begin{scope}[yshift=-3cm]
\node (pij) at (-0.5,0) {$P_{i,j}$} ;
\node[draw,circle,fill=green] (v1) at (0,0) {} ;
\node[draw,rectangle] (v2) at (1.25,0) {$D_{i,j}[1] \times ij$} ;
\node[draw,circle,fill=red] (v3) at (2.5,0) {} ;

\node[draw,circle,fill=green] (v4) at (3,0) {} ;
\node[draw,rectangle] (v5) at (4.25,0) {$D_{i,j}[2] \times ij$} ;
\node[draw,circle,fill=red] (v6) at (5.5,0) {} ;

\node[draw,circle,fill=green] (v7) at (6,0) {} ;
\node[draw,rectangle] (v8) at (7.25,0) {$D_{i,j}[3] \times ij$} ;
\node[draw,circle,fill=red] (v9) at (8.5,0) {} ;

\draw (v1) -- (v2) -- (v3) -- (v4) -- (v5) -- (v6) -- (v7) -- (v8) -- (v9) ;
\draw[dotted] (v9) --++(1,0) ;
\end{scope}

\end{tikzpicture}
\caption{The oriented paths $P_i$, $P_j$, and $P_{i,j}$. Again, color $c_b$ is represented in green (light gray) and color $c_e$ in red (dark gray). Note that the $P_i$s do depend only on the number $t$ of vertices per color class, while $P_{i,j}$ actually encodes the adjacency between $H_i$ and $H_j$ in some \emph{flattened} form.}
\label{fig:ml-hardness2}
\end{figure}
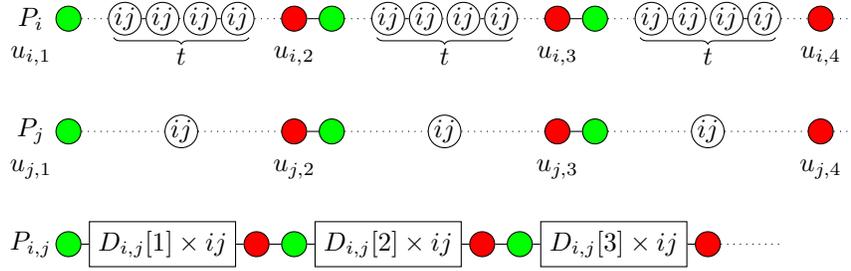

Suppose there is a multicolored clique $C:=\{u_{1,q_1},\ldots,u_{k,q_k}\}$ in $H$.
We construct a solution $R$ to the produced instance $(T,c,M)$ in the following way.
For each $i \in [k]$, the stopping point of $R$ in path $P_i$ is $u_{i,q_i}$.
For any pair $i<j \in [k]$, let $y_{i,j}:=t^2-\phi^{-1}(q_i-1,q_j-1)$, and let $h_{i,j}$ be the index such that $y_{i,j}=L_{i,j}[h_{i,j}]$.
The stopping point of $R$ in path $P_{i,j}$ is right after its $h_{i,j}$-th block.
The subtree induced by those $k+{k \choose 2}$ stopping points contains the same number $z$ of vertices colored by $c_b$ and of vertices colored by $c_e$.
As $z$ is non-negative and cannot exceed $s$, solution $R$ can and will stop after $s-z$ \bls in $P_{be}$, thereby fulfilling the multiset constraint for colors $c_b$ and $c_e$.
By construction (from vertex $v$ along the oriented paths), $R$ induces a connected subgraph.

What remains to be seen is that $h_{i,j}$ is well defined and that, for each $i<j \in [k]$, $R$ contains exactly $t^2$ vertices colored by $ij$.
A preliminary easy observation is that vertices colored by $ij$ only appear in three oriented paths: $P_i$, $P_j$ and $P_{i,j}$.
For any pair $i<j \in [k]$, as $C$ is a clique, $u_{i,q_i}u_{j,q_j} \in E_{i,j}$.
Thus, the value $\phi^{-1}(q_i-1,q_j-1)$ is in $A_{i,j}$, and so, $y_{i,j}=t^2-\phi^{-1}(q_i-1,q_j-1)$ is in $L_{i,j}$.
This means that $h_{i,j}$ exists.
Also, by definition of $\phi$, $\phi^{-1}(q_i-1,q_j-1)$ corresponds to the number of vertices colored by $ij$ in $R \cap (P_i \cup P_j)$.
Therefore, $y_{i,j}$ is exactly the number of vertices colored by $ij$ we want to have in $R \cap P_{i,j}$.
As we stop $R$ in $P_{i,j}$ after $h_{i,j}$ \bls, the number of vertices colored by $ij$ in $R \cap P_{i,j}$ is $\Sigma_{1 \leqslant r \leqslant h_{i,j}}D_{i,j}[r]$.
And, $\Sigma_{1 \leqslant r \leqslant h_{i,j}}D_{i,j}[r]=(\Sigma_{2 \leqslant r \leqslant h_{i,j}}L_{i,j}[r]-L_{i,j}[r-1])+L_{i,j}[1]=L_{i,j}[h_{i,j}]=y_{i,j}$.
Hence, the total number of vertices colored by $ij$ in $R$ is $\phi^{-1}(q_i-1,q_j-1)+y_{i,j}=t^2$.

Now, suppose that there is no multicolored clique in $H$.
We will show that there cannot be a solution to the instance of \graphm.
For the sake of contradiction, we assume that $R$ is a solution.
As explained during the construction, vertex $v$ has to be in $R$ and the stopping points in each oriented path $P_i$, $P_{i,j}$, and $P_{b,e}$ should coincide with the end of \bls.
In particular, in each $P_i$, the stopping point of $R$ should be a vertex $u_{i,q}$.
Thus, let $u_{1,q_1},\ldots,u_{k,q_k}$ be the stopping points of $R$ in $P_1,\ldots,P_k$.
As there is no multicolored clique in $H$, there exists at least one pair $i<j \in [k]$, such that $u_{i,q_i}u_{j,q_j} \notin E_{i,j}$.
Let $h$ be the number of \bls in $R \cap P_{i,j}$; in other words, $R$ stops in $P_{i,j}$ after $h$ blocks.
We now show that $R$ cannot contain exactly $t^2$ vertices colored by $ij$, and hence, is not a solution.
The number of vertices colored by $ij$ in $R \cap (P_i \cup P_j)$ is $\phi^{-1}(q_i-1,q_j-1) \notin A_{i,j}$.
As $x \in [0,t^2-1] \mapsto t^2-x \in [t^2]$ is bijective, it means that $t^2-\phi^{-1}(q_i-1,q_j-1) \notin L_{i,j}$.
Besides, the number of vertices colored by $ij$ in $R \cap P_{i,j}$ is $\Sigma_{1 \leqslant r \leqslant h}D_{i,j}[r]$.
We observed in the previous paragraph that $L_{i,j}[h]=\Sigma_{1 \leqslant r \leqslant h}D_{i,j}[r]$.
Hence $t^2-\phi^{-1}(q_i-1,q_j-1) \neq \Sigma_{1 \leqslant r \leqslant h}D_{i,j}[r]$, so $\phi^{-1}(q_i-1,q_j-1) + \Sigma_{1 \leqslant r \leqslant h}D_{i,j}[r] \neq t^2$.
\end{proof}
}

As it is usually the case with FPT reductions from \textsc{Multicolored $k$-Clique} using edge representations the parameter goes from $k$ to $\Theta(k^2)$.
Thus, concerning running-time lower bounds, the previous reduction only shows that solving \graphm in time $n^{o(\sqrt{\text{ml}(G)+|\mathcal C|})}$ would also solve \textsc{Multicolored $k$-Clique} in time $n^{o(k)}$ which is known to disprove ETH, and even imply that $\fpt=\wone$ \cite{Chen05}.
Nevertheless, we can strengthen this lower bound by performing the same reduction from \textsc{Partitioned Subgraph Isomorphism}.
In the \textsc{Partitioned Subgraph Isomorphism} problem, one is given two graphs $H$ and $G$.
The vertices of graph $H$ are partitioned into $|V(G)|$ classes $C_v$ one for each vertex $v$ of $G$.
The goal is to find an injective mapping $h : V(G) \rightarrow V(H)$ such that if $uv \in E(G)$, then $h(u)h(v) \in E(H)$, and for each $v \in V(G)$, $h(v) \in C_v$.
Under ETH, \textsc{Partitioned Subgraph Isomorphism} cannot be solved in time $n^{o(k / \log k)}$ where $k$ is the number of edges of the smaller graph $G$ \cite{Marx10}.
Observe that we can ignore isolated vertices in $G$ (we are looking for a subgraph \emph{not} an induced subgraph).
Thus, the number of edges in $G$ is at least $|V(G)|/2$, and ETH even implies that \textsc{Partitioned Subgraph Isomorphism} cannot be solved in time $n^{o(k / \log k)}$ where $k=|V(G)|+|E(G)|$.

The reduction from \graphm to \textsc{Partitioned Subgraph Isomorphism} encode the graph $H$ partitioned into the $C_v$s but only introduce a color $ij$ and a path $P_{i,j}$ if there is an edge in $G$ between the $i$-th and the $j$-th vertex.
The number of leaves in $T$ is $|V(G)|+|E(G)|+1$ and the number of colors of $\mathcal C$ is $|E(G)|+3$.
Thus, we get that, under ETH, \graphm cannot be solved in $n^{o((\text{ml}(G)+|\mathcal C|) / \log{(\text{ml}(G)+|\mathcal C|)})}$.
Therefore, our algorithm running in time $n^{O(\text{ml}(G))}$ is probably optimal up to logarithmic factors in the exponent.

The \graphm problem on subdivisions of stars can be reformulated as the following problems on words: given a set of $k+1$ words $w_1, \ldots, w_k$, and $w$ over an alphabet $\Sigma$, find $w'_1, \ldots, w'_k$, such that for each $i \in [k]$, $w'_i$ is a prefix of $w_i$, and the concatenation $w'_1w'_2 \ldots w'_k$ is an anagram of $w$.
Indeed, hard instances of \graphm on subdivisions of stars are such that the center of the subdivided star should necessarily be in a solution (otherwise, the whole solution is entirely contained in an induced path, and can be computed in polynomial time).
Then, letters correspond to colors, $w$ to the multiset $M$, and the $w_i$'s to the words formed by the colors of the vertices in each oriented path.
Therefore, Theorem~\ref{thm:max-leaf-hard} entails that this problem is $\wone$-hard parameterized by $k+|\Sigma|$ (number of words plus size of the alphabet). 
However, 
 as far as we know, this problem has not appeared in the literature.

We may finally observe that \graphm on paths \emph{is} an established string problem going by the name of \emph{jumbled pattern matching} (see for instance \cite{Burcsi12}).
In this problem, one has to find, given a string and a Parikh vector (or multiset of letters), a substring whose occurences of letters match the Parikh vector.
Therefore, \graphm can be seen as a generalization of this string problem to more complex structures. 

\section{Conclusion and open problems}\label{sec:conclusion}
Figure~\ref{fig:recap} sums up the parameterized complexity landscape of \graphm with respect to structural parameters. 
For parameter maximum independent set 
 the complexity status of \graphm remains unknown. 
Even when the problem is in $\fpt$, polynomial kernels tend to be unlikely;
be it for the natural parameter even on comb graphs~\cite{Ambalath2010} or for the vertex cover number or the distance to clique (Theorem~\ref{th:nokernelvc}). 
Is it also the case for parameter cluster editing number?

On the one hand, we saw that our algorithm running in $O^*(3^k)$ for parameter distance to clique is probably close to optimal, since $O^*((2-\varepsilon)^k)$ is unlikely. 
On the other hand, for parameter vertex cover number, for instance, we have a larger room for improvement between the $2^{O(k \log k)}$-upper bound and the $2^{o(k)}$-lower bound under ETH.
Can we improve the algorithm to time $2^{O(k)}$, or, on the contrary, show a stronger lower bound of $2^{o(k \log k)}$ (potentially with the framework developed by Lokshtanov \emph{et al.}~\cite{Lokshtanov11})?

A possible future work would be to see if the FPT algorithms presented in the article can be extended to the more general \textsc{List Graph Motif}, where a vertex can choose its color among a private list of colors, without damaging too much their running time.

Finally, one could consider more restricted versions (when, for instance, the number of colors, or the maximum multiplicity of the motif, or the maximum number of occurences of a color in the graph, is bounded). 
This line of work is sometimes called \emph{multi-parameter analysis}, where one seeks for FPT algorithms with respect to subset of parameters. 
Let us recall, as an example, that \graphm is in $\xp$ if the parameter is the treewidth of the graph plus the number of colors in the motif~\cite{fellows2011}.






\section*{Acknowledgments}
The work of the first author is supported by the European Research Council (ERC) grant "PARAMTIGHT: Parameterized complexity and the search for tight complexity results," reference 280152.

\bibliographystyle{abbrv}
\bibliography{structural}

\begin{thebibliography}{10}

\bibitem{Alm2003}
E.~Alm and A.~P. Arkin.
\newblock {Biological Networks}.
\newblock {\em Current Opinion in Structural Biology}, 13(2):193--202, 2003.

\bibitem{Ambalath2010}
A.~M. Ambalath, R.~Balasundaram, C.~Rao~H., V.~Koppula, N.~Misra, G.~Philip,
  and M.~S. Ramanujan.
\newblock {On the Kernelization Complexity of Colorful Motifs}.
\newblock In {\em Proc. of the 5th International Symposium on Parameterized and
  Exact Computation, {IPEC}}, volume 6478 of {\em LNCS}, pages 14--25.
  Springer, 2010.

\bibitem{Berend10}
D.~Berend and T.~Tassa.
\newblock Improved bounds on bell numbers and on moments of sums of random
  variables.
\newblock {\em Probability and Mathematical Statistics}, 30(2):185--205, 2010.

\bibitem{Betzler2011}
N.~Betzler, R.~van Bevern, M.~R. Fellows, C.~Komusiewicz, and R.~Niedermeier.
\newblock Parameterized algorithmics for finding connected motifs in biological
  networks.
\newblock {\em {IEEE/ACM} Transactions on Computational Biology and
  Bioinformatics}, 8(5):1296--1308, 2011.

\bibitem{Bjorklund2016}
A.~Bj{\"{o}}rklund, P.~Kaski, and L.~Kowalik.
\newblock Constrained multilinear detection and generalized graph motifs.
\newblock {\em Algorithmica}, 74(2):947--967, 2016.

\bibitem{Bjorklund2015}
A.~Bj{\"{o}}rklund, P.~Kaski, L.~Kowalik, and J.~Lauri.
\newblock Engineering motif search for large graphs.
\newblock In U.~Brandes and D.~Eppstein, editors, {\em Proc. of the 17th
  Workshop on Algorithm Engineering and Experiments, {ALENEX}}, pages 104--118.
  {SIAM}, 2015.

\bibitem{Bocker2012}
S.~B{\"{o}}cker.
\newblock A golden ratio parameterized algorithm for cluster editing.
\newblock {\em Journal of Discrete Algorithms}, 16:79--89, 2012.

\bibitem{Bocker2009}
S.~B{\"o}cker, F.~Rasche, and T.~Steijger.
\newblock {Annotating Fragmentation Patterns}.
\newblock In {\em Proc. of the 9th International Workshop on Algorithms in
  Bioinformatics, WABI}, volume 5724 of {\em LNCS}, pages 13--24. Springer,
  2009.

\bibitem{Bodlaender2014}
H.~L. Bodlaender, B.~M.~P. Jansen, and S.~Kratsch.
\newblock Kernelization lower bounds by cross-composition.
\newblock {\em {SIAM} Journal on Discrete Mathematics}, 28(1):277--305, 2014.

\bibitem{DBLP:conf/iwpec/BonnetS15}
E.~Bonnet and F.~Sikora.
\newblock The graph motif problem parameterized by the structure of the input
  graph.
\newblock In {\em Proc. of the 10th International Symposium on Parameterized
  and Exact Computation ({IPEC})}, volume~43 of {\em LIPIcs}, pages 319--330.
  Schloss Dagstuhl - Leibniz-Zentrum fuer Informatik, 2015.

\bibitem{Bruckner2009}
S.~Bruckner, F.~H{\"u}ffner, R.~M. Karp, R.~Shamir, and R.~Sharan.
\newblock {Topology-Free Querying of Protein Interaction Networks}.
\newblock {\em Journal of Computational Biology}, 17(3):237--252, 2010.

\bibitem{Burcsi12}
P.~Burcsi, F.~Cicalese, G.~Fici, and Z.~Lipt{\'{a}}k.
\newblock Algorithms for jumbled pattern matching in strings.
\newblock {\em International Journal of Foundations of Computer Science},
  23(2):357--374, 2012.

\bibitem{Cesati03}
M.~Cesati.
\newblock The {T}uring way to parameterized complexity.
\newblock {\em Journal of Computer and System Sciences}, 67(4):654--685, 2003.

\bibitem{Chen05}
J.~Chen, B.~Chor, M.~Fellows, X.~Huang, D.~W. Juedes, I.~A. Kanj, and G.~Xia.
\newblock Tight lower bounds for certain parameterized {NP}-hard problems.
\newblock {\em Information and Computation}, 201(2):216--231, 2005.

\bibitem{Chen10}
J.~Chen, I.~A. Kanj, and G.~Xia.
\newblock Improved upper bounds for vertex cover.
\newblock {\em Theoretical Computer Science}, 411(40–42):3736 -- 3756, 2010.

\bibitem{CyganD12}
M.~Cygan, H.~Dell, D.~Lokshtanov, D.~Marx, J.~Nederlof, Y.~Okamoto, R.~Paturi,
  S.~Saurabh, and M.~Wahlstr{\"{o}}m.
\newblock On problems as hard as {CNF-SAT}.
\newblock In {\em Proc. of the 27th Conference on Computational Complexity,
  {CCC}}, pages 74--84, 2012.

\bibitem{Cygan15}
M.~Cygan, F.~V. Fomin, L.~Kowalik, D.~Lokshtanov, D.~Marx, M.~Pilipczuk,
  M.~Pilipczuk, and S.~Saurabh.
\newblock {\em Parameterized Algorithms}.
\newblock Springer, 2015.

\bibitem{Cygan13}
M.~Cygan, M.~Pilipczuk, and M.~Pilipczuk.
\newblock Known algorithms for {EDGE} {CLIQUE} {COVER} are probably optimal.
\newblock In {\em Proc. of the 24th Symposium on Discrete Algorithms, {SODA}},
  pages 1044--1053. {SIAM}, 2013.

\bibitem{Cygan12}
M.~Cygan, M.~Pilipczuk, M.~Pilipczuk, and J.~O. Wojtaszczyk.
\newblock Kernelization hardness of connectivity problems in d-degenerate
  graphs.
\newblock {\em Discrete Applied Mathematics}, 160(15):2131--2141, 2012.

\bibitem{Dondi2011a}
R.~Dondi, G.~Fertin, and S.~Vialette.
\newblock Complexity issues in vertex-colored graph pattern matching.
\newblock {\em Journal of Discrete Algorithms}, 9(1):82--99, 2011.

\bibitem{Downey1999}
R.~G. Downey and M.~R. Fellows.
\newblock {\em {Fundamentals of Parameterized Complexity}}.
\newblock Springer, 2013.

\bibitem{fellows2011}
M.~R. Fellows, G.~Fertin, D.~Hermelin, and S.~Vialette.
\newblock Upper and lower bounds for finding connected motifs in vertex-colored
  graphs.
\newblock {\em Journal of Computer and System Sciences}, 77(4):799--811, 2011.

\bibitem{FellowsLMMRS09}
M.~R. Fellows, D.~Lokshtanov, N.~Misra, M.~Mnich, F.~A. Rosamond, and
  S.~Saurabh.
\newblock The complexity ecology of parameters: An illustration using bounded
  max leaf number.
\newblock {\em Theory of Computing Systems}, 45(4):822--848, 2009.

\bibitem{Fomin04}
F.~V. Fomin, D.~Kratsch, and G.~J. Woeginger.
\newblock Exact (exponential) algorithms for the dominating set problem.
\newblock In {\em Proc. of the 30th International Workshop on Graph-Theoretic
  Concepts in Computer Science, {WG}}, pages 245--256, 2004.

\bibitem{Ganian11}
R.~Ganian.
\newblock Twin-cover: Beyond vertex cover in parameterized algorithmics.
\newblock In {\em Proc. of the 6th International Symposium on Parameterized and
  Exact Computation, {IPEC}}, volume 7112 of {\em LNCS}, pages 259--271.
  Springer, 2011.

\bibitem{Ganian2012}
R.~Ganian.
\newblock Using neighborhood diversity to solve hard problems.
\newblock {\em CoRR}, abs/1201.3091, 2012.

\bibitem{Gramm08}
J.~Gramm, J.~Guo, F.~H{\"{u}}ffner, and R.~Niedermeier.
\newblock Data reduction and exact algorithms for clique cover.
\newblock {\em {ACM} Journal of Experimental Algorithmics}, 13, 2008.

\bibitem{guillemotfinding}
S.~Guillemot and F.~Sikora.
\newblock Finding and counting vertex-colored subtrees.
\newblock {\em Algorithmica}, 65(4):828--844, 2013.

\bibitem{Impagliazzo01}
R.~Impagliazzo and R.~Paturi.
\newblock On the complexity of k-{SAT}.
\newblock {\em Journal of Computer and System Sciences}, 62(2):367--375, 2001.

\bibitem{ImpagliazzoETH}
R.~Impagliazzo, R.~Paturi, and F.~Zane.
\newblock Which problems have strongly exponential complexity?
\newblock {\em Journal of Computer and System Sciences}, 63(4):512--530, 2001.

\bibitem{Kleitman91}
D.~J. Kleitman and D.~B. West.
\newblock Spanning trees with many leaves.
\newblock {\em {SIAM} Journal on Discrete Mathematics}, 4(1):99--106, 1991.

\bibitem{Niedermeier2012}
C.~Komusiewicz and R.~Niedermeier.
\newblock New races in parameterized algorithmics.
\newblock In {\em Proc. of the 37th Conference on Mathematical Foundations of
  Computer Science, {MFCS}}, volume 7464 of {\em LNCS}, pages 19--30. Springer,
  2012.

\bibitem{Koutis2012}
I.~Koutis.
\newblock Constrained multilinear detection for faster functional motif
  discovery.
\newblock {\em Information Processing Letters}, 112(22):889--892, 2012.

\bibitem{Lacroix2006}
V.~Lacroix, C.~G. Fernandes, and M.-F. Sagot.
\newblock {Motif search in graphs: application to metabolic networks}.
\newblock {\em {IEEE/ACM} Transactions on Computational Biology and
  Bioinformatics}, 3(4):360--368, 2006.

\bibitem{Lampis2012}
M.~Lampis.
\newblock Algorithmic meta-theorems for restrictions of treewidth.
\newblock {\em Algorithmica}, 64(1):19--37, 2012.

\bibitem{Lokshtanov11}
D.~Lokshtanov, D.~Marx, and S.~Saurabh.
\newblock Slightly superexponential parameterized problems.
\newblock In {\em Proc. of the 22nd Symposium on Discrete Algorithms, {SODA}},
  pages 760--776, 2011.

\bibitem{Lund94}
C.~Lund and M.~Yannakakis.
\newblock On the hardness of approximating minimization problems.
\newblock {\em Journal of the {ACM}}, 41(5):960--981, 1994.

\bibitem{Marx10}
D.~Marx.
\newblock Can you beat treewidth?
\newblock {\em Theory of Computing}, 6(1):85--112, 2010.

\bibitem{Niedermeier2006}
R.~Niedermeier.
\newblock {\em {Invitation to Fixed Parameter Algorithms}}.
\newblock Lecture Series in Mathematics and Its Applications. Oxford University
  Press, 2006.

\bibitem{PinterSZ14}
R.~Y. Pinter, H.~Shachnai, and M.~Zehavi.
\newblock Deterministic parameterized algorithms for the graph motif problem.
\newblock In {\em Proc. of the 39th Conference on Mathematical Foundations of
  Computer Science, {MFCS}}, volume 8635 of {\em LNCS}, pages 589--600.
  Springer, 2014.

\bibitem{Pinter14}
R.~Y. Pinter and M.~Zehavi.
\newblock Algorithms for topology-free and alignment network queries.
\newblock {\em Journal of Discrete Algorithms}, 27:29--53, 2014.

\end{thebibliography}


\end{document}